\title{Landauer's principle for trajectories of repeated interaction systems}
\author{Eric P. Hanson}
\affil{%
Dept. Applied Math and Theoretical Physics\\
Centre for Mathematical Sciences\\
University of Cambridge\\
Cambridge, CB3 0WA, United Kingdom
}
\author{Alain Joye}
\affil{%
Univ. Grenoble Alpes, CNRS, Institut Fourier\\
F-38\,000 Grenoble, France}
\author{Yan Pautrat}
\affil{%
Laboratoire de Math\'ematiques d'Orsay\\
Univ. Paris-Sud, CNRS, Universit\'e
Paris-Saclay\\91405 Orsay, France
}
\author{Renaud Raqu{\'e}pas}
\affil{%
Dept. of Mathematics and Statistics, McGill University\\
1005--805 rue Sherbrooke O., Montr\'eal (Qu\'ebec) ~H3A 0B9}
   \def\MR#1{}
\pgfplotsset{lineplot/.style = {no markers,
            grid = major,
            grid style = {opacity=1},
            ylabel near ticks,
            axis line style={opacity=0},
            tick pos = left,
            tick align = outside}}
\pgfplotsset{halfwidth/.style = {width = 75mm,height = 75mm,}}
\setlist[description]{leftmargin=*}
\newtheorem{lemma}{Lemma}[section]
\newtheorem{theorem}[lemma]{Theorem}
\newtheorem{proposition}[lemma]{Proposition}
\newtheorem{corollary}[lemma]{Corollary}
\theoremstyle{definition}
\newtheorem{definition}[lemma]{Definition}
\newtheorem{remark}[lemma]{Remark}
\newtheorem{remarks}[lemma]{Remarks}
\newtheorem{example}[lemma]{Example}
\newcommand{\labitem}[2]{%
	\def\@itemlabel{\textbf{#1}}
	\item
	\def\@currentlabel{#1}\label{#2}}
\def\block(#1,#2)#3{\multicolumn{#2}{c}{\multirow{#1}{*}{$ #3 $}}}
\newcommand{\numset}[1]{\mathbb{#1}}
	\newcommand{\C}{\numset{C}}
	\newcommand{\R}{\numset{R}}
	\newcommand{\Z}{\numset{Z}}
	\newcommand{\N}{\numset{N}}
	\def\cc{\numset{C}}
	\def\rr{\numset{R}}
	\def\nn{\numset{N}}
\newcommand{\one}{\mathrm{Id}}
\def\id{\mathrm{Id}}
\newcommand{\ind}{\mathbbm{1}}
\def\i{{\rm i}}
\def\e{{\rm e}}
\newcommand{\Exp}[1]{\mathrm{e}^{#1}}
\renewcommand{\d}{\operatorname{d}\!}
\newcommand{\inv}{^{-1}}
	\DeclareMathOperator{\Mat}{Mat}
	\DeclareMathOperator{\spr}{spr}
	\renewcommand{\sp}{\operatorname{sp}}
	\newcommand{\tr}{\operatorname{Tr}}
	\DeclareMathOperator{\ran}{Ran}
\newcommand{\spacefont}[1]{\mathcal{#1}}
	\renewcommand{\H}{\spacefont{H}}
	\newcommand{\B}{\spacefont{B}}
	\newcommand{\I}{\spacefont{I}}
	\newcommand{\D}{\spacefont{D}}
\newcommand{\sys}{\mathcal{S}}
\newcommand{\env}{\mathcal{E}}
\newcommand{\T}{\spacefont{T}}
\newcommand{\ealpha}{^{(\alpha)}}
\renewcommand{\L}{\mathcal{L}}
	\newcommand{\proj}{p}
	\newcommand{\Kraus}{\mathfrak K}
	\def\inte{\operatorname{int}}
	\def\clos{\operatorname{cl}}
\newcommand{\invar}{^\textnormal{inv}}
\newcommand{\init}{^\textnormal{i}}
\newcommand{\fin}{^\textnormal{f}}
\DeclareDocumentCommand \envstate { o } {%
  \IfNoValueTF {#1} {%
    \xi%
  }{%
    \xi_{#1}%
  }%
}
\DeclareDocumentCommand \sysstate { o } {%
  \IfNoValueTF {#1} {%
    \rho%
  }{%
    \rho_{#1}%
  }%
}
\newcommand{\ham}{h}
	\newcommand{\hsys}{\ham_{\sys}}
	\DeclareDocumentCommand \henv { o } {%
	  \IfNoValueTF {#1} {%
	    \ham_{\mathcal{E}}%
	  }{%
	    \ham_{\mathcal{E}_{#1}}%
	  }%
	}
\newcommand{\E}{\mathbb{E}}
\newcommand{\ee}{\mathbb{E}}
\newcommand{\sP}{\mathcal{P}}
\newcommand{\bP}{\mathbb{P}}
\newcommand{\pp}{\mathbb{P}}
\newcommand{\cK}{\mathcal{K}}
\newcommand{\cW}{\mathcal{W}}
\newcommand{\Ai}{{A\init}}
\newcommand{\Af}{{A\fin}}
\newcommand{\ai}{{a\init}}
\newcommand{\af}{{a\fin}}
\newcommand{\rvY}{\Delta y^{\text{tot}}_{T}}
\newcommand{\rvZ}{{-\Delta a_T + \Delta y^{\text{tot}}_{T}}}
\newcommand{\rvW}{\Delta a_T}
\newcommand{\oldsigmaT}{\sigma^\textnormal{tot}_{T}}
\newcommand{\rhoadiab}{\rho_\textnormal{adiab}}
\newcommand{\invalpha}{\mathrm{I}\ealpha}
\begin{document}
\maketitle

\begin{abstract}
	We analyze Landauer's principle for \textit{repeated interaction systems} consisting of a reference quantum system $\mathcal{S}$ in contact with an environment $\mathcal{E}$ which is a chain of independent quantum probes. The system $\mathcal{S}$ interacts with each probe sequentially, for a given duration, and the Landauer principle relates the energy variation of $\mathcal{E}$ and the decrease of entropy of $\mathcal{S}$ by the entropy production of the dynamical process. We consider refinements of the Landauer bound at the level of the \textit{full statistics} (FS) associated to a two-time measurement protocol of, essentially, the energy of $\mathcal{E}$.  The emphasis is put on the \textit{adiabatic regime} where the environment, consisting of $T \gg 1$ probes, displays variations of order $T^{-1}$ between the successive probes, and the measurements take place initially and after $T$ interactions. We prove a large deviation principle and a central limit theorem as $T \to \infty$ for the classical random variable describing the entropy production of the process, with respect to the FS measure. In a special case, related to a detailed balance condition,
	we obtain an explicit  limiting distribution of this random variable without rescaling. At the technical level, we obtain a non-unitary adiabatic theorem generalizing that of \cite{HJPR1} and analyze the spectrum of complex deformations of families of irreducible completely positive trace-preserving maps.
\end{abstract}

\section{Introduction} \label{sec:Introduction}

The present paper studies a refinement of Landauer's principle in terms of a two-time measurement protocol (better known as ``full counting statistics'') for repeated interaction systems, in an adiabatic regime. We describe shortly the various elements we study.

Landauer's principle is a universal principle commonly formulated as a lower bound for the energetic cost of erasing a bit of information in a fixed system $\sys$ by interaction with an environment $\env$ initially at thermal equilibrium. It was first stated by Landauer in \cite{La}. A recent, mathematically sound derivation (in \cite{RW13}, later extended to the case of infinitely extended systems in  \cite{JP14}) is based on the entropy balance equation, given by $\Delta S_{\sys}+ \sigma = \beta \Delta Q_{\env}$ where $\Delta S_{\sys}$ is the average decrease in entropy of $\sys$ during the process, $\Delta Q_{\env}$ the average increase in energy of $\env$, and $\beta$ is the inverse temperature of the environment\footnote{we will always set the Boltzmann constant to $1$, so that $\beta=1/\Theta$, $\Theta$ the temperature.}. The term $\sigma$ is called the entropy production of the process.  As it can be written as a relative entropy, the entropy production is non-negative  which yields the inequality $\Delta S_{\sys} \leq  \beta \Delta Q_{\env}$. One of the questions of interest regarding Landauer's principle concerns the saturation of that identity, i.e. the vanishing of $\sigma$. It is a general physical principle that when the system--environment coupling is described by a time-dependent Hamiltonian, the entropy production $\sigma$ vanishes in the adiabatic limit, that is, when the coupling between $\sys$ and $\env$ is a slowly varying time-dependent function. More precisely, if the typical time scale of the coupling is $T$, one considers the regime $T\to\infty$.

A repeated interaction system (or RIS) is a system where the environment consists of a sequence of ``probes'' $\env_k$, $k=1,\ldots,T$, initially in a thermal state at inverse temperature $\beta_k$, and $\sys$ interacts with $\env_k$ (and only $\env_k$) during the time interval $\big[ k\tau, (k+1)\tau\big)$. In such a situation, the entropy balance equation becomes $\sum_{k=1}^T \Delta S_{\sys} + \sum_{k=1}^T \sigma_k = \sum_{k=1}^T \beta_k \Delta Q_{\env,k}$, where each term with index $k$ corresponds to the interaction between $\sys$ and $\env_k$. We describe the repeated interaction system as an ``adiabatic RIS'' when the various parameters of the probes are sampled from sufficiently smooth functions on $[0,1]$ as the values at times $k/T$, $k=1,\ldots,T$. This is the setup that was studied in \cite{HJPR1}; there we showed that the total entropy production $\lim_{T\to\infty} \sum_{k=1}^T \sigma_k$ was finite only under the condition $X(s)=0$ for all $s\in[0,1]$, where $X(s)$ is a quantity depending on the probe parameters at time $s\in[0,1]$ which we discuss below, see~\eqref{eq:def-X}. The proof of this result relied mostly on a new discrete, non-unitary adiabatic theorem that allowed us to control a product of $T$ slowly varying completely positive, trace-preserving (CPTP) maps that represent the reduced dynamics acting on $\sys$.

A refinement of the above formulation of Landauer's principle is however possible using the so-called full counting statistics. Full  counting statistics were first introduced in the study of charge transport, and have met with success in the study of fluctuation relations and work in quantum mechanics (see Kurchan \cite{Ku00} and Tasaki \cite{Ta00}). An example of their use in improving Landauer's principle was given in \cite{BFJP,GCGPVP}. In the present situation, the formulation of Landauer's principle in terms of full counting statistics can be stated by defining random variables $\Delta s_{\sys}$ and $\Delta q_{\env_k}$ which are outcomes of simple physical experiments, which we now describe. In such an experiment, one initially measures the quantity $-\log \rho_\sys$ ($\rho_\sys$ is the state of the small system) and the energies $h_{\env_k}$ for each $k$ ($h_{\env_k}$ is the free Hamiltonian of $\env_k$), then lets the system interact with the chain of probes, then measures again the same quantities.
With the right sign conventions, the changes in these quantities are random variables which we denote $\Delta s_{\sys}$ and $\Delta q_{\env_k}$. Our refinement discusses the connections between the probability distributions of $\Delta s_{\sys}$ and $\sum_k \beta_k \Delta q_{\env_k}$. One can show that the expectations of these distributions are $\Delta S_{\sys}$ and $\sum_k \beta_k \Delta Q_{\env_k}$ respectively; there is, therefore, more information in these distributions than in the previously considered scalar quantities.

We consider an adiabatic repeated interaction system and study the limiting distributions of the above random variables as $T\to\infty$. Again, we show that in the case $X(s)\equiv 0$ we have the expected refinement of Landauer's principle, which is essentially that when $T\to\infty$, one has $\Delta s_{\sys}=\sum_k \beta_k \Delta q_{\env_k}$ almost-surely.
In the case $X(s)\not\equiv 0$, we show that $\sum_k \beta_k \Delta q_{\env_k}$ satisfies a law of large numbers, a central limit theorem, and a large deviation principle, all of these for the time scale $T$, and with explicit parameters. In particular, $\sum_k \beta_k \Delta q_{\env_k}$ is of order $T$, whereas $\Delta s_\sys$ is a bounded quantity. All results in the case $X(s)\not\equiv 0$ can actually be extended to the case where the probe observables measured at each step $k$ are not simply $\beta_k h_{\env_k}$ but a more general observable, or when the system observables are not $-\log\rho_\sys$.

We show in addition that the random variable $\varsigma_T= \sum_k \beta_k \Delta q_{\env_k}- \Delta s_{\sys}$ can be expressed as a relative information random variable between the probability measure describing the experiment outcomes, and the probability measure corresponding to a backwards experiment. Since we obtain a full large deviation principle for this random variable as $T\to \infty$, this connects these results with the appearance of the arrow of time (see \cite{ABL,BJPP}). We discuss in particular the appearance of symmetries in the moment generating functions, and their implications in terms of Gallavotti--Cohen type symmetries.

To study the limiting distributions, we relate their moment generating functions to products of deformations of the completely positive, trace-preserving maps representing the reduced dynamics. We study the peripheral spectrum and associated spectral projector of these deformed dynamics. However, because little can be said about the spectral data of those deformed maps, studying the asymptotics of these quantities requires an improvement of the adiabatic theorem of \cite{HJPR1}. These technical results, concerning the spectral study of deformations of CPTP maps, and the improved discrete non-unitary adiabatic theorem, are of independent interest, and we describe them in wider generality than required for our present endeavor.

This approach gives an improvement over~\cite{HJPR1} in various aspects. First of all, Theorem~\ref{thm:X-is-0-conv-MGF} (in the case $X(s)\equiv 0$) and Theorem~\ref{theo_ldp} together with Corollary~\ref{coro_LLN} and Theorem~\ref{theo_clt} (in the general case) characterize the limiting distributions of relevant random variables, whereas in~\cite{HJPR1} we only derived information about the behaviour of their expectations. We recover our former results (and more) about these expectations, as Theorem~\ref{thm:X-is-0-conv-MGF} implies in particular the convergence of $\lim_{T\to\infty} \sum_{k=1}^T \sigma_k$ to an explicit quantity when $X(s)\equiv0$, and Theorem \ref{theo_clt} gives the divergence of the same quantity under generic assumptions when $X(s)$ does not vanish identically. In addition, Corollary \ref{cor:adiab_for_alpha=0} gives an expression for the adiabatic evolution of any initial state. {Most of all, our adiabatic theorem can be applied to a wider range of situations, as illustrated here by its application to deformed dynamics.}
\smallskip

The structure of the present paper is as follows: in Section~\ref{sec_framework} we describe our general framework and notation, and more precisely we describe repeated interaction systems, Landauer's principle (for unitary evolutions), and full counting statistics. We describe our full counting statistics for probe observables $Y_k$ more general than just $\beta_k h_{\env_k}$, leading to random variables $\rvY=\sum_k \Delta y_k$, and we generalize  $\Delta s_{\sys,T}$ (emphasising the $T$ dependence in the notation), to random variables $\Delta a_T$ as well.
In Section \ref{sec_propertiesfullstats} we discuss the various properties of the full statistics random variables: we give an entropy balance equation ``at the level of trajectories'', i.e.\ almost-sure identities between the different random variables, relate the moment generating functions of e.g.~$\rvY$ to deformations of reduced dynamics, and give a general adiabatic result for products of these deformations.
In Section \ref{sec:XisZero} we describe the limiting distribution of the pair $(\rvY,\Delta s_{\sys,T})$ as $T\to\infty$ in the case $Y=\beta h_\env$ when $X(s)\equiv 0$.
In Section~\ref{sec:Xnonzero} we derive a large deviation principle for $\rvY$ in the general case, which in turn implies a law of large numbers and a central limit theorem. Our technical results regarding the peripheral spectrum and associated spectral projectors of deformations of completely positive, trace-preserving maps are given in Appendix \ref{sec_peripheralspectrum}. Our improved discrete, non-unitary adiabatic theorem is given in Appendix~\ref{app:DNUAT}. Various proofs are collected in Appendix \ref{sec:proofs_for_FS}.

\paragraph{Acknowledgements} The research of {E.H.} was partly supported by ANR contract ANR-14-CE25-0003-0 and the Cantab Capital Insitute for the Mathematics of Information (CCIMI) and he would like to thank Cambyse Rouz\'{e} for informative discussions. The research of {Y.P.} was supported by ANR contract ANR-14-CE25-0003-0. The research of {R.R.} was partly supported by NSERC, FRQNT and ANR project RMTQIT ANR-12-IS01-0001-01. {R.R.} would like to thank the Institut Fourier, where part of this research was carried, for its support and hospitality. {E.H.} and {R.R.} would like to also thank  the organizers of the \textit{Stochastic Methods in Quantum Mechanics} summer school (Autrans, July 2017) for the informative and hospitable event. All four authors would like to thank Vojkan Jak\v{s}i\'{c} for stimulating discussions regarding this project.

\section{General framework} \label{sec_framework}
In this section we will introduce our general framework. We will use the following notational conventions: for $\mathcal X$ a Banach space, we denote by $\B(\mathcal X)$ the space of bounded linear operators on $\mathcal X$ and by $\one$ the identity on $\mathcal X$. For $\H$ a Hilbert space, we denote by $\I_1(\H)$ the space of trace-class, linear operators on $\H$, and by $\D(\H)$ the set of of density matrices on~$\H$, i.e. elements of $\I_1(\H)$ which are non-negative  operators with trace one. We will freely use the word ``state'' for an element $\rho\in\D(\H)$, therefore identifying the density matrix and the linear map $\B(\mathcal H)\ni A\mapsto \rho(A)$. We say that a state is faithful if the density matrix $\rho$ is positive-definite.
Scalar products will generally be denoted by $\braket{\phi,\psi}$ and are respectively linear and antilinear in the right and left variable. We denote by $\ket{\psi}\!\bra{\phi}$ the map on the Hilbert space defined by $\kappa\mapsto \braket{\phi,\kappa}\psi$.

\subsection{Repeated interaction systems} \label{subsec_ris}

A quantum repeated interaction system (RIS) consists of a system~$\sys$ interacting sequentially with a chain $\env_1, \env_2, \dotsc$ of probes (or environments). This physical model can describe for example an electromagnetic cavity which undergoes repeated indirect measurement by probes; its physical archetype is the one-atom maser (see \cite{MWM}). For more detail we refer the reader to the review \cite{BJM08}.

We will describe the quantum system~$\sys$ by a finite-dimensional Hilbert space~$\H_\sys$, a (time-independent) Hamiltonian~$h_\sys=h_\sys^*\in \B(\H_\sys)$, and an initial state~$\sysstate\init \in \D(\H_\sys)$. Likewise, the $k$th quantum probe~$\env_k$ will be described by a finite dimensional Hilbert space~$\H_{\env,k}$, Hamiltonian~$\henv[k]=\henv[k]^* \in \B(\H_{\env,k})$, and initial state $\envstate[k]\init \in \D(\H_{\env,k})$. We will assume the probe Hilbert spaces $\H_{\env,k}$ are all identical, $\H_{\env,k} \equiv \H_\env$, and that the initial state of each probe is a Gibbs state at inverse temperature $\beta_k>0$:
$$
\envstate[k]\init = \frac{\Exp{-\beta_k \henv[k]}}{\tr(\Exp{-\beta_k \henv[k]})}.
$$
We will at times use $Z_{\beta,k}$ to denote the trace $\tr(\Exp{-\beta_k \henv[k]})$.

The state of the system~$\sys$ evolves by interacting with each probe, one at a time, as follows. Assume that after interacting with the first $k-1$ probes the state of the system is~$\sysstate[k-1]$. Then the system and the $k$th probe, with joint initial state $\sysstate[k-1] \otimes \envstate[k]\init$, evolve for a time $\tau$ via the free Hamiltonian plus interaction $v_k$ according to the unitary operator
$$
U_k := \exp\big({-\i}\tau (\hsys \otimes \id + \id \otimes \henv[k] +  v_k)\big),
$$
yielding a joint final state $U_k(\sysstate[k-1] \otimes \envstate[k]\init)U_k^*$. The probe~$\env_k$ is traced out, resulting in the system state
$$
\sysstate[k] := \tr_\env\big(U_k(\sysstate[k-1] \otimes \envstate[k]\init)U_k^*\big),
$$
where $\tr_\env$ is the partial trace over $\H_\env$, mapping $\I_1(\H_\sys\otimes \H_\env)$ to $\I_1(\H_\sys)$, with $\tr_\env(X\otimes Y)=\tr(Y)\,X$. We define similarly $\tr_\sys$, the partial trace over $\H_\sys$ and, for later use, also introduce $\envstate[k]\fin := \tr_\sys\big(U_k(\sysstate[k-1] \otimes \envstate[k]\init)U_k^*\big)$. The evolution of the system~$\sys$ during the $k$th step is given by the \emph{reduced dynamics}
\begin{align}
\L_k : \I_1(\H_\sys) &\to \I_1(\H_\sys) \notag \\
\eta &\mapsto \tr_\env\big(U_k(\eta \otimes \envstate[k]\init)U_k^*\big), \label{eq_defLk}
\end{align}
that is $\sysstate[k] = \L_k(\sysstate[k-1])$; remark that $\L_k$ maps $\D(\H_\sys)$ to $\D(\H_\sys)$. By iterating this evolution, we find that the state of the system~$\sys$ after $k$ steps is given by the composition
\begin{equation}\label{eq:final-state}
\sysstate[k] = (\L_k \circ \dotsb \circ \L_1)(\sysstate\init).
\end{equation}
We will often omit the parentheses and composition symbols. For more details about the dynamics of RIS processes in various regimes, see \cite{BJM14, HJPR1}. We now turn to energetic and entropic considerations on RIS, at the root of Landauer's principle.

\subsection{Landauer's principle and the adiabatic limit} \label{subsec_landauer}

In what follows, for $\eta,\zeta$ in $\D(\H)$, $S(\eta)$ denotes the von Neumann entropy of the state~$\eta$ and $S(\eta|\zeta)$ denotes the relative entropy between the states~$\eta$ and~$\zeta$:
\begin{equation} \label{eq_defentropies}
	S(\eta)={-\tr(\eta \log \eta)}, \qquad S(\eta|\zeta)=\tr\big(\eta(\log\eta -\log\zeta)\big).
\end{equation}
We recall that $S(\eta)\geq 0$ and $S(\eta|\zeta)\geq0$.
For each step~$k$ of the RIS process, we define the quantities \begin{gather*}
\Delta S_k := S(\sysstate[k-1]) - S(\sysstate[k])
,\\
\Delta Q_k := \tr_\env(\henv[k]\envstate[k]\fin) - \tr_\env( \henv[k] \envstate[k]\init),
\end{gather*}
that represent the decrease in entropy of the small system, and the increase in energy of probe~$k$, respectively, and
\begin{equation} \label{eq:def_sigmak}
\sigma_k := S\big(U_k (\sysstate[k-1]\otimes \envstate[k]\init) U_k^* |  \sysstate[k]\otimes \envstate[k]\init\big),
\end{equation}
the \emph{entropy production} of step~$k$. For notational simplicity, we at times omit the ``i'' superscript in $\envstate[k]\init$. Also, we omit tensored identities for operators acting trivially on the environment or on the system, whenever the context is clear.

These quantities are related through the \emph{entropy balance equation}
\begin{equation} \label{eq:step-balance}
\Delta S_k + \sigma_k = \beta_k \Delta Q_k,
\end{equation}
(see {e.g.} \cite{RW13} for this computation). This equation, together with $\sigma_k\geq 0$, {i.e.} the nonnegativity of the entropy production term, encapsulates the more general \emph{Landauer principle}: when a system undergoes a state transformation by interacting with a thermal bath, the average increase in energy of the bath is bounded below by $\beta^{-1}$ times the average decrease in entropy of the system. This principle was first presented in 1961 by Landauer \cite{La} and its saturation in quantum systems has more recently been investigated by Reeb and Wolf \cite{RW13} and Jak\v{s}i\'{c} and Pillet \cite{JP14}, the latter providing a treatment of the case of infinitely extended quantum systems.
\smallskip

If we consider a RIS with $T$ steps, then summing \eqref{eq:step-balance} over $k = 1, \dotsc, T$ yields the \emph{total entropy balance equation}
\begin{equation} \label{eq:balance}
\Delta S_{\sys,T}+\oldsigmaT  =  \sum_{k=1}^T \beta_k \Delta Q_k,
\end{equation}
where $\Delta S_{\sys,T}=S(\sysstate\init) - S(\sysstate\fin)$ and $\sysstate\fin = \sysstate_T$ is the state of~$\sys$ after the final step of the RIS process (see~\eqref{eq:final-state}) and
\begin{equation}\label{eq:def_sigma_tot}
\oldsigmaT := \sum_{k=1}^T \sigma_{k},
\end{equation}
is the expected \emph{total entropy production}.

In \cite{HJPR1}, the present authors analyzed the Landauer principle and its saturation in the framework of an adiabatic limit of RIS that we briefly recall here. We introduce the \emph{adiabatic parameter} $T \in \N$ and consider a repeated interaction process with $T$ probes, such that the parameters governing the $k$th probe and its interaction with $\sys$, namely $(\henv[k], \beta_k, v_k)$, are chosen by sampling sufficiently smooth functions as described by the following assumption. Below, we say that a function $f$ is $C^2$ on $[0,1]$ if it is $C^2$ on $(0,1)$, and its first two derivatives admit limits at $0^+$, $1^-$.
\begin{description}
	\labitem{ADRIS}{ADRIS} We are given a family of RIS processes indexed by an adiabatic parameter~$T\in\nn$ such that there exist $C^2$ functions $s\mapsto \henv(s)$, $\beta(s)$, $v(s)$ on $[0,1]$ for which
	\begin{equation} \nonumber
	\henv[k] = \ham_\env(\tfrac {k}{T}), \qquad \beta_{k}=\beta(\tfrac {k}{T}), \qquad  v_{k} =v(\tfrac {k}{T})
	\end{equation}
	for all $k = 1, \dotsc, T$ when the adiabatic parameter has value~$T$.
\end{description}

In this case, we may define
\begin{equation}
\begin{aligned} \label{eq_versionscontinues}
U(s)&=\exp \big({-\i}\tau \big(\ham_\env(s) + \henv(s) +  v(s)\big)\big),\\
\L(s)&=\tr_\env\big( U(s)\big(\cdot\otimes \,\envstate(s)\big)U(s)^*\big),
\end{aligned}
\end{equation}
where~$\envstate(s)$ is the Gibbs state at inverse temperature~$\beta(s)$ for the Hamiltonian~$\henv(s)$ and $\tau$ is kept constant. Then, $[0,1]\ni s \mapsto \L(s)$ is a  $\B(\I_1(\H_\sys))$-valued $C^2$ function, and $\L_{k}=\L(\tfrac {k}{T})$ when the adiabatic parameter has value~$T$. Note that for each $s \in [0,1]$, the map $\L(s)$ is completely positive (CP) and trace preserving (TP). For some results, we will need to make some extra hypotheses on the family $(\L(s))_{s \in [0,1]}$. We introduce such conditions:
\begin{description}
	\labitem{Irr}{Irr} For each $s \in [0,1]$, the map~$\L(s)$ is \emph{irreducible}, meaning that it has (up to a multiplicative constant) a unique invariant, which is a faithful state.
	\labitem{Prim}{Prim}  For each $s \in [0,1]$, the map~$\L(s)$ is \emph{primitive}, meaning that it is irreducible and $1$ is its only eigenvalue of modulus one.
\end{description}
We recall in Appendix \ref{sec_peripheralspectrum} equivalent definitions and implications of these assumptions. We recall in particular that the peripheral spectrum of  an irreducible completely positive, trace-preserving map is a subgroup of the unit circle. We denote by $z(s)$ the order of that subgroup for $\L(s)$.

In \cite{HJPR1}, the present authors used a suitable adiabatic theorem to characterize the large~$T$ behaviour of the total entropy production term~\eqref{eq:def_sigma_tot}, which monitors the saturation of the Landauer bound in the adiabatic limit (note that the terms in the sum~\eqref{eq:def_sigma_tot} are $T$-dependent through \ref{ADRIS}). Briefly, under suitable assumptions, convergence of $\oldsigmaT$ is characterized by the fact that the term $X(s)$ defined in \eqref{eq:def-X} below vanishes identically.

\subsection{Full statistics of two-time measurement protocols} \label{subsec:FS_two_time_measurements}

We now describe a two-time measurement protocol for repeated interaction systems with~$T$ probes. The outcome of this protocol is random, and we will relate its expectation to the quantities involved in the balance equation~\eqref{eq:balance}. Note that a similar protocol was considered in~\cite{HorPar} (see also~\cite{BJPP,BCJP}).

For the purpose of defining the full statistics measure for an RIS, we will consider observables to be measured on both the system~$\sys$ and the probes~$\env_k$, $k\in\nn$.

First, we assume we are given two observables $\Ai$ and $\Af$ in $\B(\H_\sys)$ with spectral decomposition
$$
\Ai = \sum_{\ai} \ai \, \pi\init_{\ai}, \qquad
\Af = \sum_{\af} \af \, \pi\fin_{\af}
$$
where~$\ai$, $\af$ run over the distinct eigenvalues of~$\Ai$, $\Af$ respectively, and~$\pi\init_{\ai}$, $\pi\fin_{\af}$ denote the corresponding spectral projectors. When we consider increasing the number of probes $T$, we assume the observable $\Ai$ is independent of $T$ (as we measure it on $\sys$ before the system interacts with any number of probes), but allow $\Af$ to depend on $T$, as long as the family $(\Af)_{T=1}^\infty$ is uniformly bounded in $T$.

On the chain, we consider probe observables $Y_k \in \B(\H_\env)$ to be measured on the probe $\env_k$. We require that each observable commutes with the corresponding probe Hamiltonian: $$[Y_k, \henv[k]] = 0.$$ We write the spectral decomposition of each $Y_k$ as
\[Y_k=\sum_{i_k} y_{i_k} \Pi^{(k)}_{i_k}.\]
If the $k$th probe is initially in the state $\xi$, a measurement of~$Y_k$ before the time evolution will  yield~$y_{i_k}$ with probability $\tr(\xi \Pi_{i_k}^{(k)})$.

When assuming \ref{ADRIS} and discussing measured observables $Y$, we will always assume
\begin{description}
	\labitem{Comm}{Comm}
	There is a twice continuously differentiable $\B(\H_\env)$-valued function $s \mapsto Y(s)$ on $[0,1]$ such that~$[Y(s), \henv(s)] = 0$ at all~$s \in [0,1]$ for which, when the adiabatic parameter has value~$T$,
	\[ Y_k = Y(\tfrac{k}{T}), \qquad k = 1, \dotsc, T. \]
\end{description}
The family of probe Hamiltonians themselves~$Y(s) = \henv(s)$ are suitable, but in our applications to Landauer's principle, we will be particularly interested in~$Y(s) = \beta(s) \henv(s)$.
\smallskip

Associated to the observables~$\Ai$,~$\Af$ and~$(Y_k)_{k=1}^T$ and the state~$\sysstate\init$, we define two processes: the \emph{forward process}, and the \emph{backward process}.

\paragraph*{The forward process} The system $\sys$ starts in some initial state~$\sysstate\init \in \D(\H_\sys)$ and the probe~$\env_k$ starts in the initial Gibbs state $\envstate[k]\in\D(\H_{\env_{k}})$; we write the state of the chain of~$T$ probes $\Xi = \bigotimes_{k=1}^T \envstate[k]$. We measure~$\Ai$ on~$\sys$ and measure~$Y_k$ on~$\env_k$ for each $k = 1, \dotsc, T$. We obtain results~$\ai$ and~$\vec{\imath} = (i_k)_{k=1}^T$ with probability
$$
\tr\big((\sysstate\init \otimes \Xi)(\pi\init_{\ai} \otimes \Pi_{\vec{\imath}})\big),
$$
where $\Pi_{\vec{\imath}} := \bigotimes_{k=1}^T \Pi_{i_k}^{(k)}$. Then the system interacts with each probe, one at a time, starting at $k=1$ until $k=T$, via the time evolution
\[
U_{k} := \exp\big(-\i\tau (\hsys + \henv[k] + v_{k})\big).
\]
Next, we measure~$\Af$ on the system and measure~$Y_k$ on~$\env_k$ for each $k = 1, \dotsc, T$, yielding outcomes~$\af$ and~$\vec \jmath =(j_k)_{k=1}^T$. Using the rules of measurement in quantum mechanics and conditional probabilities, the quantum mechanical probability of measuring the sequence~$(\ai,\af,\vec{\imath},\vec{\jmath})$ of outcomes is given by
$$
\tr\big( U_{T}\dotsm U_1   ( \pi\init_{\ai} \otimes \Pi_{\vec \imath}) (\rho\init \otimes \Xi)( \pi\init_{\ai} \otimes \Pi_{\vec \imath}) U_1^* \dotsm U_T^*  (\pi\fin_{\af}\otimes \Pi_{\vec \jmath})\big).
$$
We emphasize that the outcomes are labelled by $(\ai,\af,\vec{\imath},\vec{\jmath})$ which refers to the eigenprojectors of the operators involved, but not to the corresponding eigenvalues which only need to be distinct. Also, we may write the second measurement projector $\pi\fin_{\af}\otimes \Pi_{\vec \jmath}$ only once by cyclicity of the trace.

\paragraph{The backward process} The system starts in state
\[
\rho\fin_T := \tr_\env\big( U_{T}\dotsm U_1    (\sysstate\init \otimes \Xi) U_1^* \dotsm U_T^* \big),
\]
and the probe $\env_k$ starts in the state $\xi_k$. We measure observable $\Af$ on~$\sys$ and $Y_k$ on~$\env_k$ for each $k = 1, \dotsc, T$, yielding outcomes $\af$ and $(j_k)_{k=1}^T$. Then the system interacts with each probe, one at a time, starting with $k=T$ until $k=1$, via the time evolution
\[
U_k^* = \exp\big(\i\tau(\hsys + \henv[k] + v_k)\big).
\]
Then we measure~$\Ai$ on~$\sys$ and $Y_k$ on~$\env_k$ for each $k = 1, \dotsc, T$, yielding outcomes $\ai$ and $(i_k)_{k=1}^T$. The probability of these outcomes is given by
$$
\tr \big( U_1^*\dotsm U_T^*(\pi\fin_{\af} \otimes \Pi_{\vec \jmath})(\rho\fin_T \otimes \Xi) (\pi\fin_{\af} \otimes \Pi_{\vec \jmath}) U_T \dotsm U_1 (\pi\init_{\ai} \otimes \Pi_{\vec \imath})\big).
$$

\paragraph{The full statistics associated to the two-step measurement process}

For notational simplicity, we assume that the cardinality of~$\sp Y(s)$
does not depend on $k$. We can therefore use the same index set~$\mathfrak I$ for all eigenvalue sets: $\sp Y_k=(y_{i_k})_{i_k\in \mathfrak I}$ for all $k=1,\ldots,T$.
We define the space
\[
\Omega_T := \sp \Ai \times \sp \Af \times \mathfrak I^T \times \mathfrak I^{T}
\]
and equip it with the maximal $\sigma$-algebra $\sP(\Omega_T)$. We will refer to elements $(\ai,\af,\vec{\imath},\vec{\jmath})$ of~$\Omega_T$ as \emph{trajectories}, and denote them by the letter~$\omega$.

\begin{definition}
	On $\Omega_T$, we call the law of the outcomes for the forward process,
	\begin{equation}\label{eq:def_P_F^T}
	\bP^F_{T}(\ai,\af,\vec \imath,\vec \jmath) := \tr\big( U_{T}\dotsm U_1   ( \pi\init_{\ai} \otimes \Pi_{\vec \imath}) (\rho\init \otimes \Xi)( \pi\init_{\ai} \otimes \Pi_{\vec \imath}) U_1^* \dotsm U_T^*  (\pi\fin_{\af}\otimes \Pi_{\vec \jmath})\big),
	\end{equation}
	the forward \emph{full statistics measure}. We denote by~$\E_T$ the expectation with respect to~$\bP^F_{T}$. We also consider the backward full statistics measure
	\begin{equation}\label{eq:def_P_B^T}
	\bP^B_{T}(\ai,\af,\vec \imath,\vec \jmath) := \tr \big( U_1^*\dotsm U_T^*(\pi\fin_{\af} \otimes \Pi_{\vec \jmath})(\rho\fin_T \otimes \Xi) (\pi\fin_{\af} \otimes \Pi_{\vec \jmath}) U_T \dotsm U_1 (\pi\init_{\ai} \otimes \Pi_{\vec \imath})\big)
	\end{equation}
	which is the law of the outcomes for the backward process. Let us emphasize here that $\pp^F_{T}$ and~$\pp^B_{T}$ depend on the spectral projectors $(\pi\init_{\ai})_{\ai}$ of $\Ai$, $(\pi\fin_{\af})_{\af}$ of $\Af$, and $(\Pi_{\vec \imath})$ of the $(Y_k)_k$, and not on the spectral values of these operators. In particular, the probabilities $\pp^F_{T}$ and $\pp^B_{T}$ associated with two families of observables $(Y(s))_{s\in [0,1]}$, $(Y'(s))_{s\in [0,1]}$ that have the same spectral projectors (as e.g.\ $Y(s) = \beta(s) h_\env(s)$ and $Y'(s) = h_\env(s)$) will be the same.

	To $(Y_k)_{k=1}^T$, $\Ai$, and $\Af$, we associate two generic classical random variables on $(\Omega_T, \sP(\Omega_T))$:
	\begin{gather}
		\rvW(\ai, \af, \vec \imath, \vec \jmath) := {\ai-\af}, \label{eq_defW} \\
		\rvY(\ai, \af, \vec \imath, \vec \jmath) := \sum_{k=1}^T (y^{(k)}_{j_k} - y^{(k)}_{i_k}) \label{eq_defY}.
	\end{gather}
\end{definition}
Note that the choice of defining $\rvW$ as $\ai-\af$, i.e.\ as the decrease of the quantity $a$, is consistent with the standard formulation of Landauer's principle as given in Section \ref{subsec_landauer}. Additionally, the assumption that $(\Af)_{T=1}^\infty$ has uniformly bounded norm yields that the random variable $\rvW$ has $L^\infty$ norm  uniformly bounded in $T$.

\begin{remark} \label{remark_noconsistency}
	When we work with an \ref{ADRIS}, the dependence in $T$ of the $U_k$ (remember that in this case $U_k$ is of the form $U(k/T)$) prevents the family $(\pp_T^F)_T$ from being consistent. The $\pp_T^F$ are therefore a priori not the restrictions of a probability $\pp^F$ on the space $\Omega_\infty$, as is the case in \cite{BJPP} where the environments $\env_k$ and the parameters $h_{\env_k}$, $\beta_k$ and $v_k$ do not depend on $k$.
\end{remark}

\section{Properties of the full statistics} \label{sec_propertiesfullstats}

In the present section we obtain a relation between classical random variables arising from the protocol defined in Subsection \ref{subsec:FS_two_time_measurements}, and the quantity \eqref{eq:balance}. We study the relevant properties of the distributions~$\pp^F_{T}$ and~$\pp^B_{T}$, their relative information random variable, and its moment generating function.

\subsection{Entropy production and entropy balance on the level of trajectories} \label{subsec:bal-eq}

We turn to obtaining an analogue of~\eqref{eq:balance} for random variables on the probability space~$(\Omega_T, \sP(\Omega_T), \bP^F_{T})$.
Remark first that $\bP^F_{T}(\ai,\af,\vec{\imath},\vec{\jmath})$ and $\bP^B_{T}(\ai,\af,\vec{\imath},\vec{\jmath})$ are of the form
$\bP^F_{T}(\ai,\af,\vec{\imath},\vec{\jmath})=\tr\big((\rho\init\otimes \Xi) \, S^* S\big)$ and $ \bP^B_{T}(\ai,\af,\vec{\imath},\vec{\jmath})=\tr\big((\rho_T\fin\otimes \Xi) \, S S^*\big)$.
Under the assumption that  $\rho\init$ and $\rho_T\fin$ are faithful we therefore have
\[\bP^F_{T}(\ai,\af,\vec{\imath},\vec{\jmath})=0 \mbox{ if and only if }\bP^B_{T}(\ai,\af,\vec{\imath},\vec{\jmath})=0.\]
Since the image of a faithful state by an irreducible CPTP map is faithful (see the discussion following Definition \ref{def_irreducibility} below),  $\rho\init$ and $\rho_T\fin$ will be faithful as soon as $\rho\init$ is faithful and assumption~\ref{Irr} holds.

This allows us to give the following definition:
\begin{definition} \label{def:class-rv}
	If $\rho\init$ and $\rho_T\fin$ are faithful, we define the classical random variable
	\begin{gather*}
	\varsigma_{T}(\ai,\af,\vec{\imath},\vec{\jmath}) :=\log \frac{\bP^F_{T}(\ai,\af,\vec{\imath},\vec{\jmath})}{\bP^B_{T}(\ai,\af,\vec{\imath},\vec{\jmath})},
	\end{gather*}
	on $(\Omega_T, \sP(\Omega_T),\pp^F_{T})$, which we call the entropy production of the repeated interaction system associated to the trajectory~$\omega = (\ai,\af,\vec{\imath},\vec{\jmath})$.
\end{definition}

Note that the random variable $\varsigma_{T}$ is the logarithm of the ratio of likelihoods, also known as the relative information random variable between $\pp^F_{T}$ and $\pp^B_{T}$ (see e.g. \cite{CoverThomas}). It is well-known that the distribution of such a random variable is related to the distinguishability of the two distributions (here $\pp^F_{T}$ and $\pp^B_{T}$): see e.g.\ \cite{BicDok}. Distinguishing between $\pp^F_{T}$ and $\pp^B_{T}$ amounts to testing the arrow of time; we refer the reader to \cite{JOPS,BJPP} for a further discussion of this idea.

We have the following result, essentially present in~\cite{HorPar}, whose proof is given in Appendix~\ref{sec:proofs_for_FS}.
\begin{lemma}\label{lem:balance_eq_traj}
	Assume $\sysstate\init$ and $\sysstate\fin_T$ are faithful. If
	\begin{enumerate}[label=\roman*.]
		\item $\pi\init_{\ai}\sysstate\init\pi\init_{\ai} = \frac{\tr(\sysstate\init\pi\init_{\ai})}{\dim \pi\init_{\ai}}\,\pi\init_{\ai}$ for each $\ai$,
		\item $\pi\fin_{\af}\sysstate\fin_T\pi\fin_{\af} = \frac{\tr(\sysstate\fin_T\pi\fin_{\af})}{\dim \pi\fin_{\af}}\,\pi\fin_{\af}$ for each $\af$,
		\item for each $k = 1, \dotsc, T$, the state  $\envstate[k]$ (or equivalently $h_{\env_k}$) is a function of $Y_k$,
	\end{enumerate}
	then
	\begin{equation}\label{eq:balance_eq_for_trajectories}
	\varsigma_{T}(\ai,\af,\vec{\imath},\vec{\jmath}) = \log \Big( \frac{\tr(\pi\init_{\ai} \sysstate\init)}{ \tr(\pi\fin_{\af} \sysstate\fin_T)} \,\frac{\dim \pi\fin_{\af}}{\dim \pi\init_{\ai}}\Big) + \sum_{k=1}^T \beta_k (E_{j_k}^{(k)} - E_{i_k}^{(k)}),
	\end{equation}
	where $E_{i_k}^{(k)} = \frac{\tr(\henv[k]\Pi_{i_k}^{(k)})}{\dim \Pi_{i_k}^{(k)}}$ are the energy levels of the $k$th probe.
\end{lemma}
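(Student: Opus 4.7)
The plan is to compute the ratio $\pp^F_T/\pp^B_T$ explicitly under hypotheses (i)--(iii), and then take the logarithm to read off \eqref{eq:balance_eq_for_trajectories}. To this end, I introduce the operator
\[
S := (\pi\fin_\af \otimes \Pi_{\vec\jmath})\, U_T\cdots U_1\, (\pi\init_\ai \otimes \Pi_{\vec\imath}).
\]
Using cyclicity of the trace together with $(\pi\init_\ai\otimes\Pi_{\vec\imath})^2 = \pi\init_\ai\otimes\Pi_{\vec\imath}$ and the analogous identity at the final time, the definitions \eqref{eq:def_P_F^T} and \eqref{eq:def_P_B^T} can be rewritten as
\[
\pp^F_T(\omega) = \tr\!\big(S(\rho\init\otimes\Xi)S^*\big), \qquad \pp^B_T(\omega) = \tr\!\big(S^*(\rho\fin_T\otimes\Xi)S\big).
\]

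Next, I would strip the density matrices and probe Gibbs states out of these traces. Hypothesis (iii) says that $\henv[k]$, and therefore $\xi_k$, is a scalar function of $Y_k$; concretely, $\xi_k \Pi^{(k)}_{i_k} = (e^{-\beta_k E^{(k)}_{i_k}}/Z_{\beta,k})\,\Pi^{(k)}_{i_k}$, with $E^{(k)}_{i_k}$ as in the statement of the lemma. Combined with (i), this yields
\[
(\pi\init_\ai\otimes\Pi_{\vec\imath})(\rho\init\otimes\Xi)(\pi\init_\ai\otimes\Pi_{\vec\imath}) = \frac{\tr(\rho\init\pi\init_\ai)}{\dim\pi\init_\ai}\prod_{k=1}^T\frac{e^{-\beta_k E^{(k)}_{i_k}}}{Z_{\beta,k}}\,(\pi\init_\ai\otimes\Pi_{\vec\imath}),
\]
with the analogue using (ii) holding at the final time with $\rho\fin_T$, $\pi\fin_\af$, and $(j_k)_k$. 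Plugging these into the rewritten expressions and using the projector-absorption identities $S(\pi\init_\ai\otimes\Pi_{\vec\imath}) = S = (\pi\fin_\af\otimes\Pi_{\vec\jmath})S$, I obtain
\[
\pp^F_T(\omega) = \frac{\tr(\rho\init\pi\init_\ai)}{\dim\pi\init_\ai}\prod_{k=1}^T\frac{e^{-\beta_k E^{(k)}_{i_k}}}{Z_{\beta,k}}\,\tr(SS^*),
\]
together with the symmetric formula for $\pp^B_T$ in terms of $\rho\fin_T$, $\pi\fin_\af$, $(j_k)_k$, and $\tr(S^*S)$.

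Finally, since $\tr(SS^*) = \tr(S^*S)$ and the partition factors $Z_{\beta,k}$ cancel upon taking the ratio, the logarithm of $\pp^F_T/\pp^B_T$ collapses to \eqref{eq:balance_eq_for_trajectories}. The faithfulness of $\rho\init$ and $\rho\fin_T$ enters only to guarantee that $\pp^F_T(\omega)$ and $\pp^B_T(\omega)$ vanish simultaneously, so that $\varsigma_T$ is well-defined on the support of $\pp^F_T$. The argument is essentially a bookkeeping computation, and the step requiring most care is the derivation, from (iii), that on the $y^{(k)}_{i_k}$-eigenspace the scalar through which $\xi_k$ acts is precisely $e^{-\beta_k E^{(k)}_{i_k}}/Z_{\beta,k}$ with the $E^{(k)}_{i_k}$ defined as in the statement; this follows because $\henv[k]$ being a function of $Y_k$ forces it to be a scalar on each $Y_k$-eigenspace, and that scalar is then identified by averaging $\henv[k]$ against the normalized projector.
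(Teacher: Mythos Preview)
Your proof is correct and follows essentially the same route as the paper's own argument: both reduce the ratio $\bP^F_T/\bP^B_T$ by using hypotheses (i)--(iii) to extract the scalar prefactors from the projected density matrices, and then observe that the remaining ``bare'' traces coincide by cyclicity. Your explicit introduction of $S=(\pi\fin_\af\otimes\Pi_{\vec\jmath})U_T\cdots U_1(\pi\init_\ai\otimes\Pi_{\vec\imath})$ makes the cancellation $\tr(SS^*)=\tr(S^*S)$ especially transparent; the paper alludes to this $S$ in the text preceding the definition of $\varsigma_T$ but carries the expanded expressions through its formal proof instead.
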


\begin{remark}
	The first two hypotheses are automatically satisfied if, for example, $\Ai$ and $\Af$ are non-degenerate (all their spectral projectors are rank-one). All three hypotheses are automatically satisfied if, for example, $\sysstate\init$, $\sysstate\fin_T$ and $\envstate[k]$ can be written as functions of $\Ai$, $\Af$ and $Y_k$ (for each $k =1, \dotsc, T$) respectively.
\end{remark}

Again, $\varsigma_{T}$ depends on the spectral projectors of the observables~$\Ai$,~$\Af$ and~$(Y_k)_{k=1}^T$, but not on their eigenvalues. However, with the choices~$\Ai = - \log \sysstate\init$, $\Af = - \log \sysstate\fin_T$ and~$Y(s) = \beta(s) \henv(s)$, and writing the spectral decompositions $\rho\init=\sum r_a\init \pi\init_a$, $\rho\fin_T=\sum r_a\fin \pi\fin_a$, the relation~\eqref{eq:balance_eq_for_trajectories} takes the simpler form of a sum of differences of the obtained eigenvalues (measurement results):
\begin{align*}
\varsigma_{T}(\omega) &= (-\log r_{a\fin}\fin)-(-\log r_{a\init}\init)  + \sum_{k=1}^T \beta_k (E_{j_k}^{(k)} - E_{i_k}^{(k)}),
\end{align*}
which is the random variable introduced earlier as $\rvZ(\omega)$ (again, in the case $Y=\beta h_\env$). In this case, $\rvW = (-\log r_{a\init}\init)-(-\log r_{a\fin}\fin)$ is a classical random variable that is the difference of measurements of entropy observables on the system~$\sys$, which we call $\Delta s_{\sys,T}(\omega)$. On the other hand, $\sum_{k=1}^T \beta_k (E_{j_k}^{(k)} - E_{i_k}^{(k)})$ is a classical random variable that encapsulates Clausius' notion of the entropy increase of the chain $(\env_k)_{k=1}^T$ on the level of trajectories, which we call $\Delta s_{\env,T}(\omega)$.  Then,
\begin{equation}\label{eq:decomp-entropy-prod-traj}
	\Delta s_{\sys,T}(\omega) +\varsigma_{T}(\omega) = \Delta s_{\env,T}(\omega),
\end{equation}
and $\varsigma_{T}(\omega)$ measures the difference between these two entropy variations, on the trajectory~$\omega$.

Moreover, Proposition~\ref{prop:averaged_balance_equation} {below}, whose proof is also left for the Appendix, links expression~\eqref{eq:balance_eq_for_trajectories} to the entropy balance equation \eqref{eq:balance}. Indeed, by showing that under suitable hypotheses the two terms on the right hand side of~\eqref{eq:balance_eq_for_trajectories} average to the corresponding terms in~\eqref{eq:balance}, we show that $\E_T( \varsigma_{T}) = \oldsigmaT$. In other words, $\oldsigmaT$ coincides with the relative entropy or Kullback--Leibler divergence $D(\bP^F_{T}||\, \bP^B_{T})$ between the classical distributions $\bP^F_{T}$ and $\bP^B_{T}$. Recall that $D(\bP^F_{T}||\, \bP^B_{T})=0$ if and only if $\bP^F_{T}=\bP^B_{T}$. Hence, we will refer to~\eqref{eq:balance_eq_for_trajectories} as the \emph{entropy balance equation on the level of trajectories}.

\begin{proposition} \label{prop:averaged_balance_equation}
	Assume that $\rho\init$ is faithful and a function of $\Ai$, that $\rho\fin_T$ is faithful and a function of $\Af$, and the state~$\envstate[k]$ (or equivalently~$h_{\env_k}$) is a function of~$Y_k$ for each~$k = 1, \dotsc, T$, then
	\begin{gather}
	\E_T\Big( \log \big( \frac{\tr(\pi\init_{\ai} \sysstate\init)}{ \tr(\pi\fin_{\af} \sysstate\fin)} \frac{\dim \pi\fin_{\af}}{\dim \pi\init_{\ai}}\big) \Big) = -\E_T(\Delta s_{\sys,T}) = S(\sysstate\fin) - S(\sysstate\init) \label{eq:ssys-avg}
	\intertext{and}
	\E_T\Big( \sum_{k=1}^T \beta_k (E_{j_k}^{(k)} - E_{i_k}^{(k)}) \Big) = \E_T(\Delta s_{\env,T}) = \sum_{k=1}^T \beta_k \Delta Q_k. \label{eq:senv-avg}
	\end{gather}
	Therefore,
	\begin{gather}
	\E_T( \varsigma_{T}) = \oldsigmaT, \label{eq:sprod-avg}
	\end{gather}
	and relation \eqref{eq:balance_eq_for_trajectories} reduces to the entropy balance equation \eqref{eq:balance} upon taking expectation with respect to~$\bP^F_{T}$.
\end{proposition}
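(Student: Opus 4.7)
The plan is to compute the one-variable marginals of $\bP^F_{T}$ that appear in each of \eqref{eq:ssys-avg}--\eqref{eq:sprod-avg}, and then evaluate the three expectations separately. The key computational tool is that a dephasing sum $\sum_{\ai} \pi\init_{\ai} X \pi\init_{\ai}$ collapses to $X$ precisely when $[X,\Ai]=0$, and analogously for the probe spectral projectors: the three hypotheses of the proposition---namely that $\sysstate\init$, $\sysstate\fin_T$ and each $\envstate[k]$ commute with $\Ai$, $\Af$ and $Y_k$ respectively---are tuned to trigger exactly these collapses at the appropriate places.

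I would first establish the four marginal identities
\begin{align*}
\bP^F_{T}(\ai) &= \tr(\pi\init_{\ai}\sysstate\init), & \bP^F_{T}(\af) &= \tr(\pi\fin_{\af}\sysstate\fin_T), \\
\bP^F_{T}(i_k) &= \tr(\Pi^{(k)}_{i_k}\envstate[k]\init), & \bP^F_{T}(j_k) &= \tr(\Pi^{(k)}_{j_k}\envstate[k]\fin),
\end{align*}
starting from \eqref{eq:def_P_F^T} and summing over unobserved indices via $\sum_{\ai}\pi\init_{\ai}=\sum_{\af}\pi\fin_{\af}=\sum_{\vec{\imath}}\Pi_{\vec{\imath}}=\sum_{\vec{\jmath}}\Pi_{\vec{\jmath}}=\id$, using the commutation hypotheses to recover the factorized state $\sysstate\init\otimes\Xi$ inside the trace. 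For $\bP^F_{T}(\af)$, cyclicity and unitarity leave $\tr_\env(U_T\cdots U_1(\sysstate\init\otimes\Xi)U_1^*\cdots U_T^*)$, which is $\sysstate\fin_T$ by iteration of \eqref{eq_defLk}. For $\bP^F_{T}(j_k)$, the sequential structure of the interactions---probes $\env_l$ with $l>k$ can be eliminated through $\tr_{\sys,\env_l}(U_l(\,\cdot\,\otimes\envstate[l])U_l^*)=\tr_\sys(\,\cdot\,)$ up to trivial tensors---reduces the relevant trace to the partial trace over $\sys$ of $U_k\cdots U_1(\sysstate\init\otimes\envstate[1]\otimes\cdots\otimes\envstate[k])U_1^*\cdots U_k^*$, which is the state $\envstate[k]\fin$ defined in Subsection~\ref{subsec_ris}.

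With the marginals in hand, \eqref{eq:ssys-avg} follows from the fact that, $\sysstate\init$ being a function of $\Ai$, it decomposes as $\sum_{\ai}(\tr(\pi\init_{\ai}\sysstate\init)/\dim\pi\init_{\ai})\,\pi\init_{\ai}$, whence
\[ S(\sysstate\init) = -\sum_{\ai}\tr(\pi\init_{\ai}\sysstate\init)\log\frac{\tr(\pi\init_{\ai}\sysstate\init)}{\dim\pi\init_{\ai}}, \]
and similarly for $\sysstate\fin_T$; splitting the logarithm in \eqref{eq:ssys-avg} and applying the $\ai$- and $\af$-marginals reassembles the four pieces into $S(\sysstate\fin_T)-S(\sysstate\init)$. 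The middle equality is the specialization $\Ai=-\log\sysstate\init$, $\Af=-\log\sysstate\fin_T$. For \eqref{eq:senv-avg}, the hypothesis that $\henv[k]$ is a function of $Y_k$ yields $\henv[k]=\sum_{i_k}E^{(k)}_{i_k}\Pi^{(k)}_{i_k}$, so the marginals of $i_k$ and $j_k$ convert $\E_T(E^{(k)}_{j_k}-E^{(k)}_{i_k})$ into $\tr(\henv[k]\envstate[k]\fin)-\tr(\henv[k]\envstate[k]\init)=\Delta Q_k$; summing against $\beta_k$ gives the claim.

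Finally, \eqref{eq:sprod-avg} follows by taking the expectation of the trajectory-wise identity \eqref{eq:balance_eq_for_trajectories} from Lemma~\ref{lem:balance_eq_traj} (whose hypotheses are implied by the present ones) and combining with the two previous equalities: the right-hand side becomes $-\Delta S_{\sys,T}+\sum_k\beta_k\Delta Q_k$, which equals $\oldsigmaT$ by the averaged balance \eqref{eq:balance}. I do not foresee a real obstacle; the proof is essentially bookkeeping, and the only delicate point is to invoke each of the three commutation hypotheses at the precise moment in the marginal computation when the corresponding dephasing sum must be collapsed or the global-state trace recast in terms of $\sysstate\fin_T$ or $\envstate[k]\fin$.
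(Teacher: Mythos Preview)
Your proposal is correct and follows essentially the same approach as the paper: both arguments hinge on using the commutation hypotheses $[\sysstate\init,\Ai]=[\sysstate\fin_T,\Af]=[\envstate[k],Y_k]=0$ to collapse the dephasing sums $\sum_{\ai}\pi\init_{\ai}(\cdot)\pi\init_{\ai}$, $\sum_{\vec{\imath}}\Pi_{\vec{\imath}}(\cdot)\Pi_{\vec{\imath}}$, etc., and then identifying the resulting traces with $S(\sysstate\init)$, $S(\sysstate\fin_T)$, and $\Delta Q_k$. The only organizational difference is that you first isolate the four one-variable marginals and then assemble the expectations from them, whereas the paper computes each expectation directly in one pass (e.g.\ summing $\sum_{j_k}E^{(k)}_{j_k}\Pi^{(k)}_{j_k}=\henv[k]$ inline rather than going through $\bP^F_T(j_k)=\tr(\Pi^{(k)}_{j_k}\envstate[k]\fin)$); this is a matter of presentation, not substance.
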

Before we move on with our program, let us make a number of remarks on the choice of $\Ai = - \log \sysstate\init$ and $\Af = - \log \sysstate\fin_T$.
\begin{remarks}\label{rem:}\,\hfill
	\begin{itemize}
		\item We made the assumption above that the operator $\Af$ was uniformly bounded in $T$. This is true for $\Af = - \log \sysstate\fin_T$, as mentioned in Remark~\ref{rem:Af=logrhof_unif_bounded} below.
		\item The observable $-\log\sysstate\init$ is the analogue of the information random variable in classical information theory.
		\item The observable $-\log\sysstate\fin$ has the same interpretation and is the initial condition for the backward process in \cite{EHM,HorPar} but it might seem odd that the observer is expected to have access to $\sysstate\fin=\L_T\circ\ldots\circ \L_1(\sysstate\init)$. However, one can see that the reduced state of the probe after the forward experiment is random with $\bP^F_T$--expectation equal to $\sysstate\fin$. In addition, $\varsigma_T$ is a relative information random variable, and as such is relevant only to an observer who knows both distributions (here~$\bP^F_{T}$ and~$\bP^B_{T}$). Such an observer, knowing the possible outcomes for the random states after the experiment, and their distribution, would necessarily know their average $\sysstate\fin$.
	\end{itemize}
\end{remarks}
We are interested in the full statistics of the random variables $\varsigma_{T}(\omega)$ that we will address through its cumulant generating functions in the limit $T\to\infty$. We will consider two cases: $\lim_{T\to\infty}\oldsigmaT < \infty$, and $\lim_{T\to\infty}\oldsigmaT =\infty$. The behaviour of this averaged quantity was investigated in~\cite{HJPR1}. For a RIS satisfying the assumptions \ref{ADRIS} and \ref{Prim}, the condition
\[
\limsup_{T\to\infty} \oldsigmaT < \infty
\]
can be shown to be equivalent to the identity $X(s) \equiv 0$, where
\begin{equation}
X(s) := U(s) \big(\sysstate\invar(s) \otimes \xi\init(s)\big) U(s)^* - \sysstate\invar(s) \otimes \xi\init(s), \label{eq:def-X}
\end{equation}
and $\sysstate\invar(s)$ is the unique invariant state of $\L(s)$. If the assumption $X(s)\equiv 0$ does not hold, then $\lim_{T\to\infty}\oldsigmaT =\infty$. It was proven in \cite{HJPR1} that the condition $X(s)\equiv 0$ is equivalent to the existence of a family $(k_\sys(s))_{s \in [0,1]}$ of observables on $\H_\sys$ such that $[k_\sys(s)+h_{\env}(s),U(s)] \equiv 0$.
\smallskip

We will consider the case $X(s) \equiv 0$ in Section~\ref{sec:XisZero}, and the other case, $\sup_{s\in[0,1]}\|X(s)\|_1 > 0$, in Section~\ref{sec:Xnonzero}. In either case, our main object of interest will be the moment generating function of the variables $\rvY$ and $\rvW$, which we can relate to deformations $\L\ealpha(s)$ of $\L(s)$.

\subsection{Moment generating functions and deformed CP maps}
We recall that the quantities $\rvW$ and $\rvY$ are defined in \eqref{eq_defW} and \eqref{eq_defY}. We also recall that the \emph{moment generating function} (MGF) of a real-valued random variable $V$ (with respect to the probability distribution $\bP^F_{T}$, which will always be implicit in the present paper) is defined as the map $M_V:\alpha\mapsto \E_T\big(\e^{\alpha V}\big)$, and the MGF of a pair $(V_1,V_2)$ as the map $M_{(V_1,V_2)}:(\alpha_1,\alpha_2)\mapsto \E_T\big(\e^{\alpha_1 V_1+\alpha_2 V_2}\big)$. When $V$ or $(V_1,V_2)$ are given by the random variables $\rvY$, $\rvW$, the above functions $M_{V}$ (resp.\ $M_{(V_1,V_2)}$) are defined for all $\alpha\in\cc$ (resp.\ for all $(\alpha_1,\alpha_2)\in\cc^2$). For relevant properties of moment generating functions we refer the reader to Sections 21 and 30 of \cite{Bill}.

Our main tool to study these moment generating functions is the following proposition:
\begin{proposition} \label{prop:phiTalpha}
	For $\alpha\in\cc$, define an analytic deformation of~$\L(s)$ by the complex parameter~$\alpha$ corresponding to the observable~$Y(s)$:
	\begin{align}
	\L_Y^{(\alpha)}(s) :  \I_1(\H_\sys) &\to \I_1(\H_\sys) \nonumber \\
	 \eta &\mapsto \tr_\env\big(\e^{ \alpha Y(s) }  U(s)   (\eta \otimes \xi(s))\e^{-\alpha Y(s)} U(s)^*\big). \label{eq_defLk_alpha}
   \end{align}
	Under assumption \textup{\ref{Comm}}, the moment generating function of $\rvY$ is given by
	\[
	M_{\rvY}(\alpha) = \tr_\sys \big( \L_Y^{(\alpha)}(\tfrac{T}{T})  \dotsm \L_Y^{(\alpha)}(\tfrac{1}{T}) (\sum_{\ai} \pi\init_{\ai}\rho\init\pi\init_{\ai})\big).
	\]
	If in addition $[\Ai,\sysstate\init]=0$, then the moment generating function of the pair $(\rvY, \rvW)$
	is given~by
	\begin{gather*}
	M_{(\rvY,\rvW)}(\alpha_1,\alpha_2) = \tr \big(\Exp{-\alpha_2 \Af} \L_Y^{(\alpha_1)}(\tfrac{T}{T})  \dotsm \L_Y^{(\alpha_1)}(\tfrac{1}{T}) (\Exp{+\alpha_2 \Ai}\rho\init)\big).
	\end{gather*}
	so that in particular the moment generating function of $\rvZ$ is given by
	\[	M_{\rvZ}(\alpha) = \tr_\sys \big(\Exp{+\alpha \Af}  \L_Y^{(\alpha)}(\tfrac{T}{T})  \dotsm \L_Y^{(\alpha)}(\tfrac{1}{T})(\Exp{-\alpha \Ai}\rho\init)\big).\]
\end{proposition}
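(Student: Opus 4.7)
The plan is to expand the moment generating functions as trajectory sums and identify the result with an iterated product of the deformed maps $\L_Y^{(\alpha)}$. I give details for the formula for $M_{\rvY}$; the other two require only minor modifications.

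Starting from
\[
M_{\rvY}(\alpha) = \sum_{\omega \in \Omega_T} \e^{\alpha \rvY(\omega)} \, \bP^F_T(\omega)
\]
with $\bP^F_T$ given by \eqref{eq:def_P_F^T}, I would perform the four sums over $\af$, $\vec{\jmath}$, $\vec{\imath}$, $\ai$ successively. Four ingredients are used: (i) completeness, $\sum_\af \pi\fin_\af = \id_{\H_\sys}$; (ii) the spectral identity $\sum_{j_k} \e^{\alpha y^{(k)}_{j_k}} \Pi^{(k)}_{j_k} = \e^{\alpha Y_k}$, yielding $\sum_{\vec \jmath} \e^{\alpha \sum_k y^{(k)}_{j_k}} \Pi_{\vec \jmath} = \bigotimes_k \e^{\alpha Y_k}$; (iii) the commutation $[\Pi^{(k)}_{i_k}, \xi_k] = 0$, which follows from assumption \ref{Comm} and the Gibbs form of $\xi_k$, giving $\sum_{i_k} \e^{-\alpha y^{(k)}_{i_k}} \Pi^{(k)}_{i_k} \xi_k \Pi^{(k)}_{i_k} = \e^{-\alpha Y_k/2} \xi_k \e^{-\alpha Y_k/2}$; and (iv) the contraction $\sum_\ai \pi\init_\ai \rho\init \pi\init_\ai =: \tilde \rho$, which is to be recognized as the argument of the iterated product. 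After these sums and one use of cyclicity of the trace, I obtain
\[
M_{\rvY}(\alpha) = \tr\Bigl[U_T \cdots U_1 \Bigl(\tilde \rho \otimes \bigotimes_{k=1}^T \e^{-\alpha Y_k/2} \xi_k \e^{-\alpha Y_k/2}\Bigr) U_1^* \cdots U_T^* \Bigl(\id_{\H_\sys} \otimes \bigotimes_{k=1}^T \e^{\alpha Y_k}\Bigr)\Bigr].
\]

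To match this with $\tr_\sys[\L_Y^{(\alpha)}(\tfrac{T}{T}) \circ \dotsb \circ \L_Y^{(\alpha)}(\tfrac{1}{T})(\tilde \rho)]$, I would unfold the composition using \eqref{eq_defLk_alpha} and push each partial trace $\tr_{\env_k}$ outward. The crucial observation is that $U_k$, $Y_k$ and $\xi_k$ act as the identity on $\H_{\env_j}$ for $j \ne k$, so $[\e^{\pm \alpha Y_j}, U_k] = 0$ whenever $j \ne k$. By induction on $T$ this gives
\[
\e^{\alpha Y_T} U_T \dotsm \e^{\alpha Y_1} U_1 = \Bigl(\bigotimes_k \e^{\alpha Y_k}\Bigr) U_T \dotsm U_1,
\]
and the analogous rearrangement for the adjoint sequence. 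Together with $[\xi_k, Y_k] = 0$, this yields precisely the trace displayed above and completes the proof of the first formula.

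For the joint MGF $M_{(\rvY, \rvW)}$, the same procedure applies, except that step~(i) now carries a weight $\e^{-\alpha_2 \af}$, producing $\sum_\af \e^{-\alpha_2 \af} \pi\fin_\af = \e^{-\alpha_2 \Af}$, and step~(iv) a weight $\e^{\alpha_2 \ai}$. The hypothesis $[\Ai, \rho\init]=0$ gives $\pi\init_\ai \rho\init \pi\init_\ai = \rho\init \pi\init_\ai$, so step~(iv) returns $\sum_\ai \e^{\alpha_2 \ai} \rho\init \pi\init_\ai = \e^{\alpha_2 \Ai} \rho\init$, while $\e^{-\alpha_2 \Af}$ ends up as a prefactor outside the iterated dynamics. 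The formula for $M_{\rvZ}$ is then the specialization $(\alpha_1, \alpha_2) = (\alpha, -\alpha)$. The only mildly delicate point throughout is the tensorial bookkeeping required for the outward push of the partial traces and the reordering of the $\e^{\pm \alpha Y_k}$ factors; everything else is a direct unfolding of the definitions.
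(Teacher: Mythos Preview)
Your argument is correct and follows essentially the same route as the paper's proof: expand $M_{\rvY}(\alpha)$ using \eqref{eq:def_P_F^T}, collapse the sums over $\af$, $\vec\jmath$, $\vec\imath$ via completeness and the commutation $[\Pi^{(k)}_{i_k},\xi_k]=0$ guaranteed by \ref{Comm}, and identify the result with the trace of the iterated deformed maps. The paper writes the intermediate expression with $\e^{-\alpha Y_k}\,\Xi$ rather than the symmetrized $\e^{-\alpha Y_k/2}\xi_k\e^{-\alpha Y_k/2}$ (equivalent, again by $[Y_k,\xi_k]=0$), and simply asserts the final identification with $\L_Y^{(\alpha)}(\tfrac TT)\circ\dotsb\circ\L_Y^{(\alpha)}(\tfrac1T)$ without spelling out the reordering of the $\e^{\pm\alpha Y_k}$ factors that you make explicit via $[\e^{\pm\alpha Y_j},U_k]=0$ for $j\neq k$.
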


See Appendix~\ref{sec:proofs_for_FS} for the proof. In Section~\ref{sec:Xnonzero} we will analyze the above moment generating functions, with the help of an adiabatic theorem for the non-unitary discrete time operators~$\L_Y^{(\alpha)}$.  The case $Y(s)=\beta(s) h_{\env}(s)$ plays a particular role for the analysis of Landauer's principle. The complex deformation of the map~$\L(s)$ we consider is similar to the deformations introduced in \cite{HMO} for hypothesis testing on spin chains, and to the complex deformation of Lindblad operators introduced in \cite{JPW} suited to the study of entropy fluctuations for continuous time evolution.

Dropping the $s$-dependence from the notation, below, we first provide the expression for the adjoint of the deformation of $\L(s)$ with respect to the duality bracket on $\B(\H_\sys)$, $\braket{C_1,C_2}=\tr_\sys (C_1^*C_2)$. We temporarily make explicit the dependency of $\L_{Y}^{(\alpha)}$ in $\tau$ by denoting it $\L_{Y}^{(\alpha; \tau)}$; in particular, $\L_{Y}^{(\alpha;-\tau)}$ is obtained by replacing the unitary $U(s)$ with its adjoint $U^*(s)$.

\begin{lemma} \label{lem:adjoint}
	The adjoint of the operator $\L_Y^{(\alpha)}$ is given by
	\begin{equation}
	{\L_Y^{(\alpha)}}^*: \eta \mapsto \tr_\env\big(\Exp{-(\overline{\alpha}Y+\beta \henv)}  U^*   (\eta \otimes \xi)\, \Exp{(\overline{\alpha}Y+\beta \henv)} U\big).
	\end{equation}
	In particular, for $Y=\beta h_\env$ we have ${\L_{\beta h_\env}^{(\alpha; \tau)}}^*=\L_{\beta h_\env}^{(-\overline{\alpha}-1; -\tau)}$.
\end{lemma}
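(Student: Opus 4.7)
The plan is a direct, elementary calculation using the duality pairing $\langle C_1,C_2\rangle = \tr_\sys(C_1^*C_2)$ and the cyclicity of the trace, with essentially one delicate point to watch: the non-unitarity of the deformation $\e^{\alpha Y}$ for complex $\alpha$.

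First, I would compute $\langle C, \L_Y^{(\alpha)}(\eta)\rangle$ by unfolding the partial trace over $\env$ and expanding:
\[
\tr_\sys\big(C^*\L_Y^{(\alpha)}(\eta)\big)
= \tr\big((C^*\otimes\id)\,\e^{\alpha Y}\,U\,(\eta\otimes\xi)\,\e^{-\alpha Y}\,U^*\big),
\]
where $\xi = \e^{-\beta\henv}/Z_\beta$. Since the factors $\e^{\pm\alpha Y}$ act trivially on $\H_\sys$, they commute with $C^*\otimes\id$; and the standing commutation $[Y,\henv]=0$ makes $\e^{-\alpha Y}$ commute with $\xi$. Using cyclicity of the trace to move $\e^{-\alpha Y}$ around and absorb it into the Gibbs factor yields
\[
\frac{1}{Z_\beta}\,\tr\big(U^*(C^*\otimes\id)\,\e^{\alpha Y}\,U\,(\eta\otimes\id)(\id\otimes\e^{-\alpha Y-\beta\henv})\big).
\]

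Second, I would read off the adjoint by applying the identity $\tr\big(M(\eta\otimes\id)\big)=\tr_\sys\big(\tr_\env(M)\,\eta\big)$, giving
\[
\tr_\sys\big(C^*\L_Y^{(\alpha)}(\eta)\big)
= \tr_\sys\Big(\tfrac{1}{Z_\beta}\,\tr_\env\big((\id\otimes\e^{-\alpha Y-\beta\henv})\,U^*(C^*\otimes\id)\e^{\alpha Y}U\big)\,\eta\Big).
\]
It remains to put the operator in front of $\eta$ in the form $D^*$. Taking adjoints over $\H_\sys$ reverses the ordering and, since $Y$ and $\henv$ are self-adjoint but $\alpha\in\cc$, transforms $\e^{\alpha Y}$ into $\e^{\overline{\alpha}Y}$ and $\e^{-\alpha Y-\beta\henv}$ into $\e^{-\overline{\alpha}Y-\beta\henv}$. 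Re-inserting $\xi=\e^{-\beta\henv}/Z_\beta$ then reassembles the expression as
\[
(\L_Y^{(\alpha)})^*(C)=\tr_\env\big(\e^{-(\overline{\alpha}Y+\beta\henv)}\,U^*\,(C\otimes\xi)\,\e^{(\overline{\alpha}Y+\beta\henv)}\,U\big),
\]
which is the claimed formula. This conjugation $\alpha\mapsto\overline{\alpha}$ is the only nontrivial step; the rest is bookkeeping.

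For the second statement, I substitute $Y=\beta\henv$ and observe that $\overline{\alpha}Y+\beta\henv=(\overline{\alpha}+1)\beta\henv$. Using $\xi\,\e^{(\overline{\alpha}+1)\beta\henv}=\e^{\overline{\alpha}\beta\henv}/Z_\beta$ (both factors commute), the right-hand side of the formula above reduces to
\[
\tfrac{1}{Z_\beta}\,\tr_\env\big(\e^{-(\overline{\alpha}+1)\beta\henv}\,U^*\,(\eta\otimes\e^{\overline{\alpha}\beta\henv})\,U\big).
\]
A parallel computation of $\L_{\beta\henv}^{(-\overline{\alpha}-1;-\tau)}(\eta)$, now noting that $U^{(-\tau)}=U(\tau)^*$ and again absorbing one factor of $\e^{-\beta\henv}$ coming from $\xi$ into $\e^{(\overline{\alpha}+1)\beta\henv}$, gives exactly the same expression. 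The identity $(\L_{\beta\henv}^{(\alpha;\tau)})^*=\L_{\beta\henv}^{(-\overline{\alpha}-1;-\tau)}$ then follows. The only real obstacle is the careful tracking of complex conjugates arising from adjoints of the non-unitary $\e^{\alpha Y}$; everything else is algebraic manipulation using $[Y,\henv]=0$ and cyclicity.
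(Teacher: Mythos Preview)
Your proof is correct and follows essentially the same approach as the paper's: a direct computation of the duality pairing using cyclicity of the full trace, the commutation $[Y,h_\env]=0$, and the identity $\tr_\env((\id\otimes A)B)=\tr_\env(B(\id\otimes A))$. The only cosmetic difference is that the paper starts from $\langle \L_Y^{(\alpha)}(C_1),C_2\rangle$ and takes the adjoint of $\L_Y^{(\alpha)}(C_1)$ at the outset, whereas you start from $\langle C,\L_Y^{(\alpha)}(\eta)\rangle$ and take the adjoint at the end; the algebra is the same either way.
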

\begin{proof} First note the identity  $\tr_\env((\one\otimes C)D)=\tr_\env(D(\one\otimes C)$ for any operators $C$ and $D$ on $\H_\env$ and $\H_\sys\otimes \H_\env$, respectively.
	Let $C_1, C_2 \in \B(\H_\sys)$. The straightforward computation
	\begin{align*}
	\braket{\L_Y^{(\alpha; \tau)}(C_1), C_2}
	&= \tr_\sys\big(\big(\tr_\env(\Exp{\alpha Y} U^{(\tau)} (C_1 \otimes \xi)\,\Exp{-\alpha Y} U^{(-\tau)})\big)^* C_2\big) \\
	&= \tr_\sys\big(\tr_\env(U^{(\tau)} \Exp{-\overline{\alpha} Y} (C_1^* \otimes \xi) U^{(-\tau)} \Exp{\overline{\alpha} Y } ) \,C_2\big) \\
	&= Z_\beta^{-1} \tr\big((\Exp{\overline{\alpha} Y} U^{(\tau)} (C_1^* \otimes \Exp{-\beta \henv}) \Exp{-\overline{\alpha} Y}  U^{(-\tau)}   ) (C_2 \otimes \one)\big) \\
	&= Z_\beta^{-1} \tr\big( (C_1^* \otimes \one) \Exp{-(\overline{\alpha}Y+\beta \henv)}  U^{(-\tau)}  \\
      &\qquad\qquad\qquad (C_2 \otimes \Exp{-\beta \henv}) \Exp{(\overline{\alpha}Y+ \beta \henv)} U^{(\tau)}\big) \\
	&=  \tr_\sys\big( C_1^* \tr_\env(\Exp{-(\overline{\alpha}Y+\beta \henv)}  U^{(-\tau)}   (C_2 \otimes \xi) \Exp{(\overline{\alpha}Y+\beta \henv)} U^{(\tau)})\big)
	\end{align*}
	directly yields the result.
\end{proof}
{We now consider the Kraus form of this deformation.}
\begin{lemma}\label{lem:contr-CP}
Assume that \textup{\ref{Comm}} holds. For all $\alpha\in \rr$, $\L_Y^{(\alpha)}$ is a completely positive map. In addition, there exists a Kraus decomposition
\[	\L_Y(\eta) = \sum_{i,j}K_{i,j} \eta K_{i,j}^*\]
of $\L_Y$ such that $\L_Y\ealpha$ admits the Kraus decomposition
\[\L_Y^{(\alpha)}(\eta) = \sum_{i,j}\e^{\alpha (y_j-y_i) }K_{i,j} \eta K_{i,j}^*.\]
\end{lemma}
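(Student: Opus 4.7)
The plan is to simultaneously diagonalize $Y$ and the Gibbs state $\xi$ (which is possible by \textup{\ref{Comm}} since $\xi$ is a function of $\henv$ and $[Y,\henv]=0$, so $[Y,\xi]=0$), and then to read off a Kraus decomposition of $\L_Y$ by computing $\tr_\env$ in that basis; the $\alpha$-dependent factors will then fall out of the diagonal action of $\e^{\pm\alpha Y}$.

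Concretely, I would pick an orthonormal basis $\{|e_i\rangle\}$ of $\H_\env$ with $Y|e_i\rangle = y_i|e_i\rangle$ and $\xi|e_i\rangle = \xi_i|e_i\rangle$, where $\xi_i\geq 0$. Setting $\ket{E_i}:=\one_\sys\otimes\ket{e_i}$, I define the partial matrix elements
\[
K_{i,j} := \sqrt{\xi_i}\,\langle E_j|\,U\,|E_i\rangle \in \B(\H_\sys),
\]
so that $K_{i,j}^* = \sqrt{\xi_i}\,\langle E_i|\,U^*\,|E_j\rangle$. Expanding $\xi = \sum_i \xi_i \ket{e_i}\!\bra{e_i}$, writing $\eta\otimes\ket{e_i}\!\bra{e_i} = \ket{E_i}\eta\bra{E_i}$, and evaluating $\tr_\env$ against $\{|E_j\rangle\}$ produces the desired decomposition $\L_Y(\eta) = \sum_{i,j} K_{i,j}\eta K_{i,j}^*$ by direct calculation.

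For the deformed map, I would run the same computation, this time noting that $\bra{E_j}(\one_\sys\otimes \e^{\alpha Y}) = \e^{\alpha y_j}\bra{E_j}$ and $\bra{E_i}(\one_\sys\otimes \e^{-\alpha Y}) = \e^{-\alpha y_i}\bra{E_i}$. These scalars pull out of every term in the double sum and combine to give
\[
\L_Y^{(\alpha)}(\eta) = \sum_{i,j} \e^{\alpha(y_j - y_i)}\, K_{i,j}\,\eta\, K_{i,j}^*,
\]
as claimed. For $\alpha\in\rr$, the prefactor $\e^{\alpha(y_j - y_i)}$ is a positive real, so the operators $\e^{\alpha(y_j-y_i)/2}K_{i,j}$ form a bona fide Kraus family, whence $\L_Y^{(\alpha)}$ is completely positive.

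I do not anticipate a genuine obstacle: the whole argument hinges on the commutation $[Y,\xi]=0$ granted by \textup{\ref{Comm}} and on bookkeeping with the partial trace in a joint eigenbasis. The one point to watch is the index labelling, which must be arranged so that the sign $y_j - y_i$ (rather than $y_i - y_j$) appears in the exponent; here this is forced by pairing the weight $\sqrt{\xi_i}$ (the ``before-$U$'' probe label) with the matrix element $\langle E_j|U|E_i\rangle$ in the definition of $K_{i,j}$.
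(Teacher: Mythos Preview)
Your proof is correct and follows essentially the same route as the paper's: both choose an orthonormal basis of $Y$-eigenvectors on $\H_\env$, exploit $[Y,\xi]=0$ (from \textup{\ref{Comm}}) to push the $\e^{\pm\alpha Y}$ factors onto the basis vectors, and read off the Kraus operators $K_{i,j}=(\id\otimes\bra{\psi_j})\,U\,(\id\otimes\xi^{1/2}\ket{\psi_i})$, which is exactly your $\sqrt{\xi_i}\,\langle E_j|U|E_i\rangle$ once the basis also diagonalizes $\xi$. The only cosmetic difference is that the paper first rewrites $\L_Y^{(\alpha)}$ in the manifestly CP form $\tr_\env\big((\id\otimes\e^{\alpha Y/2})U(\id\otimes\e^{-\alpha Y/2}\xi^{1/2})(\eta\otimes\id)(\cdots)^*\big)$ and then extracts the Kraus operators, whereas you go straight to the Kraus decomposition and deduce complete positivity from the positivity of the scalar prefactors; both orderings are fine.
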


\begin{proof}
	Let $\{\psi_m\}_{m=1}^{\dim \H_\env}$ be an orthonormal basis of eigenvectors of $Y$. We introduce
	\begin{align*}
	&\one\otimes \ket{\psi_m}: \H_\sys\rightarrow \H_\sys\otimes \H_\env \ \ \mbox{defined by  } \ \varphi \mapsto \varphi\otimes \psi_m\\
	&\one\otimes \bra{\psi_m}:\H_\sys\otimes \H_\env\rightarrow \H_\sys\ \ \mbox{defined by  }\ \varphi\otimes \psi \mapsto  \braket{\psi_m, \psi}
	\varphi,
	\end{align*}
	so that $(\one\otimes \ket{\psi_m})^*=\one\otimes \bra{\psi_m}$. We observe that using $[Y,\xi]=0$
	{and $\tr_\env((\one\otimes C)D)=\tr_\env(D(\one\otimes C)$ for any operators $C$ and $D$ on $\H_\env$ and $\H_\sys\otimes \H_\env$, respectively},
	we can write
	\begin{equation}
	\label{eq:sym-L}
   \begin{split}
	\L_Y^{(\alpha)}(\eta) &= \tr_\env\big((\one\otimes \e^{ \alpha Y/2 })  U  (\one \otimes \e^{ -\alpha Y/2 }\xi^{1/2})  \\
      &\qquad\qquad\qquad
	(\eta \otimes \one)(\one \otimes \xi^{1/2}\e^{ -\alpha Y/2 })U^*(\one\otimes \e^{ \alpha Y/2 })\big).
   \end{split}
	\end{equation}
	This shows that $\L_Y\ealpha$ is a completely positive map. Then we express the partial trace on $\H_\env$ using the orthonormal basis $\{\psi_m\}_{m=1}^{\dim \H_\env}$ by means of the set of operators on $\H_\sys$
	\begin{align} \label{eq_formulaKij}
	K_{i,j}^{(\alpha)} :\!&= (\id \otimes \bra{\psi_j}) (\id \otimes \e^{\alpha Y /2 }) U (\id \otimes \e^{-\alpha Y/2}) (\id \otimes \xi^{1/2} \ket{\psi_i})
	\end{align}
	(again $K_{i,j}^{(\alpha)}$ depends on the choice of $Y$). Thus for any $\eta \in \I_1(\H_\sys)$, and all $\alpha\in \R$,
	\begin{equation} \label{eq_KrausLalpha}
		\L_Y^{(\alpha)}(\eta)  = \sum_{i,j} K_{i,j}^{(\alpha)} \eta (K_{i,j}^{({\alpha} )})^*.
	\end{equation}
	This yields the Kraus decomposition of $\L_Y^{(\alpha)}$.
	Moreover, we note that
	\[
	K_{i,j}^{(\alpha)} = \e^{\alpha (y_j-y_i) /2 } K_{i,j}^{(0)},
	\]
	and letting $ K_{i,j}:=K_{i,j}^{(0)}$ gives our final statement.
\end{proof}
Lemma \ref{lem:contr-CP} proves in particular that $\L_Y\ealpha$ is a deformation of $\L$ in the sense of Appendix~\ref{sec_peripheralspectrum}.
Let us now address the regularity of $\L_Y^{(\alpha)}(s)$ in $(s,\alpha)$.
\begin{lemma}\label{lem:C2-reg}
	Assume \textup{\ref{ADRIS}} and suppose $s \mapsto Y(s)\in C^2\big([0,1],\B(\H_\sys)\big)$. Then, the map $$[0,1]\times \C \ni(s,\alpha)\mapsto \L_Y^{(\alpha)}(s)\in \B(\H_\sys)$$ is of class $C^2$.
\end{lemma}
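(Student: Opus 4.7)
The plan is to read regularity off the explicit formula
\[
\L_Y^{(\alpha)}(s)(\eta) = \tr_\env\bigl(\e^{\alpha Y(s)}\,U(s)\bigl(\eta\otimes \xi(s)\bigr)\,\e^{-\alpha Y(s)}\,U(s)^*\bigr),
\]
by checking that each factor depends in a $C^2$ fashion on the pair $(s,\alpha)$ as a map into the finite-dimensional Banach algebra $\B(\H_\sys\otimes\H_\env)$, and then invoking continuity of products and of $\tr_\env$. All spaces involved being finite-dimensional, the operator-algebra exponential is entire analytic, so the only regularity question is really that of its arguments.

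First, under \textup{\ref{ADRIS}}, the map $s\mapsto \hsys\otimes\one+\one\otimes\henv(s)+v(s)$ is $C^2$ into $\B(\H_\sys\otimes\H_\env)$; composing with the entire Banach-algebra exponential and using the chain rule, $s\mapsto U(s)$ is of class $C^2$, and so is $s\mapsto U(s)^*$. Likewise, $s\mapsto \e^{-\beta(s)\henv(s)}$ is $C^2$, and since $Z_{\beta(s)} = \tr\e^{-\beta(s)\henv(s)}$ is a $C^2$, strictly positive scalar, $s\mapsto \xi(s)$ is $C^2$ into $\D(\H_\env)$.

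Next, the hypothesis on $Y$ gives that $s\mapsto Y(s)$ is $C^2$ into $\B(\H_\env)$, while $\alpha\mapsto \pm\alpha Y(s)$ is linear, hence the joint map $(s,\alpha)\mapsto \pm\alpha Y(s)$ from $[0,1]\times\C$ to $\B(\H_\env)$ is $C^2$ (in fact entire in $\alpha$ and $C^2$ in $s$). Composition with the Banach-algebra exponential therefore yields that $(s,\alpha)\mapsto \e^{\pm\alpha Y(s)}\in \B(\H_\env)$ is of class $C^2$ on $[0,1]\times\C$ — this is the one genuinely joint statement and the main (but very mild) obstacle; it amounts to the chain rule for an entire function of a $C^2$ Banach-space-valued map.

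Plugging these ingredients into the displayed formula and using that the operator product on $\B(\H_\sys\otimes\H_\env)$ is a continuous bilinear map, that $\eta\mapsto \eta\otimes\xi(s)$ is linear with a $C^2$-dependent factor $\xi(s)$, and that $\tr_\env\colon \I_1(\H_\sys\otimes\H_\env)\to \I_1(\H_\sys)$ is a continuous linear operator, one obtains that for every fixed $\eta\in\I_1(\H_\sys)$ the vector $\L_Y^{(\alpha)}(s)(\eta)$ is $C^2$ in $(s,\alpha)\in[0,1]\times\C$. Since $\eta$ enters linearly and $\I_1(\H_\sys)$ is finite-dimensional, pointwise $C^2$ dependence of the action is equivalent to $C^2$ dependence of the operator itself, giving the claim.
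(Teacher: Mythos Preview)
Your proof is correct and follows essentially the same approach as the paper's: both reduce, via finite-dimensionality, to checking $C^2$ regularity of scalar quantities built from $\e^{\pm\alpha Y(s)}$, $U(s)$, and $\xi(s)$, and both rely on the chain rule for the Banach-algebra exponential together with \textup{\ref{ADRIS}}. The only cosmetic difference is that the paper phrases the reduction through matrix elements of the Kraus operators $K_{i,j}(s)$ rather than through the action on a fixed $\eta$, but the substance is identical.
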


\begin{proof}
	First observe that since the dimensions of $\H_\sys$ and $\H_\env$ are finite, it is enough to check regularity of the matrix elements of
	$\L_Y^{(\alpha)}(s)$. From the Kraus decomposition above, if $\{\varphi_k\}_{k=1}^{\dim\H_\sys}$ and $\{\psi_i\}_{i=1}^{\dim\H_\env}$ are fixed orthonormal bases of $\H_\sys$ and $\H_\env$, it is enough to check regularity of the $\C$ valued functions
	$$
	\braket{\varphi_k | K_{i,j} \varphi_l}=\braket{\varphi_k\otimes \psi_j | e^{\alpha Y(s)}U(s)e^{-\alpha Y(s)} \xi(s)^{1/2} \varphi_l\otimes \psi_i}.
	$$
	By \ref{ADRIS} and the explicit dependence in $\alpha$ of the matrices involved, one gets immediately the result.
\end{proof}

\begin{remark}\label{rem:reg}
	In case the regularity assumption in $s$ in \ref{ADRIS} and that of $Y$ are understood in the operator norm sense, and $h_\env(s)$ is such that $\xi(s)$ is $C^2$ in the trace norm sense on $\H_\env$, the map $(s,\alpha)\mapsto \L_Y^{(\alpha)}(s)\in \B(\I_1(\H_\sys))$ is $C^2$ in the norm sense, irrespectively of the dimensions of $\H_\sys$ and $\H_\env$; see  Appendix \ref{sec:proofs_for_FS}.
\end{remark}

We conclude this section with a discussion of the effect of time-reversal on the operator $K_{i,j}$, and therefore on the operator $\L\ealpha_Y$. A relevant assumption will be the following:
\begin{description}
	\labitem{TRI}{TRI} We say that an \ref{ADRIS} satisfies \emph{time-reversal invariance} if for every $s\in[0,1]$ there exist two antiunitary involutions $C_\sys(s) : \H_\sys \to \H_\sys$ and $C_\env(s) : \H_\env \to \H_\env $ such that if $C(s)=C_\sys(s)\otimes C_\env(s)$ one has for all $s\in[0,1]$
	\[  [\hsys,C_\sys(s)] = 0, \qquad [\henv(s),C_\env(s)] = 0,  \qquad [v(s),C(s)] = 0 .\]
\end{description}
This holds for example if each $\hsys, \henv$ and $v$ are real valued matrices in the same basis, and $C_\sys$, $C_\env$ are complex conjugation in the corresponding basis.

In the following result we denote by $K_{i,j}^{(\tau)}(s)$ the operator $K_{i,j}(s)$ associated with the unitary $U(s)$ as defined in \eqref{eq_versionscontinues}. The operator $K_{i,j}^{(-\tau)}(s)$ is therefore associated in the same way with the unitary~$U^*(s)$.

\begin{lemma} \label{lemma_symmetries}
	Assume that an \textup{\ref{ADRIS}} satisfies \textup{\ref{Comm}} and \textup{\ref{TRI}}. Then for all $i,j$ and all $s\in[0,1]$ one has
	\begin{equation} \label{eq_symKij}
		C_\sys(s) K_{i,j}^{(\tau)}(s) C_\sys(s) =K_{i,j}^{(-\tau)}(s).
	\end{equation}
	This implies in particular that for $Y=\beta h_\env$ and all $s\in[0,1]$,
	\begin{equation} \label{eq_symlambda}
		\lambda\ealpha(s)=\lambda^{(-1-\alpha)}(s),
	\end{equation}
	where $\lambda\ealpha(s)$ is the spectral radius of $\L_Y\ealpha(s)$.
	Equivalently the function $\alpha\mapsto \lambda\ealpha(s)$ is symmetric about $\alpha=-1/2$ for all $s\in[0,1]$.
\end{lemma}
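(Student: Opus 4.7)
The plan has two stages: first, verify the operator identity \eqref{eq_symKij} by direct manipulation of the explicit formula \eqref{eq_formulaKij} for $K_{i,j}=K_{i,j}^{(0)}$; second, deduce \eqref{eq_symlambda} by combining \eqref{eq_symKij} with the adjoint identity of Lemma~\ref{lem:adjoint} and the fact that any bounded operator and its Hilbert-space adjoint share a spectral radius.

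Because $C(s)=C_\sys(s)\otimes C_\env(s)$ is antilinear and, by \textup{\ref{TRI}}, commutes with $h_\sys+\henv(s)+v(s)$, it flips the sign of $-\i\tau$ in the exponential defining $U(s)$, so $C(s)U(s)C(s)=U(s)^*$. Also $[\henv(s),C_\env(s)]=0$ yields $[\xi(s),C_\env(s)]=0$, and in the Landauer setting $Y(s)=\beta(s)\henv(s)$ one further has $[Y(s),C_\env(s)]=0$; this lets me choose an orthonormal eigenbasis $\{\psi_i\}$ of $Y(s)$ consisting of vectors fixed by $C_\env(s)$, by diagonalizing $Y(s)$ within the $\R$-subspace fixed by $C_\env(s)$ inside each of its eigenspaces. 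With such a basis, starting from \eqref{eq_formulaKij} at $\alpha=0$ and inserting $C(s)^2=\id$, I would use $C_\env\psi_i=\psi_i$ and $[\xi^{1/2},C_\env]=0$ to rewrite $C_\sys\varphi\otimes\xi^{1/2}\psi_i = C(\varphi\otimes\xi^{1/2}\psi_i)$, then use $CU=U^*C$ to convert $U$ into $U^*$. A short computation on simple tensors, using $C_\env\psi_j=\psi_j$ together with the antiunitary identity $\overline{\braket{\psi_j,C_\env\psi'}}=\braket{\psi_j,\psi'}$, shows that $C_\sys(\id\otimes\bra{\psi_j})C=\id\otimes\bra{\psi_j}$. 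Assembling these pieces gives exactly $C_\sys K_{i,j}^{(\tau)}C_\sys\varphi=K_{i,j}^{(-\tau)}\varphi$, proving \eqref{eq_symKij}.

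For \eqref{eq_symlambda}, I would substitute \eqref{eq_symKij} together with the elementary identity $(C_\sys A C_\sys)^*=C_\sys A^* C_\sys$ into the Kraus form from Lemma~\ref{lem:contr-CP}, and use antilinearity of $C_\sys$ to pass the scalars $\e^{\alpha(y_j-y_i)}$ through the conjugation (which conjugates $\alpha$). This yields
\[ \Gamma\,\L_Y^{(\alpha;\tau)}\,\Gamma = \L_Y^{(\overline\alpha;-\tau)}, \qquad \Gamma(\eta):=C_\sys\eta C_\sys, \]
where $\Gamma$ is an antilinear involution on $\B(\H_\sys)$. Conjugation by such a $\Gamma$ sends eigenvalues to their complex conjugates, so $\spr(\L_Y^{(\overline\alpha;-\tau)})=\spr(\L_Y^{(\alpha;\tau)})$. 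Chaining this with Lemma~\ref{lem:adjoint}, which for $Y=\beta\henv$ reads $(\L^{(\alpha;\tau)})^*=\L^{(-\overline\alpha-1;-\tau)}$, and with $\spr(T^*)=\spr(T)$, I arrive at
\[ \lambda^{(\alpha)}(s) = \spr\big(\L^{(-\overline\alpha-1;-\tau)}(s)\big) = \spr\big(\L^{(-\alpha-1;\tau)}(s)\big) = \lambda^{(-1-\alpha)}(s), \]
as desired.

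The only real difficulty is bookkeeping: at each step one must track how antilinearity of $C_\sys$, $C_\env$ and $\Gamma$ complex-conjugates scalars and flips the sign of $\tau$, and the two effects must be chased simultaneously through the Kraus form and the adjoint identity. The argument is cleanest if Kraus operators are used throughout and the eigenbasis of $Y(s)$ is chosen $C_\env(s)$-fixed at the outset.
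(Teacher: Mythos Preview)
Your proposal is correct and follows essentially the same route as the paper's own proof: choose a $C_\env$-fixed eigenbasis of $Y$, use $CUC=U^*$ to obtain \eqref{eq_symKij}, then conjugate the Kraus form by the antilinear involution $\Gamma=C_\sys(\cdot)C_\sys$ and combine with Lemma~\ref{lem:adjoint} and $\spr(T^*)=\spr(T)$ to deduce \eqref{eq_symlambda}. You are in fact slightly more careful than the paper in noting that the existence of a $C_\env$-fixed eigenbasis of $Y$ requires $[Y,C_\env]=0$, which is guaranteed for $Y=\beta\henv$ via \ref{TRI} but is not a formal consequence of \ref{Comm} alone.
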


\begin{proof}
	Once again we drop the $s$ variable. Using \eqref{eq_formulaKij}, $K_{i,j}^{(\tau)}$ we have
	\begin{align*}
		\langle  \varphi_1, C_\sys K_{i,j}^{(\tau)} C_\sys\varphi_2\rangle
		&= 	\overline{\langle C_\sys  \varphi_1, K_{i,j}^{(\tau)} C_\sys\varphi_2\rangle}\\
		&=	\overline{\langle C_\sys \varphi_1\otimes \psi_j, U \xi^{1/2} C_\sys \varphi_2 \otimes \psi_i\rangle }\\
	\intertext{and by \ref{Comm} we can choose the basis $(\psi_i)_i$ such that $C_\env \psi_i=\psi_i$, $C_\env \psi_j=\psi_j$, so that}
		&= \overline{\langle C (\varphi_1\otimes \psi_j), U \xi^{1/2} C (\varphi_2 \otimes \psi_i)\rangle }\\
		&= {\langle \varphi_1\otimes \psi_j, CUC \xi^{1/2}  \varphi_2 \otimes \psi_i\rangle }\\
		&= {\langle \varphi_1\otimes \psi_j, U^* \xi^{1/2}  \varphi_2 \otimes \psi_i\rangle }
	\end{align*}
	and this proves relation \eqref{eq_symKij}. If we now denote $K_\sys = C_\sys \cdot C_\sys$ the map on $\B(\H_\sys)$, then for $\alpha\in\rr$ this implies $K_\sys \circ \L^{(\alpha,\tau)} \circ K_\sys= \L^{(\alpha;-\tau)}$ for any $Y$ satisfying \ref{Comm}. By Lemma~\ref{lem:adjoint}, for $Y=\beta h_\env$ this implies $K_\sys \circ \L^{(\alpha,\tau)} \circ K_\sys=\L^{(-{\alpha}-1; \tau)}{}^*$ which in turn implies $\lambda^{(\alpha)}=\lambda^{(-\alpha-1)}$.
\end{proof}

\subsection{A general adiabatic result} \label{subsec_adiabaticresult}

The moment generating functions of $\rvY$ and $\rvW$ have been related in Proposition \ref{prop:phiTalpha} to products of $T$ operators $\L_Y\ealpha(s)$  {that differ little from each other, locally, by an amount of order $1/T$.} The general result stated below will allow us to discuss the asymptotic behaviour of $\rvY$ as $T\rightarrow \infty$.

Let $s\mapsto \L(s)$ be a family of CPTP maps satisfying \ref{Irr}, and define $\L_Y\ealpha(s)$ by \eqref{eq_defLk_alpha}. For each $s\in[0,1]$, the map $\L_Y\ealpha(s)$ satisfies~\eqref{eq_KrausLalpha}, and is therefore a deformation of $\L(s)$ in the sense of Appendix~\ref{sec_peripheralspectrum}. From Proposition~\ref{prop_periphspectrumPhialpha}, there exist maps~$\lambda_Y\ealpha(s)$,~$\invalpha_Y(s)$ and~$\rho\ealpha_Y(s)$ from $[0,1]\times \rr \ni(s,\alpha)$ to, respectively, $\rr_+^*$, the set of positive-definite operators, and the set of faithful states of $\H_\sys$; and maps $z(s), u(s)$ from $[0,1]$ to, respectively, $\nn$ and the set of unitary operators, with the following properties:
\begin{itemize}
	\item the identities  $[u(s),\invalpha_Y(s)]=[u(s),\rho\ealpha_Y(s)]=0$, and $u(s)^{z(s)}=\id$ hold;
	\item the peripheral spectrum of $\L_Y\ealpha(s)$ is $\lambda\ealpha_Y(s) S_{z(s)}$, where $S_{z}=\{\theta^m \, | \, \theta=\e^{2\i \pi /z}, {m=0,\ldots,z-1}\}$;
	\item the spectral decomposition $u(s) = \sum_{m=1}^{z(s)} \e^{2\i \pi m/z(s)} p_m(s)$ holds;
	\item the map $\eta\mapsto \tr( \invalpha_Y(s) u(s)^{-m}  \eta) \rho\ealpha_Y(s) u(s)^{m}$ is the spectral projector of $\L_Y\ealpha(s)$ associated with  $\lambda_Y\ealpha(s) \,\e^{2\i \pi m/ z(s)}$;
	\item the unitary $u(s)$ and cardinal $z(s)$ of the peripheral spectrum of $\L_Y\ealpha(s)$ do not depend on~$\alpha$ or $Y$.
\end{itemize}
Note that we have $\lambda^{(0)}_Y(s)=1$, $\mathrm{I}^{(0)}_Y(s)=\id$ and $\rho^{(0)}_Y(s)=\rho\invar(s)$ for all $Y$ and $s$. As mentioned above, the case  $Y=\beta h_\env$ will be particularly relevant to the discussion of the Landauer principle. We therefore drop the indices $Y$, and simply denote by $\lambda\ealpha(s)$, $\invalpha(s)$ and $\rho\ealpha(s)$ the above quantities, in the case where $Y=\beta h_\env$.
We define
\[\tilde\L_Y\ealpha(s)=\big(\lambda\ealpha_Y(s)\big)\inv \L_Y\ealpha(s).\]
The following result will be our main technical tool.

 \begin{proposition}\label{prop_adiabaticlemma}
	Consider an \textup{\ref{ADRIS}} with the family $(\L(s))_{s \in [0,1]}$ satisfying \textup{\ref{Irr}} with $z(s) \equiv z$. Then, there exist continuous functions $\rr \ni \alpha \mapsto \ell'(\alpha)  \in (0,1)$ and $\rr \ni \alpha \mapsto C(\alpha)  \in \rr_+$,
   and a function $\alpha\mapsto T_0(\alpha)\in\nn$ that is bounded on any compact set of $\rr$, such that for all $\alpha \in \rr$, $T\geq T_0(\alpha)$, and~$k\leq T$,
	\begin{align*}
		&\Big\|\tilde \L_Y\ealpha(\tfrac kT) \ldots  \tilde \L_Y\ealpha(\tfrac 1T) \rho\init -z\e^{-\vartheta\ealpha_Y} \sum_{m=0}^{z-1} \tr\big(\invalpha(0) p_m(0) \rho\init\big)  \rho\ealpha_Y(\tfrac kT) p_{m - k}(\tfrac kT) \Big\|\\
		&\hspace{0.5\textwidth} \leq  \frac{C(\alpha) }{T(1-\ell'(\alpha) )}+ C(\alpha)\ell'(\alpha)^k.
	\end{align*}
	where the index of the spectral projector $p_{m-k}(\tfrac kT)$ is interpreted modulo $z$, and
	$$
		\vartheta\ealpha_Y:=\int_0^{k/T} \tr\big(\invalpha_Y(s) \,\frac\partial{\partial s}\rho\ealpha_Y(s)\big)\d s.
	$$
\end{proposition}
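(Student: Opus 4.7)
The strategy is to apply the improved discrete non-unitary adiabatic theorem of Appendix~\ref{app:DNUAT} to the normalized family $\tilde\L_Y\ealpha(s)$ and identify the resulting adiabatic evolution with the explicit right-hand side of the claimed estimate.

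To verify the hypotheses: Lemma~\ref{lem:C2-reg} gives $C^2$ regularity of $(s,\alpha)\mapsto\L_Y\ealpha(s)$, and since $\lambda\ealpha_Y(s)$ is an isolated simple non-vanishing eigenvalue (Proposition~\ref{prop_periphspectrumPhialpha}), the normalized family $\tilde\L_Y\ealpha(s)$ is also $C^2$. Using $z(s)\equiv z$ and Proposition~\ref{prop_periphspectrumPhialpha}, the peripheral spectrum of $\tilde\L_Y\ealpha(s)$ equals $S_z$ for every $s$, with associated spectral data $(\invalpha_Y(s),\rho\ealpha_Y(s),u(s),\{p_m(s)\}_{m=0}^{z-1})$. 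The non-peripheral spectrum lies in a disk of radius $\ell'(\alpha)<1$, and by joint continuity of the spectrum on $[0,1]\times K$ for $K$ compact in $\rr$, $\ell'(\alpha)$ can be taken continuous in $\alpha$.

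Applying the adiabatic theorem yields an approximation of $\tilde\L_Y\ealpha(\tfrac kT)\cdots\tilde\L_Y\ealpha(\tfrac 1T)\rho\init$ by an explicit adiabatic evolution in the peripheral subspace, up to the claimed error. To identify it, I write the peripheral subspace of $\tilde\L_Y\ealpha(s)$ in the basis $\{\rho\ealpha_Y(s)\,p_l(s)\}_{l=0}^{z-1}$ (equivalent to $\{\rho\ealpha_Y(s)u(s)^m\}_{m=0}^{z-1}$ via discrete Fourier transform). In this basis the eigenvalue equation $\tilde\L_Y\ealpha(s)(\rho\ealpha_Y(s)u(s)^m)=\e^{2\i\pi m/z}\rho\ealpha_Y(s)u(s)^m$ translates into the cyclic shift $\rho\ealpha_Y(s)\,p_l(s)\mapsto\rho\ealpha_Y(s)\,p_{l-1}(s)$, which compounds to $p_{l-k\bmod z}$ after $k$ steps---yielding the shifts $p_{m-k}(k/T)$ in the statement. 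The initial coefficient of $\rho\ealpha_Y(0)p_l(0)$ in the peripheral decomposition of $\rho\init$ is $z\,\tr(\invalpha_Y(0)p_l(0)\rho\init)$, as follows from $\sum_m P_m(0)(\rho\init)=z\sum_l\tr(\invalpha_Y(0)p_l(0)\rho\init)\rho\ealpha_Y(0)p_l(0)$ via $\sum_m\e^{2\i\pi m(l-n)/z}=z\delta_{nl}$; this explains the prefactor $z$.

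The scalar $\e^{-\vartheta\ealpha_Y}$ is the dynamical phase produced by the adiabatic theorem. Its $m$-independence---essential for the factorization above---follows from the commutations $[u(s),\invalpha_Y(s)]=[u(s),\rho\ealpha_Y(s)]=0$ and the normalization $\tr(\invalpha_Y(s)\rho\ealpha_Y(s))=1$ in Proposition~\ref{prop_periphspectrumPhialpha}, which reduce the phase integral to $\int_0^{k/T}\tr(\invalpha_Y(s)\partial_s\rho\ealpha_Y(s))\,\d s$. The main obstacle is precisely this phase identification in the presence of several peripheral eigenvalues: one must verify that the $m$-dependent contributions coming from the slow variation of the $p_m(s)$ actually cancel in the sum, leaving only the $m$-independent scalar $\vartheta\ealpha_Y$. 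A secondary technical point is ensuring the continuity (resp.\ boundedness) of $\ell'(\alpha)$, $C(\alpha)$, $T_0(\alpha)$ in $\alpha$, which follows from the joint regularity and compactness arguments above.
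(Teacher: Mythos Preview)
Your approach is essentially the paper's: verify that $\tilde\L_Y\ealpha(s)$ is admissible, apply Corollary~\ref{coro:P-Q-decomp}, show the adiabatic phase is $m$-independent, and pass from the $u^m$ basis to the $p_l$ basis by a discrete Fourier identity. The structure and all the ingredients are correct.

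One point deserves more care. You claim that the $m$-independence of the phase ``follows from the commutations $[u,\invalpha_Y]=[u,\rho\ealpha_Y]=0$ and the normalization $\tr(\invalpha_Y\rho\ealpha_Y)=1$.'' These relations reduce $\vartheta\ealpha_{Y,m}$ to $\vartheta\ealpha_Y + m\int_0^{k/T}\tr\big(\invalpha_Y(t)\,u^{-1}(t)\,\tfrac{\partial u}{\partial t}(t)\,\rho\ealpha_Y(t)\big)\,\d t$, but they do not by themselves kill the second integral. The paper closes this gap (Lemma~\ref{lemma_psijconstant}) by observing that $u(t)^z=\id$ forces $\vartheta\ealpha_{Y,0}=\vartheta\ealpha_{Y,z}$, hence the extra integral vanishes. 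You should invoke $u^z=\id$ explicitly here; the commutations and normalization alone are not sufficient.
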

\begin{proof}
	By expression \eqref{eq_KrausLalpha}, the map $\rr\times [0,1]\ni (\alpha,s)\mapsto \L_Y\ealpha(s)$ is real analytic in $\alpha$ and $C^2$ in~$s$. We know from Proposition \ref{prop_periphspectrumPhialpha} that the spectral radius $\lambda\ealpha_Y(s)$ of $\L_Y\ealpha(s)$ is a simple eigenvalue for $\L_Y\ealpha(s)$ with eigenvector $\rho\ealpha_Y(s)$, and for $\L_Y\ealpha(s)^*$ with eigenvector $\invalpha(s)$.
	By standard perturbation theory, the maps $(\alpha,s)\mapsto \lambda_Y\ealpha(s), \invalpha_Y(s), \rho\ealpha_Y(s)$ are $C^2$ functions of $s\in[0,1]$ and real analytic functions of $\alpha\in\rr$. The unitary $u(s)$ is an eigenvector for the isolated eigenvalue $\theta=\e^{2\i \pi/z}$ of $\L^*(s)$, and is therefore a $C^2$ function of $s$.
   The peripheral spectrum of $\tilde\L_Y\ealpha(s)$ is the set $S_z=\{\theta^m \,|\, m=0,\ldots,z-1\}$, each peripheral eigenvalue $\theta^m$ is simple, and the associated peripheral projector $\eta\mapsto P\ealpha_m(s)= \tr\big( \invalpha_Y(s) u(s)^{-m}  \eta\big) u(s)^{m} \rho\ealpha_Y(s)$ is therefore a $C^2$ function of~$s$ and a real analytic function of $\alpha$.
   In addition, denoting $Q\ealpha(s)=\id-\sum_{m=1}^z P\ealpha_m(s)$ the quantity $\ell(\alpha)=\sup_{s\in[0,1]}\spr \tilde \L_Y\ealpha(s) Q_\alpha(s)<1$ is a continuous function of $\alpha$.

	From the above discussion, the family $s\mapsto \tilde\L_Y\ealpha(s)$ satisfies \ref{it:cont}--\ref{it:spr-Q} and is therefore admissible, in the sense of Appendix \ref{app:DNUAT}, with simple peripheral eigenvalues. We can therefore apply Corollary \ref{coro:P-Q-decomp}. Denote by $\vartheta\ealpha_{Y,m}$ the integral appearing in the exponential factor:
	\begin{equation} \label{eq_defpsijalpha}
	\vartheta\ealpha_{Y,m} = \int_0^{k/T} \tr \Big( \invalpha_Y(t) u(t)^{-m} \frac{\partial}{\partial t}\big(u^m(t) \rho\ealpha_Y(t)\big)\Big) \d t.
	\end{equation}
	We can prove that $\vartheta\ealpha_{Y,m}$ does not depend on $m$:
	\begin{lemma} \label{lemma_psijconstant}
		We have for $m=0,\ldots,z-1$
		\[\vartheta\ealpha_{Y,m} = \vartheta\ealpha_Y:=\int_0^{k/T} \tr\big(\invalpha_Y(t) \,\frac\partial{\partial t}\rho\ealpha_Y(t)\big)\d t.\]
	\end{lemma}
	\begin{proof}
		The proof follows from a simple expansion of $\frac{\partial}{\partial t}\big(u^m(t) \rho\ealpha_Y(t)\big)$ and commutation properties:
		\begin{align*}
		\vartheta\ealpha_{Y,m}
		&= \vartheta\ealpha_Y + \int_0^{k/T} \tr\big(\invalpha_Y(t) u^{-m}(t) \sum_{k=0}^{m-1} u^k(t) \frac{\partial u}{\partial t}(t) u^{m-k-1}(t)\rho\ealpha_Y(t)\big)\d t\\
		&= \vartheta\ealpha_Y + m \int_0^{k/T} \tr\big(\invalpha_Y(t)\, u^{-1}(t) \frac{\partial u}{\partial t}(t)\rho\ealpha_Y(t)\big)\d t.
		\end{align*}
		However, as $u(t)^z=\id$, plugging $m=0$ or $m=z$ in the right-hand side of expression \eqref{eq_defpsijalpha} gives the same expression $\vartheta\ealpha_Y$, so that necessarily
		\[\int_0^{k/T} \tr\big(\invalpha_Y(t) u^{-1}(t) \frac{\partial u}{\partial t}(t)\rho\ealpha_Y(t)\big)\d t=0\]
		and the conclusion follows.
	\end{proof}
	Lemma \ref{lemma_psijconstant} and Corollary \ref{coro:P-Q-decomp} therefore imply that for any $\ell'(\alpha)\in(\ell(\alpha),1)$, there exist $T_0(\alpha) \in \nn$, and $C(\alpha) > 0$ which is a (fixed) continuous function of
	\[c_P(\alpha)=\sup_{s\in[0,1]}\max_{m=1,\ldots,z}\max\big(\|\phi\ealpha_{Y,m}(s)\|,\|\phi\ealpha_{Y,m}{}'(s)\|,\|\psi\ealpha_{Y,m}(s)\|,\|\psi^{(\alpha)}_{Y,m}{}'(s)\|\big)\]
	with
	\[ \phi\ealpha_{Y,m}(s)=\rho\ealpha_Y(s) u(s)^{m}\qquad \psi\ealpha_{Y,m}(s)=\invalpha_Y(s)u(s)^m,\]
such that for any $T\geq T_0(\alpha)$,
\begin{equation}\label{eq:prevprop311}
\begin{split}
&\Big\|	\tilde \L_Y\ealpha(\tfrac kT) \dotsb  \tilde \L_Y\ealpha(\tfrac 1T) \rho\init
-\e^{-\vartheta\ealpha_Y} \!\!\sum_{m=0}^{z-1} \theta^{mk}\, \sysstate_Y\ealpha(\tfrac kT) u^m(\tfrac kT) \tr\big(\invalpha(0) u^{-m}(0) \sysstate\init\big)\Big\|
   \\ & \hspace{0.5\textwidth} \leq \frac{C(\alpha) }{T\big(1-\ell'(\alpha) \big)}+ C(\alpha)\,\ell'(\alpha){}^k.
\end{split}
\end{equation}
In addition, $T_0(\alpha)$ can be chosen depending on $c_P(\alpha)$ and $\ell'(\alpha)$ alone.

Recall that  we have the spectral decomposition $u = \sum_{m=0}^{z-1} \theta^m p_m$.
Then, with all sums from $0$ to $z-1$ understood modulo~$z$, we have by a discrete Fourier-type computation
\begin{align*}
   &\sum_m \theta^{m k}\tr\big(\invalpha(0) u^{-m}(0) \sysstate\init\big)\sysstate_Y\ealpha(\tfrac kT) u^m(\tfrac kT)\\
   & \qquad\qquad = \sum_{m} \theta^{mk} \sum_n \theta^{-nm} \tr\big(\invalpha(0) p_n(0) \rho\init\big) \sum_\ell \theta^{\ell m}\rho\ealpha_Y(\tfrac kT) p_\ell(\tfrac kT) \\
   & \qquad\qquad = \sum_{n, \ell}   \tr\big(\invalpha(0) p_n(0) \rho\init\big)  \rho\ealpha_Y(\tfrac kT) p_\ell(\tfrac kT) \sum_m \theta^{m(k-n+\ell)} \\
   & \qquad\qquad = \sum_{n, \ell}   \tr\big(\invalpha(0) p_n(0) \rho\init\big)  \rho\ealpha_Y(\tfrac kT) p_\ell(\tfrac kT) \, z\ind_{\ell= n - k} \\
   & \qquad\qquad = z \sum_{n} \tr\big(\invalpha(0) p_n(0) \rho\init\big)  \rho\ealpha_Y(\tfrac kT) p_{n - k}(\tfrac kT).
\end{align*}
This expression along with \eqref{eq:prevprop311} yields the result.
\end{proof}

By taking $\alpha=0$, this result allows adiabatic approximation of the state of $\sys$ under the physical evolution $\L(\frac kT) \dotsm   \L(\frac1T)$ after $k$ steps of an irreducible RIS. This corresponds to a generalization of the results of \cite{HJPR1}, which could only treat the primitive case, i.e. $z=1$.
\begin{corollary} \label{cor:adiab_for_alpha=0}
Consider an \textup{\ref{ADRIS}} with the family $(\L(s))_{s \in [0,1]}$ satisfying \textup{\ref{Irr}} with $z(s) \equiv z$. Then, there exists $\ell' < 1$, $C>0$,  and $T_0>0$ such that for all $T\geq T_0$, and $k\leq T$,
\[
\Big\| \L(\frac kT) \dotsm   \L(\frac1T) \, \rho\init -\rhoadiab(k,T)
		\Big\| \leq  \frac{C }{T(1-\ell')}+ C\ell'^k
\]
	where
\begin{equation}
	\rhoadiab(k,T):= z \sum_{n=0}^{z-1} \tr\big( p_n(0) \rho\init\big) \rho\invar(\tfrac kT)  p_{n - k}(\tfrac kT) \label{eq:def_rho_adiab}
\end{equation}
	is a state, and the index of the spectral projector $p_{n-k}(\tfrac kT)$ is interpreted modulo $z$. Moreover, if $\rho\init$ is faithful, we have the uniform bound
	\[
	\inf_{T>1} \inf_{k\leq T} \inf \sp \rhoadiab(k,T) \geq  z \Big(\min_{1\leq j \leq z}  \tr\big( p_j(0) \rho\init\big) \Big) \inf_{s\in [0,1]} \inf \sp \rho\invar(s) > 0.
	\]
\end{corollary}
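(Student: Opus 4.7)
The plan is to derive the inequality as a direct specialization of Proposition~\ref{prop_adiabaticlemma} to $\alpha = 0$, and then to separately verify the stated structural properties of $\rho_{\mathrm{adiab}}(k,T)$. At $\alpha = 0$, the general ingredients collapse nicely: one has $\mathcal{L}_Y^{(0)}(s) = \mathcal{L}(s)$, $\lambda_Y^{(0)}(s) \equiv 1$, $\mathrm{I}_Y^{(0)}(s) \equiv \mathrm{Id}$ and $\rho_Y^{(0)}(s) = \rho^{\mathrm{inv}}(s)$, so $\tilde{\mathcal{L}}_Y^{(0)}(s) = \mathcal{L}(s)$. The exponential prefactor $\e^{-\vartheta_Y^{(0)}}$ is $1$, since $\tr(\rho^{\mathrm{inv}}(s)) \equiv 1$ forces $\tr(\partial_s \rho^{\mathrm{inv}}(s)) \equiv 0$. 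Inserting these identifications into the conclusion of Proposition~\ref{prop_adiabaticlemma} yields the stated bound, with the constants $C, \ell'$ and $T_0$ obtained as the values at $\alpha = 0$ of the corresponding continuous (and, for $T_0$, locally bounded) functions of $\alpha$.

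The substantive step is to prove that $\rho_{\mathrm{adiab}}(k,T)$ is a state and to extract the spectral lower bound. The key structural fact, recalled in the preamble of Subsection~\ref{subsec_adiabaticresult}, is that $[u(s), \rho^{\mathrm{inv}}(s)] = 0$ for every $s$, which implies that $\rho^{\mathrm{inv}}(s)$ commutes with each spectral projector $p_m(s)$ and admits the block decomposition $\rho^{\mathrm{inv}}(s) = \sum_{m=0}^{z-1} p_m(s)\rho^{\mathrm{inv}}(s) p_m(s)$. I would then observe that for $m \neq 0$, the spectral projector $\eta \mapsto \tr(u(s)^{-m}\eta)\,u(s)^m \rho^{\mathrm{inv}}(s)$ of $\mathcal{L}(s)$ must annihilate the invariant state $\rho^{\mathrm{inv}}(s)$ (since it is associated to an eigenvalue different from $1$). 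This forces $\tr(u(s)^{-m}\rho^{\mathrm{inv}}(s)) = 0$ for $m \neq 0$; expanding $u^{-m} = \sum_l \theta^{-ml} p_l$ and inverting this discrete Fourier relation yields the normalization $\tr(p_l(s)\rho^{\mathrm{inv}}(s)) = 1/z$ for every $l$ and $s$.

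With these facts, \eqref{eq:def_rho_adiab} can be rewritten (using commutation) as
\[ \rho_{\mathrm{adiab}}(k,T) = \sum_{n=0}^{z-1} \tr(p_n(0)\rho^{\mathrm{i}})\,\bigl[ z\, p_{n-k}(\tfrac{k}{T})\rho^{\mathrm{inv}}(\tfrac{k}{T}) p_{n-k}(\tfrac{k}{T}) \bigr], \]
where each bracketed operator is positive of trace one (by the $1/z$ normalization) and the weights form a probability vector since $\sum_n \tr(p_n(0)\rho^{\mathrm{i}}) = 1$; this exhibits $\rho_{\mathrm{adiab}}(k,T)$ as a convex combination of states. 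For the lower bound, I would bound each coefficient below by $c := \min_j \tr(p_j(0)\rho^{\mathrm{i}})$ and collapse the sum using the block decomposition of $\rho^{\mathrm{inv}}$, obtaining $\rho_{\mathrm{adiab}}(k,T) \geq z\,c\, \rho^{\mathrm{inv}}(\tfrac{k}{T})$. Compactness of $[0,1]$ together with continuity and faithfulness of $s \mapsto \rho^{\mathrm{inv}}(s)$ gives $\inf_{s \in [0,1]} \inf \sp \rho^{\mathrm{inv}}(s) > 0$, while faithfulness of $\rho^{\mathrm{i}}$ ensures $c > 0$, completing the proof. The only subtle point is the identity $\tr(p_l(s)\rho^{\mathrm{inv}}(s)) = 1/z$, which is a short consequence of the Fourier structure of the peripheral projectors; I anticipate no other obstacle.
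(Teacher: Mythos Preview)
Your proposal is correct and follows the same overall strategy as the paper: specialize Proposition~\ref{prop_adiabaticlemma} to $\alpha=0$, establish the key normalization $\tr\bigl(p_\ell(s)\rho^{\mathrm{inv}}(s)\bigr)=1/z$, and use it together with $[\rho^{\mathrm{inv}},u]=0$ to verify that $\rho_{\mathrm{adiab}}$ is a state and obtain the spectral lower bound via the operator inequality $\rho_{\mathrm{adiab}}(k,T)\ge z\,c\,\rho^{\mathrm{inv}}(\tfrac{k}{T})$.

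The one tactical difference worth noting is in the proof of $\tr(p_\ell\rho^{\mathrm{inv}})=1/z$. The paper argues dynamically, using the Kraus intertwining $p_\ell V_i = V_i p_{\ell+1}$ (recalled in Appendix~\ref{sec_peripheralspectrum}) together with invariance and trace preservation to show $\tr(\rho^{\mathrm{inv}}p_\ell)=\tr(\rho^{\mathrm{inv}}p_{\ell+1})$ and hence all blocks have equal weight. You instead argue spectrally: the peripheral projectors $\eta\mapsto \tr(u^{-m}\eta)\rho^{\mathrm{inv}}u^m$ for $m\neq 0$ annihilate $\rho^{\mathrm{inv}}$, giving $\tr(u^{-m}\rho^{\mathrm{inv}})=0$, and then discrete Fourier inversion recovers the $1/z$. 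Both arguments are short and equivalent; yours ties the identity more directly to the spectral-projector structure already set up in Subsection~\ref{subsec_adiabaticresult}, while the paper's uses only the Kraus relation and is slightly more elementary. Your convex-combination presentation of $\rho_{\mathrm{adiab}}$ also makes positivity manifest, whereas the paper checks the trace and deduces positivity from the quadratic-form computation used for the lower bound.
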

\begin{proof}
We apply Proposition~\ref{prop_adiabaticlemma} for $\alpha=0$, and use that $I^{(0)}(s)\equiv \id$, $\rho_Y^{(0)}= \rho\invar$, and $\vartheta\ealpha_Y = 0$ which follows from $\tr \rho^{(0)}_Y(s)\equiv 1$. Next, we check the formula $\tr(\rho\invar(\tfrac kT) p_\ell(\tfrac kT)) =\frac{1}{z}$  for each $\ell=0,\dotsc,z-1$. We drop the argument $\tfrac kT$ in what follows, and write $\L_{k/T}(\cdot) = \sum_i V_i \cdot V_i^*$ the Kraus decomposition.
Recalling that $p_\ell V_i = V_i p_{\ell+1}$ for all $i$ and $\ell$ as discussed in Appendix~\ref{sec_peripheralspectrum},
\begin{align*}
   \tr(\rho\invar p_\ell) &= \tr(\L(\rho\invar) p_\ell) = \sum_{i} \tr(V_i \rho\invar V_i^* p_\ell) \\
      &= \sum_i \tr(V_i \rho\invar p_{\ell+1} V_i^*) = \tr(\L(\rho\invar p_{\ell+1})),
\end{align*}
so $\tr(\rho\invar p_\ell) =\tr( \rho\invar p_{\ell+1})$ using that $\L$ is trace-preserving.
As $\sum_\ell \tr(\rho\invar p_\ell) = \tr \rho\invar =1$, we must have $\tr(\rho\invar p_\ell) = \frac{1}{z}$. Therefore,
\begin{align*}
   \tr (\rhoadiab(k,T)) &= z \sum_{n} \tr\big( p_n(0) \rho\init\big)\tr( \rho\invar(\tfrac kT)  p_{n - k}(\tfrac kT))= \sum_{n} \tr\big( p_n(0) \rho\init\big) \\
      &= \tr \rho\init = 1.
\end{align*}
Moreover, given a normalized vector $\psi \in \H$, we have
\begin{align*}
\braket{\psi,\rhoadiab(k,T) \psi} &= z \sum_{n} \tr\big( p_n(0) \rho\init\big) \braket{\psi,\rho\invar(\tfrac kT)  p_{n - k}(\tfrac kT)\psi}\\
&=z \sum_{n} \tr\big( p_n(0) \rho\init\big) \braket{p_{n - k}(\tfrac kT)\psi,\rho\invar(\tfrac kT)  p_{n - k}(\tfrac kT)\psi}.
\end{align*}
using $[\rho\invar(\tfrac kT),p_{n - k}(\tfrac kT)]=0$. Since $\rho\invar(\tfrac kT) > 0$ and $ \tr\big( p_n(0) \rho\init\big) > 0$, each term in the sum is non-negative, and we have
\begin{align*}
\braket{\psi,\rhoadiab(k,T) \psi} &\geq z  \big(\min_{1\leq j \leq z}  \tr\big( p_j(0) \rho\init\big) \big) \sum_n \braket{\psi,\rho\invar(\tfrac kT)  p_{n - k}(\tfrac kT)\psi} \\
&= z  \big(\min_{1\leq j \leq z}  \tr\big( p_j(0) \rho\init\big) \big) \braket{\psi,\rho\invar(\tfrac kT) \psi}\\
&\geq z \big(\min_{1\leq j \leq z}  \tr\big( p_j(0) \rho\init\big) \big) \inf_{s\in [0,1]} \inf \sp \rho\invar(s).\qedhere
\end{align*}
\end{proof}

\begin{remarks}\label{rem:Af=logrhof_unif_bounded}\,\hfill
	\begin{itemize}
		\item Given an \textup{\ref{ADRIS}} the family $(\L(s))_{s \in [0,1]}$ satisfying \textup{\ref{Irr}}, for faithful $\rho\init$
		the state
		$
		\rho\fin_T = \L_{T} \dotsm \L_{1} \rho\init
		$ is faithful for each $T > 1$ (see the remark after Definition~\ref{def_irreducibility}). Corollary \ref{cor:adiab_for_alpha=0} and Weyl's inequalities (see Section III.2 in \cite{Bhatia}) give the stronger result $\inf_{T>1} \inf\sp \rho\fin_T > 0$.
		In particular, we may make the choice $\Af = - \log \rho\fin_T$ which is bounded uniformly in $T$.
		\item If we assume, in the notation of \cite{HJPR1}, that $\rho\init=(P_0^1+Q_0)\rho\init$ (i.e.\ $\rho\init$ has no components corresponding to the peripheral eigenvalues of $\L(0)$ other than $1$), then one can check that $\rhoadiab(k,T)=\rho\invar(\frac kT)$.
	\end{itemize}
\end{remarks}
\section{Special case:  bounded adiabatic entropy production} \label{sec:XisZero}

In this section we consider $\big(\L(s)\big)_{s \in [0,1]}$ satisfying~\ref{ADRIS} and the primitivity assumption~\ref{Prim}, with $X(s) \equiv 0$, where $X(s)$ is defined in \eqref{eq:def-X}.  We specialize to the case $Y(s)=\beta(s)h_\env(s)$ for all $s\in[0,1]$, and thus drop the subscript $Y$ in the notation.
We recall that when $X(s)\equiv 0$, there exists a family $(k_\sys(s))_{s \in [0,1]}$ of observables on $\H_\sys$ satisfying $[k_\sys(s)+h_{\env}(s),U(s)] \equiv 0$. We claim that for any $\alpha\in\C$,
\begin{equation}\label{eq:X-is-zero-syystate-invar-alpha}
\Exp{-\beta(s)(1+\alpha)k_\sys(s)}
\end{equation}
is an invariant for $\L\ealpha(s)$,
the deformation of $\L(s)$ corresponding to $Y(s)=\beta(s) h_\env(s)$.  This follows from the straightforward computation
\begin{align*}
	&\L\ealpha(s) \big(\Exp{-\beta(s)(1+\alpha)k_\sys(s)}\big) \\
   &\quad= \tr_\env \Big(\Exp{\alpha\beta(s) h_\env(s)} U(s) \Big(\Exp{-\beta(s)(1+\alpha)k_\sys(s)}  \otimes \frac{\Exp{-\beta(s) h_\env(s)}}{Z_\beta(s)}\Big)
         \Exp{-\alpha\beta(s) h_\env(s)} U(s)^*\Big) \\
		&\quad = \Exp{-\beta(s)(1+\alpha)k_\sys(s)}.
\end{align*}
Since $\L\ealpha(s)$ is completely positive and irreducible {for $\alpha\in\rr$} (see Appendix~\ref{sec_peripheralspectrum} for details), and $\Exp{-\beta(s)(1+\alpha)k_\sys(s)}$ is positive-definite, $1$ is necessarily the spectral radius of $\L\ealpha(s)$. We therefore have  $\lambda\ealpha(s)=1$, for all $s$ and $\alpha\in \rr$,  and in addition,
\begin{equation} \label{eq_rhoinvalphaXzero}
\sysstate\ealpha(s) = \frac{\Exp{-(1+\alpha)\beta(s)k_\sys(s)}}{\tr\big(\Exp{-(1+\alpha)\beta(s)k_\sys(s)}\big)},
\end{equation}
with $\rho^{(0)}(s)=\rho\invar(s)$, the invariant state of $\L(s)$.
Similarly, using Lemma~\ref{lem:adjoint}, $\Exp{\bar \alpha \beta(s) k_\sys(s)}$ is an invariant for $\L\ealpha(s)^*$, so that for $\alpha\in\rr$
\begin{equation}\label{eq:X-is-zero-X-alpha}
	\invalpha(s) = \frac{\tr\big(\Exp{-(1+\alpha)\beta(s)k_\sys(s)}\big)}{\tr\big(\Exp{-\beta(s)k_\sys(s)}\big)}\,\Exp{\alpha \beta(s) k_\sys(s)}
\end{equation}
and satisfies the normalization condition $\tr\big(\invalpha(s)\sysstate\ealpha(s)\big)\equiv 1$.

\begin{lemma}\label{lem:product-prim-case}
	Under the assumptions \textup{\ref{ADRIS}} and~\textup{\ref{Prim}}, and with $X(s) \equiv 0$, and $\ell'(\alpha)  \in (0,1)$, $C(\alpha)  \in \rr_+$, $T_0(\alpha)\in\nn$ as in Proposition \ref{prop_adiabaticlemma}, for all $\alpha \in \rr$ and $T\geq T_0(\alpha)$, all $\eta\in\mathcal I_1(\H)$,
	\begin{equation*}
		\big\|\L\ealpha(\tfrac TT) \cdots  \L\ealpha(\tfrac 1T) \eta -
		\tr(\rho\invar(0)^{-\alpha}\eta)\,\rho\invar(1)^{1+\alpha}\big\|
      \leq  \frac{C(\alpha) }{T(1-\ell'(\alpha) )})+ C(\alpha)(\ell'(\alpha))^T,
	\end{equation*}
\end{lemma}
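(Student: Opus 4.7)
The plan is to apply Proposition~\ref{prop_adiabaticlemma} in the primitive case with $k=T$, and then identify the resulting limiting expression with the one stated in the lemma by an explicit computation of the phase factor $\vartheta\ealpha$, exploiting the formulas \eqref{eq_rhoinvalphaXzero}--\eqref{eq:X-is-zero-X-alpha} available when $X\equiv 0$.

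Under \ref{Prim} the order of the peripheral spectrum is $z(s)\equiv 1$, so $u(s)\equiv\id$ and $p_0(s)\equiv\id$; the sum in Proposition~\ref{prop_adiabaticlemma} collapses to a single term. The discussion preceding the lemma further gives $\lambda\ealpha(s)\equiv 1$, so $\tilde\L\ealpha(s)=\L\ealpha(s)$. Applying Proposition~\ref{prop_adiabaticlemma} with $k=T$ and extending by linearity from an initial state to an arbitrary trace-class $\eta$ yields, for $T\geq T_0(\alpha)$,
\[\big\|\L\ealpha(\tfrac TT)\cdots\L\ealpha(\tfrac 1T)\eta-\e^{-\vartheta\ealpha}\tr(\invalpha(0)\eta)\,\rho\ealpha(1)\big\|\leq\frac{C(\alpha)}{T(1-\ell'(\alpha))}+C(\alpha)\ell'(\alpha)^T.\]
Setting $Z^{(0)}(s)=\tr\e^{-\beta(s)k_\sys(s)}$, $Z\ealpha(s)=\tr\e^{-(1+\alpha)\beta(s)k_\sys(s)}$ and $c(s)=Z\ealpha(s)/(Z^{(0)}(s))^{1+\alpha}$, formulas~\eqref{eq_rhoinvalphaXzero}--\eqref{eq:X-is-zero-X-alpha} rewrite as $\invalpha(s)=c(s)\rho\invar(s)^{-\alpha}$ and $\rho\ealpha(s)=c(s)^{-1}\rho\invar(s)^{1+\alpha}$, so the identity to be proved reduces to $\e^{-\vartheta\ealpha}=c(1)/c(0)$.

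The one genuine step, which I expect to be the main obstacle, is therefore the computation of $\vartheta\ealpha=\int_0^1\tr(\invalpha(s)\partial_s\rho\ealpha(s))\d s$. At each fixed $s$ the operators $\invalpha(s)$ and $\rho\ealpha(s)$ are both functions of $k_\sys(s)$, hence commute, and the explicit formulas moreover give $\invalpha(s)\rho\ealpha(s)=\rho\invar(s)$. Writing $\rho\ealpha=\e^{F}/Z\ealpha$ with $F(s)=-(1+\alpha)\beta(s)k_\sys(s)$ and combining Duhamel's formula for $\partial_s\e^{F}$ with cyclicity and $[\invalpha,F]=0$, one collapses the integrand to $\tr(\invalpha(s)\partial_s\rho\ealpha(s))=\tr\bigl((\rho\invar(s)-\rho\ealpha(s))F'(s)\bigr)$. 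Since $\tr(\rho\ealpha F')=\partial_s\log Z\ealpha$ and $\tr(\rho\invar\partial_s(\beta k_\sys))=-\partial_s\log Z^{(0)}$, integration over $[0,1]$ then yields
\[\vartheta\ealpha=(1+\alpha)\log\frac{Z^{(0)}(1)}{Z^{(0)}(0)}-\log\frac{Z\ealpha(1)}{Z\ealpha(0)}=\log\frac{c(0)}{c(1)},\]
as required. Conceptually, the underlying reason this works out is that $\rho\invar(s)^{1+\alpha}$ and $\rho\invar(s)^{-\alpha}$ are right and left eigenvectors of $\L\ealpha(s)$ for the eigenvalue $1$ with the natural pairing $\tr(\rho\invar^{-\alpha}\rho\invar^{1+\alpha})=1$ and a vanishing adiabatic phase, so the factor $\e^{-\vartheta\ealpha}$ in Proposition~\ref{prop_adiabaticlemma} simply absorbs the normalization mismatch with the trace-normalized choice of $\rho\ealpha$ and $\invalpha$.
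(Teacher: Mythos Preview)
Your proof is correct and follows essentially the same route as the paper: apply Proposition~\ref{prop_adiabaticlemma} in the primitive case with $\lambda\ealpha\equiv1$, then compute $\vartheta\ealpha$ explicitly via Duhamel's formula, cyclicity, and the commutation $[\invalpha,F]=0$, obtaining the same expression $(1+\alpha)\log\frac{Z^{(0)}(1)}{Z^{(0)}(0)}-\log\frac{Z\ealpha(1)}{Z\ealpha(0)}$. Your introduction of the shorthand $c(s)=Z\ealpha(s)/(Z^{(0)}(s))^{1+\alpha}$ and the reduction to $\e^{-\vartheta\ealpha}=c(1)/c(0)$ is a tidy repackaging of the paper's computation, and your closing remark about the unnormalized eigenvectors $\rho\invar^{1+\alpha}$, $\rho\invar^{-\alpha}$ having trivial adiabatic phase is a nice conceptual gloss that the paper leaves implicit.
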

\begin{proof}
	First note that in the primitive case, $z \equiv 1$ so that $\theta = 1$ and only the term with $m = 0$ is present. In addition, as we have proved above, for $X(s)\equiv 0$ one has $\lambda\ealpha(s)\equiv 1$ and therefore $\tilde \L\ealpha(s)=\L\ealpha(s)$. Proposition~\ref{prop_adiabaticlemma} together with expressions \eqref{eq_rhoinvalphaXzero} and \eqref{eq:X-is-zero-X-alpha} then yield
	\begin{align*}
		&\Big\|\L\ealpha(\tfrac TT) \cdots  \L\ealpha(\tfrac 1T) \eta \\
         &\qquad\qquad {} -\e^{-\vartheta\ealpha}  \frac{\tr\big(\Exp{-(1+\alpha)\beta(0)k_\sys(0)}\big)\tr\big(\Exp{\alpha \beta(0) k_\sys(0)}   \eta\big)}{\tr\big(\Exp{-(1+\alpha)\beta(1)k_\sys(1)}\big)\tr\big(\Exp{-\beta(0)k_\sys(0)}\big)} \, \Exp{-(1+\alpha)\beta(1)k_\sys(1)} \Big\|\\
			&\hspace{0.5\textwidth} \leq \frac{C(\alpha) }{T(1-\ell'(\alpha) )})+ C(\alpha)(\ell'(\alpha))^T.
	\end{align*}
	with
		\begin{align*}
		\vartheta\ealpha&=\int_0^1 \tr\Big( \frac{\tr\big(\Exp{-\beta(s)(1+\alpha)k_\sys(s)}\big)}{\tr\big(\Exp{-\beta(s)k_\sys(s)}\big)}\,\Exp{\alpha \beta(s) k_\sys(s)} \,\frac\d{\d s}\frac{\Exp{-\beta(s)(1+\alpha)k_\sys(s)}}{\tr(\Exp{-\beta(s)(1+\alpha)k_\sys(s)})}\Big)\d s\\
		&=  \int_0^1  \frac{\tr(\Exp{-\beta(s)(1+\alpha)k_\sys(s)})}{\tr(\Exp{-\beta(s)k_\sys(s)})}\tr\Big( \Exp{\alpha \beta(s) k_\sys(s)} \,\frac\d{\d s}\frac{\Exp{-\beta(s)(1+\alpha)k_\sys(s)}}{\tr(\Exp{-\beta(s)(1+\alpha)k_\sys(s)})}\Big)\d s\\
		&=  \int_0^1  \big(\tr(\Exp{-\beta(s)k_\sys(s)})\big)^{-1}\tr\Big( \Exp{\alpha \beta(s) k_\sys(s)} \,\frac\d{\d s}\Exp{-\beta(s)(1+\alpha)k_\sys(s)}\Big)\d s\\
		&\qquad - \int_0^1 \big(\tr(\Exp{-\beta(s)(1+\alpha)k_\sys(s)})\big)^{-1}\frac\d{\d s}\tr\big( {\Exp{-\beta(s)(1+\alpha)k_\sys(s)}}\big)\d s.
		\end{align*}
	Thanks to the general formula  $\frac{\d}{\d s} \Exp{A(s)}=\int_0^1 \Exp{x A(s)}\frac{\d}{\d s} A(s)\, \Exp{(1-x) A(s)}\d x$
	and to the cyclicity of the trace, we have
\begin{align*}
	\tr\big( \Exp{\alpha \beta(s) k_\sys(s)} \,\frac\d{\d s}\Exp{-\beta(s)(1+\alpha)k_\sys(s)}\big)&= -(1+\alpha) \tr\Big(\frac{\d }{\d s}\big(\beta(s) k_\sys(s)\big)\e^{-\beta(s)k_\sys(s)}\Big)\\
	&= (1+\alpha) \frac{\d }{\d s}\tr(\e^{-\beta(s)k_\sys(s)}),
\end{align*}
so that
	\begin{equation*}
	\vartheta\ealpha =(1+\alpha)\int_0^1 \frac{\d}{\d s} \log \tr(\e^{-\beta(s)k_\sys(s)}) \,\d s  - \int_0^1 \frac{\d}{\d s} \log \tr(\e^{-(1+\alpha)\beta(s)k_\sys(s)}) \,\d s
	\end{equation*}
and
	\begin{equation*}
	\e^{-\vartheta\ealpha}= \frac{\big(\tr(\e^{-\beta(0) k_\sys(0)})\big)^{1+\alpha} \,\tr(\e^{-(1+\alpha)\beta(1) k_\sys(1)})}{\big(\tr(\e^{-\beta(1) k_\sys(1)})\big)^{1+\alpha} \,\tr(\e^{-(1+\alpha)\beta(0) k_\sys(0)})}.
	\end{equation*}
The rest of the proof is obtained by direct computation.
\end{proof}
Recall that for $\sysstate\init$, $\sysstate\fin$ two faithful states, $\Ai = -\log \sysstate\init$, $\Af = -\log \sysstate\fin$ and $Y(s) = \beta(s) h_\env(s)$, we have the decomposition $\varsigma_{T} = -\Delta s_{\sys,T} + \Delta s_{\env,T}$ (see~\eqref{eq:decomp-entropy-prod-traj}).
\begin{theorem}\label{thm:X-is-0-conv-MGF}
	 Under the assumptions \textup{\ref{ADRIS}} and~\textup{\ref{Prim}}, and with $X(s) \equiv 0$, the distribution of the pair $(\Delta s_{\sys,T},\Delta s_{\env,T})$ converges weakly to a probability measure characterized by its moment generating function
	\[M_{(\Delta s_{\env},\Delta s_{\sys})}(\alpha_1,\alpha_2)=\tr\big(\rho\invar(0)^{-\alpha_1}(\rho\init)^{1-\alpha_2}\big)\,\tr\big(\rho\invar(1)^{1+\alpha_1+\alpha_2}\big).\]
	This probability measure has finite support, contained in the set
	\begin{equation}
	\begin{aligned}\label{eq_supportrandomvariable}
	&\big(\log \sp \rho\invar(1)-\log \sp \rho\invar(0)\big)\times\big(\log \sp \rho\invar(1)-\log \sp \rho\init\big) =\\
	& \big\{(\log r_1-\log r_0,\log r_1'-\log r^{\mathrm{i}})\,|\, r_1,r'_1\in\sp \rho\invar(1), r_0\in \sp\rho\invar(0), r^{\mathrm{i}} \in \sp \rho\init\big\}.
	\end{aligned}
	\end{equation}
In particular, the limiting moment generating function $M_\varsigma(\alpha):=\lim_{T\to\infty}M_{\varsigma_{T}}(\alpha)$ satisfies
\[\log M_\varsigma(\alpha)=S_{-\alpha}(\sysstate\invar(0) | \sysstate\init),\]
where $S_\alpha$ denotes the (unnormalized) R\'enyi relative entropy
$$
S_\alpha(\eta | \zeta) := \log \tr(\eta^\alpha \zeta^{1-\alpha}).
$$
\end{theorem}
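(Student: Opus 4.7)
The plan is to combine the moment generating function formula of Proposition \ref{prop:phiTalpha} with the adiabatic convergence of Lemma \ref{lem:product-prim-case} and Corollary \ref{cor:adiab_for_alpha=0}. With $\Ai = -\log \sysstate\init$, which commutes with $\sysstate\init$, and $\Af = -\log \sysstate\fin_T$, which is well-defined by Remark \ref{rem:Af=logrhof_unif_bounded}, the hypotheses of the joint version of Proposition \ref{prop:phiTalpha} are met, yielding
\begin{equation*}
   M_T(\alpha_1,\alpha_2) := \E_T\bigl(\e^{\alpha_1 \Delta s_{\env,T} + \alpha_2 \Delta s_{\sys,T}}\bigr)
   = \tr\Bigl((\sysstate\fin_T)^{\alpha_2}\, \L^{(\alpha_1)}(\tfrac{T}{T}) \cdots \L^{(\alpha_1)}(\tfrac{1}{T})\bigl((\sysstate\init)^{1-\alpha_2}\bigr)\Bigr).
\end{equation*}

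For $(\alpha_1,\alpha_2) \in \R^2$, I would apply Lemma \ref{lem:product-prim-case} with $\eta = (\sysstate\init)^{1-\alpha_2}$ (meaningful because $\sysstate\init$ is faithful) to obtain
\begin{equation*}
   \L^{(\alpha_1)}(\tfrac{T}{T}) \cdots \L^{(\alpha_1)}(\tfrac{1}{T})\bigl((\sysstate\init)^{1-\alpha_2}\bigr) \longrightarrow \tr\bigl(\sysstate\invar(0)^{-\alpha_1}(\sysstate\init)^{1-\alpha_2}\bigr)\, \sysstate\invar(1)^{1+\alpha_1}
\end{equation*}
in trace norm as $T \to \infty$. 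In parallel, Corollary \ref{cor:adiab_for_alpha=0} in the primitive case ($z \equiv 1$) gives $\sysstate\fin_T \to \sysstate\invar(1)$; the uniform lower bound $\inf_T \inf \sp \sysstate\fin_T > 0$ from Remark \ref{rem:Af=logrhof_unif_bounded}, combined with continuity of the map $A \mapsto A^{\alpha_2}$ on positive-definite operators with spectrum bounded away from $0$, then yields $(\sysstate\fin_T)^{\alpha_2} \to \sysstate\invar(1)^{\alpha_2}$ in norm. Passing to the limit in the trace and using that powers of $\sysstate\invar(1)$ commute then gives the claimed formula for $M_{(\Delta s_\env, \Delta s_\sys)}$.

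To conclude weak convergence and identify the support, I would expand the limiting MGF using spectral decompositions of $\sysstate\invar(0)$, $\sysstate\init$ and $\sysstate\invar(1)$; writing $r_0^{-\alpha_1}(r^{\mathrm{i}})^{1-\alpha_2} r_1^{1+\alpha_1+\alpha_2} = r^{\mathrm{i}} r_1\, \e^{\alpha_1(\log r_1 - \log r_0) + \alpha_2(\log r_1 - \log r^{\mathrm{i}})}$ exhibits $M$ as a finite sum of exponentials with nonnegative weights that, by taking $(\alpha_1,\alpha_2) = (0,0)$, sum to $M(0,0) = 1$. Hence $M$ is the moment generating function of a probability measure with finite support contained in the set~\eqref{eq_supportrandomvariable}, and pointwise convergence $M_T \to M$ on $\R^2$ to the MGF of a compactly supported law gives weak convergence of $(\Delta s_{\sys,T}, \Delta s_{\env,T})$ (e.g.\ by Curtiss' theorem in two dimensions, or equivalently via Lévy's continuity theorem applied to the restriction to pure imaginary arguments). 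Finally, specializing $(\alpha_1,\alpha_2) = (\alpha,-\alpha)$ in $M$ corresponds via~\eqref{eq:decomp-entropy-prod-traj} to $M_\varsigma(\alpha)$; this collapses the second factor to $\tr \sysstate\invar(1) = 1$ and produces $M_\varsigma(\alpha) = \tr\bigl(\sysstate\invar(0)^{-\alpha}(\sysstate\init)^{1+\alpha}\bigr)$, whose logarithm is $S_{-\alpha}(\sysstate\invar(0)|\sysstate\init)$ by definition of the Rényi relative entropy. The only delicate step is the justification of exchanging limit and trace (in particular the convergence $(\sysstate\fin_T)^{\alpha_2} \to \sysstate\invar(1)^{\alpha_2}$ for arbitrary real $\alpha_2$), and this rests entirely on the uniform positivity provided by Remark \ref{rem:Af=logrhof_unif_bounded}.
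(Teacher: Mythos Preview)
Your proposal is correct and follows essentially the same route as the paper: apply Proposition~\ref{prop:phiTalpha} to express the joint MGF, use Lemma~\ref{lem:product-prim-case} on $\eta=(\sysstate\init)^{1-\alpha_2}$ for the product of deformed maps, and use the $\alpha=0$ case (the paper invokes Lemma~\ref{lem:product-prim-case} at $\alpha=0$ rather than Corollary~\ref{cor:adiab_for_alpha=0}, but these coincide here) to get $\sysstate\fin_T\to\sysstate\invar(1)$ and hence $(\sysstate\fin_T)^{\alpha_2}\to\sysstate\invar(1)^{\alpha_2}$. Your write-up is in fact more explicit than the paper's on the support identification and on the continuity step for $(\sysstate\fin_T)^{\alpha_2}$, but the argument is the same.
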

\begin{proof}
	By a direct application of Proposition \ref{prop:phiTalpha}, for all $\alpha_1,\alpha_2\in\rr$ we have
\[M_{(\Delta s_{\env,T},\Delta s_{\sys,T})}(\alpha_1,\alpha_2) = \tr \big(\Exp{-\alpha_2 \Af} \L^{(\alpha_1)}(\tfrac{T}{T})  \dotsm \L^{(\alpha_1)}(\tfrac{1}{T}) (\Exp{+\alpha_2 \Ai}\rho\init)\big),\]
so that by Lemma \ref{lem:product-prim-case}
% \[\lim_{T\to\infty }\big|M_{(\Delta s_{\env,T},\Delta s_{\sys,T})}(\alpha_1,\alpha_2) - \tr(\rho\invar(0)^{-\alpha_1}(\sysstate\init)^{1-\alpha_2})\,\tr\big(\rho\invar(1)^{1+\alpha_1}(\sysstate\fin)^{\alpha_2}\big) \big|=0,\]
$$
   \big|M_{(\Delta s_{\env,T},\Delta s_{\sys,T})}(\alpha_1,\alpha_2) - \tr(\rho\invar(0)^{-\alpha_1}(\sysstate\init)^{1-\alpha_2})\,\tr\big(\rho\invar(1)^{1+\alpha_1}(\sysstate\fin)^{\alpha_2}\big) \big|
$$
converges to 0 as $T \to\infty$
and again from Lemma \ref{lem:product-prim-case} with $\alpha=0$, $\lim_{T\to\infty}\sysstate\fin=\rho\invar(1)$ and the latter state is faithful. This shows that the moment generating function converges as $T\to\infty$ for all $(\alpha_1,\alpha_2)$, to the desired identity. By the results in Section~30 of~\cite{Bill}, this shows the convergence in distribution of the pair~$(\Delta s_{\env,T},\Delta s_{\sys,T})$.
\end{proof}
\begin{remark}  \label{remark_formulasigmatot}
	Relation \eqref{eq:sprod-avg}, and the fact that the derivative of $S_{-\alpha}(\eta | \zeta)$  is the relative entropy $S(\eta|\zeta)=\tr\big(\eta(\log \eta-\log \zeta)\big)$ imply in particular (again see Section 30 of \cite{Bill}) that under the assumptions \ref{ADRIS} and~\ref{Prim}, and with $X(s) \equiv 0$,
	\[\lim_{T\to\infty} \oldsigmaT = S(\rho\invar(0)|\rho\init).\]
	Theorem \ref{thm:X-is-0-conv-MGF} therefore gives us a refinement of the results of \cite{HJPR1}, where an explicit expression of the limit was missing.
	Remark also that the quantity $S_{-\alpha}(\sysstate\invar(0) | \sysstate\init)$ can be expressed as the cumulant generating function of an explicit distribution related to the relative modular operator for $\sysstate\invar(0)$ and $\sysstate\init$ (see e.g.\ Chapter 2 in \cite{JOPP}).
\end{remark}

\begin{corollary}
	Under the assumptions of Theorem \ref{thm:X-is-0-conv-MGF}, if in addition $\rho\init=\rho\invar(0)$ then the limiting distribution for $(\Delta s_{\env,T},\Delta s_{\sys,T})$ has support on the diagonal  and equivalently the limiting distribution for $\varsigma_{T}$ is a Dirac measure at zero. If we write the spectral decomposition of $\rho\invar(0)$, $\rho\invar(1)$ as
	\[ \rho\invar(0)= \sum_j \tau_j(0)\, \pi_j(0)\qquad \rho\invar(1)= \sum_j \tau_j(1)\, \pi_j(1)\]
	then the limiting distribution for $\Delta s_{\env,T}$ gives the following weight to $s\in\rr$
	\[\sum_{k,j} \tr\big(\rho\invar(0) \pi_j(0)\big)\tr\big(\rho\invar(1) \pi_k(1)\big) \ind_s\big(\log \tau_k(1)-\log \tau_j(0)\big),\]
	where $\ind_s(t)=1$ if $s=t$ and $0$ otherwise.
\end{corollary}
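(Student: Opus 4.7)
The plan is to apply Theorem~\ref{thm:X-is-0-conv-MGF} directly with $\rho\init=\rho\invar(0)$, and then extract the three claims from simple substitutions in the limiting bivariate moment generating function
\[
M(\alpha_1,\alpha_2)=\tr\big(\rho\invar(0)^{-\alpha_1}(\rho\init)^{1-\alpha_2}\big)\,\tr\big(\rho\invar(1)^{1+\alpha_1+\alpha_2}\big).
\]
Substituting $\rho\init=\rho\invar(0)$ collapses the first trace and yields $M(\alpha_1,\alpha_2)=\tr\big(\rho\invar(0)^{1-\alpha_1-\alpha_2}\big)\,\tr\big(\rho\invar(1)^{1+\alpha_1+\alpha_2}\big)$, which depends on the two parameters only through their sum.

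For the first assertion, I would use the decomposition $\varsigma_T=-\Delta s_{\sys,T}+\Delta s_{\env,T}$ recalled in~\eqref{eq:decomp-entropy-prod-traj} to identify the limiting MGF of $\varsigma_T$ as $M_\varsigma(\alpha)=M(\alpha,-\alpha)=\tr(\rho\invar(0))\tr(\rho\invar(1))=1$. Since both traces are trivially~$1$, the limiting MGF is the constant $1$, i.e. the MGF of $\delta_0$, so $\varsigma_T$ converges weakly to the Dirac mass at zero. Since the random variables $\Delta s_{\sys,T}$ and $\Delta s_{\env,T}$ have supports uniformly bounded in $T$ (the latter because the observables $Y_k=\beta(k/T)\henv(k/T)$ are uniformly bounded under \ref{ADRIS}, and the former by Remark~\ref{rem:Af=logrhof_unif_bounded}), their joint law is tight and the weak limit $\mu$ given by Theorem~\ref{thm:X-is-0-conv-MGF} pushes forward to $\delta_0$ under $(x,y)\mapsto x-y$; this forces $\supp\mu$ to be contained in the diagonal $\{x=y\}$.

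For the explicit marginal of $\Delta s_{\env,T}$, I would set $\alpha_2=0$ and expand both traces in the spectral decompositions $\rho\invar(0)=\sum_j\tau_j(0)\pi_j(0)$ and $\rho\invar(1)=\sum_k\tau_k(1)\pi_k(1)$. Using $\dim\pi_j(0)=\tr(\rho\invar(0)\pi_j(0))/\tau_j(0)$ and the analogous identity at $s=1$, this gives
\[
\lim_{T\to\infty}M_{\Delta s_{\env,T}}(\alpha)=\sum_{j,k}\tr\big(\rho\invar(0)\pi_j(0)\big)\,\tr\big(\rho\invar(1)\pi_k(1)\big)\,\e^{\alpha(\log\tau_k(1)-\log\tau_j(0))},
\]
which is the MGF of the claimed discrete measure. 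The total mass reduces to $\tr(\rho\invar(0))\tr(\rho\invar(1))=1$, confirming that it is a probability measure, and uniqueness of the MGF on finitely supported measures identifies it with the weak limit.

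The only subtlety is the passage from the convergence $M_{\varsigma_T}\to 1$ to the diagonal-support statement for the joint limit; once uniform boundedness (and hence tightness) is invoked, this is immediate. All other steps are direct algebraic manipulations of the formula supplied by Theorem~\ref{thm:X-is-0-conv-MGF}, so I do not anticipate any real obstacle.
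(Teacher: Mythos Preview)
Your approach matches the paper's: both specialize the limiting MGF from Theorem~\ref{thm:X-is-0-conv-MGF} and read off the claims. The paper is terser, invoking directly $\log M_\varsigma(\alpha)=S_{-\alpha}(\rho\invar(0)|\rho\init)\equiv 0$ and leaving the identification of the marginal ``by inspection''; your spectral expansion makes this explicit and is correct.

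There is one actual slip. Your claim that $\Delta s_{\env,T}$ has support uniformly bounded in $T$ is false: it is a sum of $T$ terms each bounded by $\sup_s\|\beta(s)h_\env(s)\|$, hence a priori of order $T$. Fortunately you do not need this. Theorem~\ref{thm:X-is-0-conv-MGF} already hands you the weak limit $\mu$ of the pair, with finite support and MGF $M(\alpha_1,\alpha_2)$. The pushforward of $\mu$ under $(x,y)\mapsto x-y$ therefore has MGF $M(\alpha,-\alpha)=1$, and since $\mu$ is finitely supported this uniquely identifies the pushforward as $\delta_0$; hence $\supp\mu\subset\{x=y\}$. No tightness or boundedness argument is required, so simply delete that sentence and appeal directly to the finite support of $\mu$.
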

\begin{proof}
	If $\rho\init=\rho\invar(0)$ then with the notation of Theorem \ref{thm:X-is-0-conv-MGF}, one has $\log M_\varsigma(\alpha)\equiv0$, so that the limiting distribution for $\varsigma_{T}$ is a Dirac measure at zero. In addition, the limiting moment generating function for $\Delta s_{\env,T}$ is
	\[\tr\big(\rho\invar(0)^{1-\alpha}\big)\tr\big(\rho\invar(1)^{1+\alpha}\big)\]
and the expression of the corresponding distribution follows by inspection.
\end{proof}
\begin{example} \label{example:RWA}
Let us recall the simplest non-trivial RIS, which is considered in  \cite[Example 6.1]{HJPR1}, for which the system and probes are 2-level systems, with $\H_\sys = \H_\env = \C^2$, along with Hamiltonians $h_\sys := E a^*a$ and $\henv[k] \equiv \henv  := E_0 b^*b$ where $a/a^*$ (resp. $b/b^*$) are the Fermionic annihilation/creation operators for $\sys$ (resp. $\env$), with $E, E_0 > 0$  constants with units of energy. As matrices in the (ground state, excited state) bases $\{|0\rangle, |1\rangle\}$ for $\sys$ and $\env$, we write
\[
a=  b= \begin{pmatrix}
0 & 1 \\ 0 & 0
\end{pmatrix}, \quad a^* = b^* = \begin{pmatrix}
0 & 0 \\ 1 & 0
\end{pmatrix}, \qquad a^*a=b^*b = \begin{pmatrix}
0 & 0 \\ 0 &1
\end{pmatrix}.
\]
We consider a constant potential $v_\text{RW}\in \B(\H_\sys \otimes \H_\env)$,
\[
v_\text{RW} = \frac{\mu_1}{2}(a^* \otimes b + a \otimes b^*)
\]
where $\mu_1 =1$ with units of energy. Given $s\mapsto \beta(s) \in [0,1]$ a $C^2$ curve of inverse probe temperatures, an interaction time $\tau>0$ and coupling constant $\lambda>0$, we let
\[U = \exp\big(-\i \tau (\hsys + \henv + \lambda v_\text{RW})\big).\]
Then $\sp \L(s)$ is independent of $s$, with $1$ as a simple eigenvalue with eigenvector $$\rho\invar(s) = \exp(-\beta^*(s) \hsys)/\tr(\exp(-\beta^*(s) \hsys))$$ for $\beta^*(s) = \frac{E_0}{E}\beta(s)$.

With $\nu := \sqrt{(E-E_0)^2 + \lambda^2}$,  the assumption $\nu\tau\not \in 2\pi \Z$ yields that $\L(s)$ is primitive, and moreover, the fact that $[v_\text{RW}, a^*a +  b^*b] = 0$ yields $X(s)\equiv 0$. Here, we may take $k_\sys \equiv \frac{E_0}{E}\hsys$ independently of $s$, which satisfies $[k_\sys + \henv, U ] \equiv 0$.

We choose an initial system state $\rho\init >0$, and
set $Y(s) := \beta(s) \henv$, $\Ai := \log \rho\init$, $\Af := \log \rho\fin_T$, for $\rho\fin_T := \L(\tfrac TT) \dotsm \L(\tfrac 1T) \rho\init$. By considering the forward and backward processes of Section \ref{subsec:FS_two_time_measurements}, we define the forward  (resp. backward) probability distribution $\bP^F_{T}$ (resp. $\bP^B_{T}$), and the entropy production $\varsigma_{T} = \log \frac{\bP^F_{T}}{\bP^B_{T}}$ on $\Omega = \sp \rho\init \times \sp \rho \fin \times \{0,1\}^T \times \{0,1\}^T$.
Then Theorem \ref{thm:X-is-0-conv-MGF} yields the asymptotic moment generating function of $\varsigma_{T}$:
let $\rho\init=r_0 |v_0\rangle\langle v_0|+r_1 |v_1\rangle\langle v_1|>0$ be the spectral decomposition of the initial state, then
\begin{equation}
\begin{split}
\lim_{T\to\infty}M_{\varsigma_{T}}(\alpha) =
(1+e^{-\beta(0)E_0})^\alpha\Big(&r_0^{1+\alpha}(|\langle 0|v_0\rangle|^2+|\langle 1|v_0\rangle|^2e^{\alpha\beta(0)E_0})\\
      &\,\, + r_1^{1+\alpha}(|\langle 0|v_1\rangle|^2+|\langle 1|v_1\rangle|^2e^{\alpha\beta(0)E_0})\Big).
\end{split}
\end{equation}
\end{example}

\section{General case: large deviations and the central limit theorem} \label{sec:Xnonzero}

In this section we consider a general observable $Y(s)$ with $[Y(s),h_\env(s)]\equiv 0$. We will prove a large deviation principle, and essentially deduce from it a law of large numbers and a central limit theorem.

Our main technical tool will be Proposition \ref{prop_adiabaticlemma}, together with the following result:
\begin{lemma}\label{lem:prod-is-well-behaved}
	Under the assumptions \textup{\ref{ADRIS}} and \textup{\ref{Irr}} with $z(s) \equiv z$, for any faithful initial state~$\sysstate\init > 0$, for any~$\alpha$ in $\rr$ we have
   \begin{equation*}
      0 < \liminf_{T\to\infty}\tr\big(\tilde \L_Y\ealpha(\tfrac TT) \dotsb \tilde \L_Y\ealpha(\tfrac 1T)\rho\init\big)  \limsup_{T\to\infty}  \tr\big(\tilde \L_Y\ealpha(\tfrac TT) \dotsb \tilde \L_Y\ealpha(\tfrac 1T)\rho\init\big)<\infty.
   \end{equation*}
\end{lemma}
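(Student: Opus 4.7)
The plan is to apply Proposition~\ref{prop_adiabaticlemma} at $k=T$ and take the trace on both sides. This yields
\begin{equation*}
\tr\big(\tilde \L_Y\ealpha(\tfrac TT) \dotsb \tilde \L_Y\ealpha(\tfrac 1T)\rho\init\big) = z\,\e^{-\vartheta\ealpha_Y} \sum_{m=0}^{z-1} \tr\big(\invalpha(0) p_m(0) \rho\init\big) \, \tr\big(\rho\ealpha_Y(1) p_{m-T}(1)\big) + R_T,
\end{equation*}
where the error satisfies $|R_T|\leq \frac{C(\alpha)}{T(1-\ell'(\alpha))} + C(\alpha)\ell'(\alpha)^T$, and so $R_T \to 0$ as $T\to\infty$. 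It then remains to show that the main term is bounded above and below by strictly positive constants, uniformly in $T$.

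The scalar prefactor $z\,\e^{-\vartheta\ealpha_Y}$ is finite and strictly positive: $\vartheta\ealpha_Y$ is the integral over the compact interval $[0,1]$ of the continuous function $s\mapsto \tr\big(\invalpha_Y(s)\,\partial_s\rho\ealpha_Y(s)\big)$, continuity following from the $C^2$ regularity of $\invalpha_Y$ and $\rho\ealpha_Y$ already used in the proof of Proposition~\ref{prop_adiabaticlemma}. The heart of the argument will be the positivity of each summand. Since $\rho\init$ is faithful, the operator $A := (\rho\init)^{1/2}\,\invalpha(0)\,(\rho\init)^{1/2}$ is positive-definite, and for every nonzero orthogonal projector $p$ on $\H_\sys$ one has $\tr(Ap) \geq \lambda_{\min}(A)\,\tr p > 0$. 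Applied to each $p_m(0)$ (nonzero because the $p_m(0)$ are pairwise orthogonal and sum to $\id$), cyclicity of the trace gives $\tr\big(\invalpha(0) p_m(0) \rho\init\big) > 0$. The same argument, applied to the faithful state $\rho\ealpha_Y(1)$, gives $\tr\big(\rho\ealpha_Y(1)\,p_n(1)\big) > 0$ for every $n\in\{0,\ldots,z-1\}$.

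Finally, I would observe that the only $T$-dependence in the main sum is through the cyclic shift of the index in $p_{m-T}(1)$, interpreted modulo $z$. Reindexing $n = (m-T) \bmod z$, the sum becomes
\begin{equation*}
   \sum_{n=0}^{z-1} \tr\big(\invalpha(0) \,p_{(n+T)\bmod z}(0)\, \rho\init\big)\,\tr\big(\rho\ealpha_Y(1)\, p_n(1)\big),
\end{equation*}
which takes at most $z$ distinct values as $T$ ranges over $\nn$ (indexed by $T \bmod z$), each a sum of strictly positive numbers. Consequently $\liminf_{T\to\infty}$ and $\limsup_{T\to\infty}$ of this sum are respectively the minimum and maximum over these finitely many strictly positive values, both finite and positive; combined with $R_T\to 0$, this yields the conclusion. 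The substantive technical content lives in Proposition~\ref{prop_adiabaticlemma}; once that is in hand, the main thing to verify here is that no cancellation can reduce the principal term to zero, which is precisely what faithfulness of $\rho\init$ and $\rho\ealpha_Y(1)$ prevents.
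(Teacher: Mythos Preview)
Your proof is correct and follows essentially the same approach as the paper: apply Proposition~\ref{prop_adiabaticlemma} at $k=T$, take the trace, and argue that the leading term is strictly positive while the remainder vanishes. The only minor difference is that you invoke faithfulness of $\rho\ealpha_Y(1)$ (available from Proposition~\ref{prop_periphspectrumPhialpha}) to get strict positivity of \emph{every} factor $\tr\big(\rho\ealpha_Y(1)\,p_n(1)\big)$, whereas the paper uses only that $\rho\ealpha_Y(1)$ is a state to conclude positivity for \emph{some} $n$ (the remaining terms being nonnegative since $[\rho\ealpha_Y(1),p_n(1)]=0$); your explicit reindexing by $T\bmod z$ is also a welcome clarification of why the $\liminf$ is attained among finitely many positive values.
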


\begin{proof}
	By Proposition~\ref{prop_adiabaticlemma},
	\begin{equation}\label{controltrace}
      \begin{split}
          \tr \big(\tilde \L_Y\ealpha(\tfrac TT) \dotsb \tilde \L_Y\ealpha(\tfrac 1T)\rho\init\big)  &= z\e^{-\vartheta\ealpha_Y}   \sum_{n=0}^{z-1} \tr\big(\invalpha_Y(0) p_n(0) \rho\init\big)  \tr\big(\rho\ealpha_Y(1) p_{n - T}(1)\big)
            \\ &\qquad
            + O\left(\frac{C}{T(1-\ell')}\right)+ O(C{\ell '}^T).
      \end{split}
	\end{equation}
	Because $\invalpha_Y(0)$, $\sysstate\init$, are strictly positive matrices we have $\tr\big(\invalpha_Y(0) p_n(0) \rho\init\big)>0$ for all $n$, and because $\sysstate_Y\ealpha(1)$ is a trace one non-negative matrix,$$ \tr\big(\rho\ealpha_Y(1) p_{n - T}(1)\big)>0$$ for some $n$.
		By strict positivity of~$\e^{-\vartheta\ealpha_Y}$, the leading term in (\ref{controltrace}) is therefore strictly positive.
	 This proves the lower bound, whereas the upper bound follows from continuity of the operator-valued maps~$(\alpha, s) \mapsto \invalpha_Y(s), \sysstate_Y\ealpha(s)$.
\end{proof}

\begin{lemma} \label{lemma_cvgCGF}
	Under the assumptions \textup{\ref{ADRIS}} and \textup{\ref{Irr}} with $z(s) \equiv z$, for any faithful initial state~$\sysstate\init > 0$, for any~$\alpha \in \rr$, the moment generating function of the random variable~$\rvY$ with respect to~$\pp^F_{T}$ satisfies
	\begin{equation} \label{eq_defLambdaalpha}
		\lim_{T \to \infty} \frac{1}{T} \log M_{\rvY}(\alpha) = \int_0^1 \log \lambda_Y^{(\alpha)}(s) \d s=: \Lambda_Y(\alpha).
	\end{equation}
\end{lemma}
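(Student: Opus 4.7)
The plan is to combine Proposition \ref{prop:phiTalpha} (which expresses the MGF as a trace of a product of $\L_Y\ealpha(s_k)$ applied to a pinched initial state) with Lemma \ref{lem:prod-is-well-behaved} (which controls the normalized product) and a Riemann sum convergence for the leading exponential factor. Concretely, by Proposition \ref{prop:phiTalpha},
\[
M_{\rvY}(\alpha) = \tr_\sys\bigl(\L_Y\ealpha(\tfrac{T}{T})\cdots\L_Y\ealpha(\tfrac{1}{T})\,\tilde\rho\init\bigr),\qquad \tilde\rho\init := \sum_{\ai}\pi\init_{\ai}\rho\init\pi\init_{\ai}.
\]
Factoring out the spectral radius via $\L_Y\ealpha(s) = \lambda_Y\ealpha(s)\,\tilde\L_Y\ealpha(s)$, this becomes
\[
M_{\rvY}(\alpha) = \Bigl(\prod_{k=1}^{T}\lambda_Y\ealpha(\tfrac{k}{T})\Bigr)\,\tr_\sys\bigl(\tilde\L_Y\ealpha(\tfrac{T}{T})\cdots\tilde\L_Y\ealpha(\tfrac{1}{T})\,\tilde\rho\init\bigr).
\]
Taking the logarithm and dividing by $T$, the first factor contributes a Riemann sum for $\log\lambda_Y\ealpha(\cdot)$, while the second, by Lemma \ref{lem:prod-is-well-behaved}, contributes $O(1/T)$, so it suffices to handle each piece separately.

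I would first observe that $\tilde\rho\init$ is faithful whenever $\rho\init$ is: for any unit vector $v$,
\[
\langle v,\tilde\rho\init v\rangle = \sum_{\ai}\langle \pi\init_{\ai}v,\rho\init\,\pi\init_{\ai}v\rangle \geq \lambda_{\min}(\rho\init)\sum_{\ai}\|\pi\init_{\ai}v\|^2 = \lambda_{\min}(\rho\init),
\]
since the $\pi\init_{\ai}$ are orthogonal projectors summing to $\id$. This allows me to apply Lemma \ref{lem:prod-is-well-behaved} with $\tilde\rho\init$ in place of $\rho\init$, concluding that $\tr_\sys(\tilde\L_Y\ealpha(\tfrac{T}{T})\cdots\tilde\L_Y\ealpha(\tfrac{1}{T})\,\tilde\rho\init)$ is bounded above and below by positive constants uniformly in $T$ (once $T$ is large enough, depending on $\alpha$). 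Hence
\[
\frac{1}{T}\log \tr_\sys\bigl(\tilde\L_Y\ealpha(\tfrac{T}{T})\cdots\tilde\L_Y\ealpha(\tfrac{1}{T})\,\tilde\rho\init\bigr) = O(1/T) \xrightarrow[T\to\infty]{} 0.
\]

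For the leading term, I use that by Lemma \ref{lem:C2-reg} together with standard analytic perturbation theory for the simple eigenvalue $\lambda_Y\ealpha(s)$ of $\L_Y\ealpha(s)$ (cf.\ the discussion preceding Proposition \ref{prop_adiabaticlemma}), the map $s\mapsto \lambda_Y\ealpha(s)$ is $C^2$ and strictly positive on $[0,1]$ (positivity follows from the fact that $\lambda_Y\ealpha(s)$ is the Perron spectral radius of the irreducible completely positive map $\L_Y\ealpha(s)$, itself a deformation of an irreducible CPTP map). Therefore $s\mapsto\log\lambda_Y\ealpha(s)$ is continuous on $[0,1]$, and the Riemann sum converges:
\[
\frac{1}{T}\sum_{k=1}^{T}\log\lambda_Y\ealpha(\tfrac{k}{T}) \xrightarrow[T\to\infty]{} \int_0^1 \log\lambda_Y\ealpha(s)\,\d s.
\]
Adding the two contributions yields the claimed identity~\eqref{eq_defLambdaalpha}.

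There is no serious obstacle: the work was done in assembling the adiabatic expansion (Proposition \ref{prop_adiabaticlemma}) and the uniform non-degeneracy (Lemma \ref{lem:prod-is-well-behaved}). The only minor point to verify carefully is that the pinching $\tilde\rho\init$ remains faithful so that Lemma \ref{lem:prod-is-well-behaved} applies as stated; this is the one-line computation above.
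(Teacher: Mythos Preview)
Your proof is correct and follows essentially the same approach as the paper: factor out the spectral radii, apply Lemma~\ref{lem:prod-is-well-behaved} to kill the normalized trace term, and finish with the Riemann sum. You are in fact slightly more careful than the paper, which silently writes $\rho\init$ instead of the pinched state $\tilde\rho\init$ from Proposition~\ref{prop:phiTalpha}; your one-line verification that $\tilde\rho\init$ is faithful is exactly what is needed to make Lemma~\ref{lem:prod-is-well-behaved} apply as stated.
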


\begin{proof}\label{lem:log-conv}
	By Lemma~\ref{lem:prod-is-well-behaved},
	$$
		\lim_{T \to \infty} \frac{1}{T} \log \tr\big(\tilde \L_Y\ealpha(\tfrac TT) \dotsb \tilde \L_Y\ealpha(\tfrac 1T)(\rho\init)\big) = 0.
	$$
	But the moment generating function reads
	\begin{equation*}
		M_{\rvY} = \tr\big( \L_Y\ealpha(\tfrac TT) \dotsb \L_Y\ealpha(\tfrac 1T)\rho\init\big) =\Big( \prod_{k=1}^T \lambda_Y\ealpha(\tfrac{k}{T}) \Big) \tr\big(\tilde \L_Y\ealpha(\tfrac TT) \dotsb \tilde \L_Y\ealpha(\tfrac 1T)(\rho\init)\big)
	\end{equation*}
	by definition of $\tilde \L_Y\ealpha(s)$. Hence, the result follows from the Riemann sum convergence
	\begin{align*}
		\lim_{T \to \infty} \frac{1}{T} \log \Big( \prod_{k=1}^T \lambda_Y\ealpha(\tfrac{k}{T}) \Big)
		&= \lim_{T \to \infty} \sum_{k=1}^T(\tfrac{k}{T} - \tfrac{k-1}{T}) \log  \lambda_Y\ealpha(\tfrac{k}{T}) \\
		&= \int_0^1 \log \lambda_Y^{(\alpha)}(s) \d s. \qedhere
	\end{align*}
\end{proof}

\begin{remark} \label{remark_sigmasE}
	Lemma~\ref{lemma_cvgCGF} also holds for e.g.\ the random variable $\Delta a_T(\omega)+\rvY$ in place of~$\rvY$ because Lemma~\ref{lem:prod-is-well-behaved} holds with additional factors of~$\Exp{-\alpha \Ai}$ and~$\Exp{\alpha \Af}$ inside the trace.
   Alternatively, one may remark that~$\rvY(\omega)$ and~$\rvZ(\omega)$ only differ by a uniformly bounded term~$\rvW(\omega)$.
\end{remark}
The regularity of $\Lambda$, and the value of its first and second derivatives at zero, are relevant to the asymptotic behaviour of $\rvY$. We therefore give the following simple lemma.
\begin{lemma} \label{lemma_differentialsLambda}
	Assume \textup{\ref{ADRIS}} and \textup{\ref{Irr}}. Then the function $\Lambda_Y$ is twice continuously differentiable on $\rr$, with
	\begin{equation}
   \begin{split}
	\Lambda'_Y(0) &=\int_0^1 \frac{\partial \lambda_Y\ealpha}{\partial \alpha}(s)|_{\alpha=0}\,\d s, \\
	\Lambda''_Y(0) &=\int_0^1 \Big(\frac{\partial^2 \lambda_Y\ealpha}{\partial \alpha^2}(s)|_{\alpha=0}-\big(\frac{\partial \lambda_Y\ealpha}{\partial \alpha}(s)|_{\alpha=0}\big)^2\Big) \,\d s, \label{eq_LambdaLambdaprime}
   \end{split}
   \end{equation}
	and
	\begin{equation}
	\begin{aligned}\label{eq_lambdaprime1}
	\frac{\partial \lambda_Y\ealpha}{\partial \alpha}(s)|_{\alpha=0} &= \sum_{i,j}(y_j-y_i) \,\tr\big(K_{i,j}(s) \rho\invar(s) K_{i,j}^*(s)\big),
	\\
	\frac{\partial^2 \lambda_Y\ealpha}{\partial \alpha^2}(s)|_{\alpha=0} &= \sum_{i,j}(y_j-y_i)^2 \,\tr\big(K_{i,j}(s) \rho\invar(s) K_{i,j}^*(s)\big) \\
      &\qquad + 2\sum_{i,j}(y_j-y_i) \,\tr\big(K_{i,j}(s) \eta(s) K_{i,j}^*(s)\big)
	\end{aligned}
	\end{equation}
	where $\eta(s)$ is the unique solution with zero trace of
   \begin{align*}
      \big(\id - \L(s)\big) (\eta)
         &= \sum_{i,j}(y_j-y_i) \,K_{i,j}(s) \rho\invar(s) K_{i,j}^*(s) \\
            &\qquad{} - \sum_{i,j}(y_j-y_i) \,\tr\big(K_{i,j}(s) \rho\invar(s) K_{i,j}^*(s)\big) \rho\invar(s).
   \end{align*}
	In particular, for $Y(s)=\beta(s)h_\env(s)$, one has
	\begin{equation}\label{eq_lambdaprime2}
		\frac{\partial \lambda\ealpha}{\partial \alpha}(s)|_{\alpha=0}
		=\beta(s)\,\tr\Big(X(s) \big(\id\otimes h_\env(s)\big)\Big).
	\end{equation}
\end{lemma}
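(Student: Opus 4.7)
The plan is to combine the joint $C^2$ regularity of $(\alpha,s) \mapsto \L_Y^{(\alpha)}(s)$ from Lemma~\ref{lem:C2-reg} with the simplicity, isolation, and strict positivity of the eigenvalue $\lambda_Y^{(\alpha)}(s)$ recorded in Proposition~\ref{prop_periphspectrumPhialpha}. By Kato's Riesz-projection formula, these properties force $(\alpha,s) \mapsto \lambda_Y^{(\alpha)}(s)$, $\rho_Y^{(\alpha)}(s)$, and $\invalpha_Y(s)$ to be $C^2$ on $\R \times [0,1]$, with $C^2$-norms bounded uniformly in $s$ on compact subsets of $\R$ in $\alpha$; in particular $\log \lambda_Y^{(\alpha)}(s)$ is $C^2$ and dominated convergence justifies differentiation under the integral defining $\Lambda_Y$. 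Expressing $(\log \lambda_Y^{(\alpha)})'$ and $(\log \lambda_Y^{(\alpha)})''$ in the usual way and specializing to $\alpha=0$, where $\lambda_Y^{(0)}(s) \equiv 1$, collapses the logarithmic corrections and yields \eqref{eq_LambdaLambdaprime}.

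Next I would compute $\partial_\alpha \lambda_Y^{(\alpha)}(s)|_{\alpha=0}$ and $\partial_\alpha^2 \lambda_Y^{(\alpha)}(s)|_{\alpha=0}$ by Rayleigh--Schr\"odinger perturbation theory about the simple eigenvalue $1$ of $\L(s)$, whose right eigenvector is $\rho\invar(s)$ and whose left eigenfunctional is the trace (since $\mathrm{I}^{(0)}_Y(s) = \id$). The Kraus form of Lemma~\ref{lem:contr-CP} makes the first and second $\alpha$-derivatives of $\L_Y^{(\alpha)}$ at $\alpha=0$ explicit, and the first-order formula $\partial_\alpha \lambda_Y^{(\alpha)}|_{\alpha=0} = \tr\bigl((\partial_\alpha \L_Y^{(\alpha)}|_{\alpha=0})(\rho\invar)\bigr)$ immediately delivers the first line of \eqref{eq_lambdaprime1}. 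For the second derivative I would normalize $\rho_Y^{(\alpha)}(s)$ by $\tr\rho_Y^{(\alpha)}(s) \equiv 1$, differentiate $\L_Y^{(\alpha)} \rho_Y^{(\alpha)} = \lambda_Y^{(\alpha)} \rho_Y^{(\alpha)}$ once to identify $\partial_\alpha \rho_Y^{(\alpha)}|_{\alpha=0}$ as the unique trace-zero solution $\eta(s)$ of the equation in the statement, and then differentiate once more before applying $\tr$: trace-preservation of $\L(s)$ kills the $\partial_\alpha^2 \rho_Y^{(\alpha)}$ contribution and produces the second line of \eqref{eq_lambdaprime1}. Unique solvability for $\eta(s)$ follows from \ref{Irr}, which ensures that $\ker(\id - \L(s)) = \C\,\rho\invar(s)$ and $\mathrm{Ran}(\id-\L(s)) = \{\eta \in \I_1(\H_\sys) : \tr \eta = 0\}$, so that the trace-zero condition pins down a unique representative modulo the kernel.

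For the specialization $Y(s) = \beta(s) h_\env(s)$, I would derive \eqref{eq_lambdaprime2} not from the Kraus form but by differentiating the integral representation \eqref{eq_defLk_alpha} directly: using $[Y(s),\xi(s)]=0$, cyclicity of the full trace, and $U(s)^*U(s) = \id$, one gets
\[
\tr\bigl((\partial_\alpha \L_Y^{(\alpha)}|_{\alpha=0})(\rho\invar(s))\bigr) = \tr\bigl((\id_\sys \otimes Y(s))\,[U(s)(\rho\invar(s) \otimes \xi(s))U(s)^* - \rho\invar(s) \otimes \xi(s)]\bigr),
\]
which equals $\tr((\id_\sys \otimes Y(s))\,X(s))$ by the very definition of $X(s)$, and becomes \eqref{eq_lambdaprime2} upon substituting $Y(s) = \beta(s) h_\env(s)$. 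The main technical subtlety I anticipate is justifying the uniform-in-$s$ $C^2$ control on $\log \lambda_Y^{(\alpha)}(s)$ needed for twofold differentiation under the integral over all of $\R$; this is however routine once one invokes Kato's Riesz-projection representation together with compactness of $[0,1]$ and the strict positivity and isolation of $\lambda_Y^{(\alpha)}(s)$ granted by Proposition~\ref{prop_periphspectrumPhialpha}.
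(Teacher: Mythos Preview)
Your proposal is correct and follows essentially the same route as the paper: regularity of $\lambda_Y^{(\alpha)}(s)$ via perturbation theory of the simple isolated eigenvalue, differentiation under the integral by dominated convergence, and the formulas \eqref{eq_lambdaprime1} by expanding the eigenvalue equation $\L_Y^{(\alpha)}\rho_Y^{(\alpha)}=\lambda_Y^{(\alpha)}\rho_Y^{(\alpha)}$ to second order using the normalization $\tr\rho_Y^{(\alpha)}\equiv 1$ and trace-preservation of $\L(s)$. The paper's proof is a terse sketch of exactly these steps; your derivation of \eqref{eq_lambdaprime2} via the integral representation of $\L_Y^{(\alpha)}$ and the definition of $X(s)$ simply fills in a detail the paper leaves implicit.
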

\begin{proof}
	That $\Lambda_Y$ is twice continuously differentiable is clear from the expression \eqref{eq_defLambdaalpha} and the fact that $(\alpha,s)\mapsto \lambda_Y\ealpha(s)$ is $C^2$ in $s$ and analytic in $\alpha$, bounded and bounded away from zero. The expressions \eqref{eq_LambdaLambdaprime} follow from the dominated convergence theorem. The expressions \eqref{eq_lambdaprime1} are obtained by an explicit expansion to second order in $\alpha$ of the relation $\L_Y\ealpha(\rho\ealpha_Y)=\lambda_Y\ealpha\rho\ealpha_Y$, together with the fact that $\tr(\rho\ealpha_Y)~\equiv~1$. Last, remark that $\eta(s)$ is uniquely determined as $1$ is a simple eigenvalue of $\L(s)$, and the associated eigenvectors have nonzero trace.
\end{proof}

The values of the derivatives of $\Lambda_Y$ at $\pm\infty$ will also be relevant. We have:
\begin{lemma} \label{lemma_drvminmaxLambdaY}
	Assume \textup{\ref{ADRIS}} and \textup{\ref{Irr}}. Denote by
	\begin{gather*}
		\nu_{Y,+}(s)=\max\{y_j(s)-y_i(s)\,|\, K_{i,j}(s)\neq 0\}, \\
		\nu_{Y,-}(s)=\min\{y_j(s)-y_i(s)\,|\, K_{i,j}(s)\neq 0\}.
	\end{gather*}
	Then
	\[\lim_{\alpha\to\pm\infty}\Lambda_Y'(\alpha)=\nu_{Y,\pm}:=\int_0^1 \nu_{Y,\pm}(s)\,\d s.\]
\end{lemma}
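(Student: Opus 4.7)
The plan is to exchange limit and integral in
\[
\Lambda_Y'(\alpha) = \int_0^1 \partial_\alpha \log \lambda_Y^{(\alpha)}(s)\, \d s,
\]
which follows from Lemma~\ref{lemma_differentialsLambda}, by establishing the pointwise convergence $\partial_\alpha \log \lambda_Y^{(\alpha)}(s) \to \nu_{Y,+}(s)$ as $\alpha \to +\infty$ and then applying dominated convergence. The case $\alpha \to -\infty$ will be handled by symmetry, via the substitution $(Y, \alpha) \mapsto (-Y, -\alpha)$, under which $\L_{-Y}^{(-\alpha)} = \L_Y^{(\alpha)}$ and $\nu_{-Y,+} = -\nu_{Y,-}$.

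For the pointwise statement, I would first record that $\alpha \mapsto \log \lambda_Y^{(\alpha)}(s)$ is convex---either via H{\"o}lder's inequality applied to the Perron identity $\lambda_Y^{(\alpha)}(s) = \tr \L_Y^{(\alpha)}(s)(\rho_Y\ealpha(s))$, or equivalently by log-convexity of moment generating functions through Lemma~\ref{lemma_cvgCGF}---so that $\alpha \mapsto \partial_\alpha \log \lambda_Y^{(\alpha)}(s)$ is non-decreasing in $\alpha$ and admits a (possibly extended real) limit $L(s)$ as $\alpha \to +\infty$. Using trace preservation $\sum_{i,j} K_{i,j}(s)^* K_{i,j}(s) = \id$ in the Kraus expansion~\eqref{eq_KrausLalpha}, one has, for any state $\rho$ and any $\alpha \geq 0$,
\[
\tr \L_Y^{(\alpha)}(s)(\rho) = \sum_{i,j} \e^{\alpha(y_j(s) - y_i(s))} \tr\!\big(K_{i,j}(s)^* K_{i,j}(s)\, \rho\big) \leq \e^{\alpha \nu_{Y,+}(s)}.
\]
Applied to the Perron eigenvector $\rho_Y\ealpha(s)$ (normalized to trace one), this yields $\lambda_Y^{(\alpha)}(s) \leq \e^{\alpha \nu_{Y,+}(s)}$, so that $L(s) \leq \nu_{Y,+}(s)$ by convexity together with $\log \lambda_Y^{(0)}(s) = 0$.

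For the matching lower bound, the strategy is to factor $\L_Y^{(\alpha)}(s) = \e^{\alpha \nu_{Y,+}(s)} \hat\L^{(\alpha)}(s)$ with
\[
\hat\L^{(\alpha)}(s)(\eta) := \sum_{i,j} \e^{-\alpha(\nu_{Y,+}(s) - (y_j(s) - y_i(s)))} K_{i,j}(s)\, \eta\, K_{i,j}(s)^*,
\]
so that $\hat\L^{(\alpha)}(s)$ tends in operator norm, as $\alpha \to +\infty$, to the completely positive map $\hat\L^{(+\infty)}(s) := \sum_{\{(i,j):\, y_j(s) - y_i(s) = \nu_{Y,+}(s)\}} K_{i,j}(s) \cdot K_{i,j}(s)^*$. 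By continuity of the spectral radius on a finite-dimensional space, $\spr \hat\L^{(\alpha)}(s) \to \spr \hat\L^{(+\infty)}(s)$. If one can show $\spr \hat\L^{(+\infty)}(s) > 0$, then $\log \spr \hat\L^{(\alpha)}(s) = O(1)$, and hence $\log \lambda_Y^{(\alpha)}(s)/\alpha \to \nu_{Y,+}(s)$; an elementary property of bounded convex functions approaching a linear asymptote then forces the slopes $\partial_\alpha \log \lambda_Y^{(\alpha)}(s)$ to converge to $\nu_{Y,+}(s)$.

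The main obstacle is therefore to establish the positivity of $\spr \hat\L^{(+\infty)}(s)$, i.e., that the completely positive map built from only the maximal-weight Kraus operators is not quasi-nilpotent. I would address it through a Collatz--Wielandt-type lower bound, exploiting the irreducibility \ref{Irr} of $\L(s)$: using the faithful invariant state $\rho\invar(s)$, one shows that some iterate $(\hat\L^{(+\infty)}(s))^n$ maps a suitable strictly positive operator $X_*$ to something dominating $c_* X_*$ for a constant $c_* > 0$, yielding $\spr \hat\L^{(+\infty)}(s) \geq c_*^{1/n} > 0$. Once the pointwise convergence is in hand, the uniform bound $|\partial_\alpha \log \lambda_Y^{(\alpha)}(s)| \leq \max_{s \in [0,1]} \max(|\nu_{Y,+}(s)|, |\nu_{Y,-}(s)|)$---finite by continuity of $\nu_{Y,\pm}$ on the compact interval $[0,1]$---validates dominated convergence and completes the proof.
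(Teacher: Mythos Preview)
Your approach is essentially the paper's, and your factorization $\L_Y^{(\alpha)}(s)=\e^{\alpha\nu_{Y,+}(s)}\hat\L^{(\alpha)}(s)$ with $\hat\L^{(\alpha)}(s)\to\hat\L^{(+\infty)}(s)$ is exactly the content of Proposition~\ref{prop_asymptoticssprphialpha}. The paper organizes things slightly more economically: since $\Lambda_Y$ is itself convex and everywhere differentiable, one has directly $\lim_{\alpha\to\pm\infty}\Lambda_Y'(\alpha)=\lim_{\alpha\to\pm\infty}\Lambda_Y(\alpha)/\alpha$, so it suffices to pass the pointwise limit $\frac1\alpha\log\lambda_Y^{(\alpha)}(s)\to\nu_{Y,\pm}(s)$ through the integral---no need to control the derivative $\partial_\alpha\log\lambda_Y^{(\alpha)}(s)$ separately or invoke pointwise convexity of $\alpha\mapsto\log\lambda_Y^{(\alpha)}(s)$. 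Your route works too, but is a bit more laborious for the same payoff.

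The one step that is genuinely incomplete is your argument for $\spr\hat\L^{(+\infty)}(s)>0$. The Collatz--Wielandt sketch you give, invoking irreducibility of $\L(s)$ and its faithful invariant state, does not go through: irreducibility of a CPTP map places no constraint on a CP map built from only a \emph{subset} of its Kraus operators, and such a sub-map can perfectly well be nilpotent (a single off-diagonal Kraus operator $|e_1\rangle\langle e_2|$ already gives one). There is no mechanism by which $\rho\invar(s)$ produces a lower bound $(\hat\L^{(+\infty)}(s))^n X_*\geq c_* X_*$ when only the maximal-weight $K_{i,j}$ are retained. The paper is equally terse on this point---its proof of Proposition~\ref{prop_asymptoticssprphialpha} simply writes ``Remark that $\lambda_\pm\neq0$''---so you are not deviating from the paper, but you should be aware that the justification you offer does not actually establish the claim. (A minor side remark: $s\mapsto\nu_{Y,\pm}(s)$ need not be continuous, since the index set $\{(i,j):K_{i,j}(s)\neq0\}$ can vary with $s$; boundedness, which is all dominated convergence needs, follows instead from the cruder bound $|\nu_{Y,\pm}(s)|\le\max_{i,j}|y_j(s)-y_i(s)|$.)
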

\begin{proof}
	Because $\Lambda_Y$ is convex and everywhere differentiable, $\lim_{\alpha\to\pm\infty} \Lambda_Y'(\alpha)= \lim_{\alpha\to\pm\infty} \frac1\alpha \Lambda_Y(\alpha)$. Besides, it follows immediately from Proposition \ref{prop_asymptoticssprphialpha} that
	\[\lim_{\alpha\to\pm\infty} \frac1\alpha{\log \lambda_Y\ealpha(s)}=\nu_{Y,\pm}(s). \qedhere \]
\end{proof}

The above technical results allow us to give a large deviation principle for $\Delta s_{\env,T}$ or, equivalently, for $\varsigma_{T}$. In the statement below we denote, for $E$ a subset of $\rr$, by $\inte E$ and $\clos E$ its interior and closure respectively.
\begin{theorem} \label{theo_ldp}
	Assume \textup{\ref{ADRIS}} and \textup{\ref{Irr}}, and that the initial state~$\sysstate\init$ is faithful. Let $\Lambda_Y$ be defined by relation \eqref{eq_defLambdaalpha} and denote by $\Lambda_Y^*$ the Fenchel--Legendre transform of $\Lambda_Y$, i.e.\ for $x\in\rr$ let
	\[\Lambda_Y^*(x)=\sup_{\alpha\in\rr}\big(\alpha x - \Lambda_Y(\alpha)\big).\]
Then $\Lambda_Y^*(x)=+\infty$ for $x\not\in [\nu_{Y,-},\nu_{Y,+}]$, and for any Borel set $E$ of $\rr$ one has
\begin{align*}
   -\inf_{x\in \inte E} \Lambda_Y^*(x) &\leq \liminf_{T\to\infty} \frac1T \log \bP^F_{T}\big(\frac{\rvY}T \in \inte E\big) \\
      &\leq \limsup_{T\to\infty} \frac1T \log \bP^F_{T}\big(\frac{\rvY}T \in \clos E\big)\leq -\inf_{x\in \clos E} \Lambda_Y^*(x).
\end{align*}
The same statement holds with $\rvZ$ in place of $\rvY$. In particular, for $Y=\beta h_\env$, one has
\begin{align*}
   -\inf_{x\in \inte E} \Lambda^*(x) &\leq \liminf_{T\to\infty} \frac1T \log \bP^F_{T}\big(\frac{\varsigma_T}T \in \inte E\big) \\
      &\leq \limsup_{T\to\infty} \frac1T \log \bP^F_{T}\big(\frac{\varsigma_T}T \in \clos E\big)\leq -\inf_{x\in \clos E} \Lambda^*(x)
\end{align*}
and the same statement holds with $\Delta s_{\env,T}$ in place of $\varsigma_T$.
\end{theorem}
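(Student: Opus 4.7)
The plan is to reduce everything to a direct application of the Gärtner--Ellis theorem (as in, e.g., Theorem 2.3.6 of Dembo--Zeitouni) applied to $\rvY/T$ under $\bP^F_T$. By Lemma \ref{lemma_cvgCGF}, the scaled logarithmic moment generating function $\frac{1}{T}\log M_{\rvY}(\alpha)$ converges pointwise on $\R$ to the finite limit $\Lambda_Y(\alpha)=\int_0^1 \log\lambda_Y^{(\alpha)}(s)\,\d s$; here finiteness comes from continuity of $\alpha\mapsto \lambda_Y^{(\alpha)}(s)$ and compactness of $[0,1]$, together with the fact that $\lambda_Y^{(\alpha)}(s)>0$ (since $\L_Y^{(\alpha)}(s)$ is positivity-improving on a positive cone, cf.\ Appendix \ref{sec_peripheralspectrum}). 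Lemma \ref{lemma_differentialsLambda} yields that $\Lambda_Y$ is $C^2$ on $\R$, hence in particular differentiable; as the effective domain of $\Lambda_Y$ is the whole real line, essential smoothness reduces to differentiability and holds trivially. Gärtner--Ellis then gives the two claimed bounds for $E$ a Borel set of $\R$, with rate function $\Lambda_Y^*$.

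Next, I would address the location of the support of $\Lambda_Y^*$ using Lemma \ref{lemma_drvminmaxLambdaY}: since $\Lambda_Y'(\alpha)\to \nu_{Y,+}$ as $\alpha\to+\infty$, we have $\Lambda_Y(\alpha)/\alpha\to\nu_{Y,+}$, whence for $x>\nu_{Y,+}$ the quantity $\alpha x-\Lambda_Y(\alpha)=\alpha(x-\Lambda_Y(\alpha)/\alpha)$ tends to $+\infty$ as $\alpha\to+\infty$, forcing $\Lambda_Y^*(x)=+\infty$. The case $x<\nu_{Y,-}$ is symmetric with $\alpha\to-\infty$.

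Finally, to pass from $\rvY$ to $\rvZ=-\Delta a_T+\rvY$, I would invoke Remark \ref{remark_sigmasE}: Lemma \ref{lemma_cvgCGF} extends verbatim to $\rvZ$, giving the same limit $\Lambda_Y$ (the prefactors $\e^{\pm\alpha \Ai/\Af}$ inside the trace are bounded uniformly in $T$ by the assumption that $\Af$ has uniformly bounded norm, and hence do not affect the exponential rate). Gärtner--Ellis applied directly to $\rvZ/T$ then gives the analogous LDP with the same rate function $\Lambda_Y^*$. Equivalently, one may argue by exponential equivalence since $|\rvZ-\rvY|=|\Delta a_T|$ is $L^\infty$-bounded uniformly in $T$, so $|\rvZ/T-\rvY/T|\to 0$ uniformly. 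Specializing to $Y(s)=\beta(s)h_\env(s)$, we have $\rvY=\Delta s_{\env,T}$ by construction and $\rvZ=\varsigma_T$ by the trajectorial balance equation \eqref{eq:decomp-entropy-prod-traj}, and $\Lambda_Y$ specializes to $\Lambda$. This yields the final two statements.

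The main obstacle is purely verificational: ensuring that the technical hypotheses of Gärtner--Ellis (finiteness and differentiability of $\Lambda_Y$ on all of $\R$) are in place, which has already been done through Lemmas \ref{lemma_cvgCGF} and \ref{lemma_differentialsLambda}. No genuinely new ingredient beyond these appears to be required.
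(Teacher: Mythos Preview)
Your proposal is correct and follows essentially the same approach as the paper: invoke Lemma~\ref{lemma_cvgCGF} for the limit, Lemma~\ref{lemma_differentialsLambda} for differentiability (hence essential smoothness, since the effective domain is all of~$\rr$), apply G\"artner--Ellis, use Lemma~\ref{lemma_drvminmaxLambdaY} to locate where $\Lambda_Y^*=+\infty$, and extend to $\rvZ$ via Remark~\ref{remark_sigmasE}. The paper's proof is simply a terser version of yours.
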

\begin{proof}
	From Lemma \ref{lemma_differentialsLambda}, $\Lambda_Y$ is continuously differentiable on $\rr$ and its derivative takes values in $[\nu_{Y,-},\nu_{Y,+}]$. The definition of $\Lambda_Y^*$ implies that it is $+\infty$ outside this interval. The rest of the statement follows from the G\"artner--Ellis theorem (Theorem 2.3.6 in \cite{DZ}) because $\Lambda_Y$, being differentiable everywhere, is (in the language of \cite{DZ}) essentially smooth. That the same statement holds with $\rvZ$ in place of~$\rvY$ follows from Remark~\ref{remark_sigmasE}.
\end{proof}
\begin{remarks}\,\hfill
	\begin{itemize}
	\item In the case of $Y(s)=\beta(s)h_\env(s)$, and under the assumption \eqref{TRI}, the symmetry \eqref{eq_symlambda} in Lemma \ref{lemma_symmetries} is at the very heart of the Gallavotti--Cohen Theorem relating in a parameter free formulation the probabilities to observe opposite signs entropies. Indeed, it implies that $\Lambda$ is symmetric about the $\alpha=-1/2$ axis. A direct computation shows that
	\begin{equation} \label{eq_relationGC}
		\Lambda^*(x)=x+\Lambda^*(-x).
	\end{equation}
	A consequence of Theorem \ref{theo_ldp} together with this equality is that if e.g.\ $\Lambda''(0)\neq 0$,
	\[\lim_{\delta\to 0}\lim_{T\to\infty} \frac1T \log \frac{\pp(+\Delta s_{\env,T}\in[s-\delta,s+\delta])}{\pp(-\Delta s_{\env,T}\in[s-\delta,s+\delta])} = -s.\]
	This is obtained by observing that in the present case, $\Lambda$ is analytic in a neighbourhood of the real axis, and so is $\Lambda^*$ if $\Lambda$ is strictly convex.  See e.g.\ \cite{EHM,JOPP,CJPS1} for more information on the role of symmetries such as \eqref{eq_relationGC}.
	\item	In the case of $Y(s)=\beta(s)h_\env(s)$, and under the assumption that $X(s)\equiv 0$, we have observed in Section \ref{sec:XisZero} that $\lambda\ealpha(s)=1$ for all $\alpha\in\rr$ and $s\in[0,1]$. In that case $\Lambda(\alpha)\equiv 0$ and
	\[\Lambda^*(x) = \left\{
	\begin{array}{cl}
		0 & \mbox{ if } x=0,\\
		+\infty & \mbox{ otherwise.}
	\end{array}
	\right.\]
	The above large deviation statement therefore gives a concentration of~$\tfrac 1T \varsigma_{T}$ or $\tfrac 1T {\Delta s_{\env,T}}$ at zero which is faster than exponential.
	\end{itemize}
\end{remarks}
A first consequence is a result similar to a law of large numbers for~$\rvY$:
\begin{corollary} \label{coro_LLN}
	Under the same assumptions as in Theorem \ref{theo_ldp}, for all $\epsilon>0$ there exists $r_\epsilon>0$ such that for $T$ large enough
	\begin{equation}  \label{eq_expcvg}
	\pp^F_{T}\big(|\frac1T \,\rvY - \Lambda_Y'(0)|>\epsilon\big)\leq \exp -r_\epsilon T.
	\end{equation}
\end{corollary}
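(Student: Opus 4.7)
The plan is to deduce this exponential concentration inequality directly from the upper bound in the large deviation principle established in Theorem~\ref{theo_ldp}, applied to a suitable closed set that avoids $\Lambda_Y'(0)$.

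First I would observe that $\big\{\big|\tfrac{1}{T}\rvY - \Lambda_Y'(0)\big| > \epsilon\big\} = \big\{\tfrac{1}{T}\rvY \in E_\epsilon\big\}$, where $E_\epsilon := \rr \setminus \big(\Lambda_Y'(0) - \epsilon, \Lambda_Y'(0) + \epsilon\big)$ is closed. The upper bound in Theorem~\ref{theo_ldp} then gives
\[
\limsup_{T\to\infty} \frac{1}{T} \log \pp^F_T\big(\tfrac{1}{T}\rvY \in E_\epsilon\big) \leq -\inf_{x \in E_\epsilon} \Lambda_Y^*(x) =: -r'_\epsilon,
\]
so the task reduces to verifying that $r'_\epsilon > 0$, after which one may set $r_\epsilon := r'_\epsilon / 2$ and absorb the error in the $\limsup$ into the condition that $T$ be large enough.

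To show $r'_\epsilon > 0$, I would argue as follows. Because $\lambda_Y^{(0)}(s) = 1$ is the spectral radius of the CPTP map $\L(s)$ for every $s$, we have $\Lambda_Y(0) = \int_0^1 \log \lambda_Y^{(0)}(s)\, \d s = 0$. The function $\Lambda_Y$ is convex on $\rr$ (as a limit of normalized logarithms of moment generating functions, or directly from H\"older), and differentiable at $0$ by Lemma~\ref{lemma_differentialsLambda}. Consequently its Fenchel--Legendre transform $\Lambda_Y^*$ is non-negative, convex and lower semicontinuous, satisfies $\Lambda_Y^*(\Lambda_Y'(0)) = 0 \cdot \Lambda_Y'(0) - \Lambda_Y(0) = 0$, and, since the subdifferential $\partial \Lambda_Y(0)$ reduces to the singleton $\{\Lambda_Y'(0)\}$, vanishes only at the point $\Lambda_Y'(0)$.

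Theorem~\ref{theo_ldp} furthermore tells us that $\Lambda_Y^*(x) = +\infty$ outside the bounded interval $[\nu_{Y,-}, \nu_{Y,+}]$, so
\[
\inf_{x \in E_\epsilon} \Lambda_Y^*(x) = \inf_{x \in E_\epsilon \cap [\nu_{Y,-}, \nu_{Y,+}]} \Lambda_Y^*(x).
\]
The set $E_\epsilon \cap [\nu_{Y,-}, \nu_{Y,+}]$ is compact; if it is empty the infimum equals $+\infty$, and otherwise lower semicontinuity of $\Lambda_Y^*$ ensures the infimum is attained at some $x^\star \neq \Lambda_Y'(0)$, at which $\Lambda_Y^*(x^\star) > 0$. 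Either way $r'_\epsilon > 0$, as required. I do not anticipate a genuine obstacle here: the only substantive input beyond Theorem~\ref{theo_ldp} is the identification of $\Lambda_Y'(0)$ as the unique zero of $\Lambda_Y^*$, which follows from convex duality together with the differentiability of $\Lambda_Y$ furnished by Lemma~\ref{lemma_differentialsLambda}.
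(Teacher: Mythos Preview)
Your argument is correct and is precisely the standard derivation of exponential convergence from a full large deviation principle with a rate function having a unique zero; the paper does not spell this out but simply cites Theorem~II.6.3 in \cite{Ellis}, whose proof is essentially the one you give. There is no substantive difference in approach.
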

\begin{proof}
	See e.g.\ Theorem II.6.3 in \cite{Ellis}.
\end{proof}
\begin{remark} \label{remark_expcvg}
	Such a result is sometimes called exponential convergence. If one could replace~$\pp^F_{T}$ by a $T$-independent probability measure $\pp^F$ in~\eqref{eq_expcvg} (see Remark~\ref{remark_noconsistency}) then the Borel--Cantelli lemma would imply that $\frac1T \rvY$ converges $\pp^F$-almost-surely to $\Lambda_Y'(0)$.
   It implies, however, that $\lim_{T\to\infty} \frac1T \ee(\rvY)=\Lambda_Y'(0)$. In the case $Y=\beta h_\env$, the positivity of $\oldsigmaT$ implies $\Lambda'(0)\geq 0$. Formula~\eqref{eq_lambdaprime2} shows that $\Lambda'(0)=0$ if $X(s)\equiv 0$. The proof of Corollary 6.4 in \cite{HJPR1} shows $\Lambda'(0)>0$ if $X(s)\not\equiv 0$.
\end{remark}

We also obtain a central limit-type result by a slight improvement of the results in Theorem~\ref{theo_ldp}.
\begin{theorem} \label{theo_clt}
	Under the same assumptions as in Theorem \ref{theo_ldp} we have
	\[ \frac1{\sqrt T}\big(\rvY- T \,\Lambda_Y'(0)\big)\underset{T\to\infty}\to \mathcal N\big(0,\Lambda_Y''(0)\big)\]
	in distribution.
\end{theorem}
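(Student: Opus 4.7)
The plan is to establish convergence in distribution via pointwise convergence of the moment generating functions on $\rr$, invoking (as in the proof of Theorem~\ref{thm:X-is-0-conv-MGF}) the criterion from Section~30 of~\cite{Bill}. Set $V_T := T^{-1/2}\big(\rvY - T\,\Lambda_Y'(0)\big)$ and fix $\alpha_0 \in \rr$. Since
\[
\log M_{V_T}(\alpha_0) = -\alpha_0\sqrt T\,\Lambda_Y'(0) + \log M_{\rvY}(\alpha_0/\sqrt T),
\]
the task reduces to showing that $\log M_{\rvY}(\alpha_0/\sqrt T) = \alpha_0\sqrt T\,\Lambda_Y'(0) + \tfrac12 \alpha_0^2\, \Lambda_Y''(0) + o(1)$ as $T\to\infty$.

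The factorization $\L_Y\ealpha(s) = \lambda_Y\ealpha(s)\,\tilde\L_Y\ealpha(s)$ combined with Proposition~\ref{prop:phiTalpha} yields
\[
M_{\rvY}(\alpha) = \Big(\prod_{k=1}^T \lambda_Y\ealpha(k/T)\Big)\, B_T(\alpha), \qquad B_T(\alpha) := \tr\big(\tilde\L_Y\ealpha(\tfrac{T}{T}) \cdots \tilde\L_Y\ealpha(\tfrac{1}{T})\,\rho\init\big).
\]
I would first show $B_T(\alpha_0/\sqrt T) \to 1$. By Proposition~\ref{prop_adiabaticlemma}, and because $C(\alpha),\ell'(\alpha),T_0(\alpha)$ are continuous in $\alpha$ and hence bounded on any compact neighborhood of~$0$, the adiabatic remainder is $O(T^{-1})$ uniformly for $\alpha$ in such a neighborhood. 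The leading term evaluated at $\alpha=0$ equals~$1$ because $I_Y^{(0)} \equiv \id$, $\rho_Y^{(0)} = \rho\invar$, $\vartheta_Y^{(0)} = 0$, and $\tr\big(\rho\invar(1)\,p_\ell(1)\big) = 1/z$ for every $\ell$, the last identity being established inside the proof of Corollary~\ref{cor:adiab_for_alpha=0}. Joint continuity of the spectral data in $(s,\alpha)$ at $\alpha = 0$ then upgrades $B_T(0) = 1$ to $B_T(\alpha_0/\sqrt T) \to 1$.

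The remaining product is handled by a Taylor expansion of $\log \lambda_Y\ealpha(s)$ at $\alpha = 0$. Since $(s,\alpha) \mapsto \lambda_Y\ealpha(s)$ is $C^2$ (by Lemma~\ref{lem:C2-reg} together with analytic perturbation theory for the isolated simple eigenvalue $\lambda_Y\ealpha(s)$) and $\lambda_Y^{(0)}(s) \equiv 1$, one gets
\[
\log \lambda_Y\ealpha(s) = \alpha\, \Lambda_{Y,1}(s) + \tfrac12 \alpha^2\, \Lambda_{Y,2}(s) + R(\alpha,s), \qquad \sup_{s \in [0,1]} |R(\alpha,s)| = O(\alpha^3),
\]
with $\int_0^1 \Lambda_{Y,1}(s)\,\d s = \Lambda_Y'(0)$ and $\int_0^1 \Lambda_{Y,2}(s)\,\d s = \Lambda_Y''(0)$ in view of Lemma~\ref{lemma_differentialsLambda}. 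Substituting $\alpha = \alpha_0/\sqrt T$, the cubic remainder accumulates to $O\big(T\cdot T^{-3/2}\big) = O\big(T^{-1/2}\big)$, while Riemann sum approximation for the $C^1$ functions $\Lambda_{Y,j}$ yields $\sum_{k=1}^T \Lambda_{Y,j}(k/T) = T\int_0^1 \Lambda_{Y,j}(s)\,\d s + O(1)$. Combining,
\[
\sum_{k=1}^T \log \lambda_Y^{(\alpha_0/\sqrt T)}(k/T) = \alpha_0 \sqrt T\, \Lambda_Y'(0) + \tfrac12 \alpha_0^2\, \Lambda_Y''(0) + o(1),
\]
which together with $\log B_T(\alpha_0/\sqrt T) \to 0$ completes the calculation.

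The main technical hurdle is uniformity: all estimates must hold for $\alpha$ ranging in the shrinking interval $[-|\alpha_0|/\sqrt T, |\alpha_0|/\sqrt T]$ simultaneously with $T \to \infty$. Concretely, this requires joint continuity in $\alpha$ of the adiabatic constants $C(\alpha)$, $\ell'(\alpha)$, $T_0(\alpha)$ and of the spectral data $\lambda_Y\ealpha(s)$, $\invalpha_Y(s)$, $\rho_Y\ealpha(s)$, together with the joint $C^2$ regularity in $(s,\alpha)$ obtained from Lemma~\ref{lem:C2-reg} and analytic perturbation theory. Once these uniformities are in hand, the central limit theorem follows from the routine Riemann-sum estimates above and from Billingsley's moment generating function convergence criterion.
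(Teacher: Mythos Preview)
Your argument is correct, but it follows a genuinely different route from the paper's.

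The paper invokes Bryc's theorem: it extends the analysis to a \emph{complex} neighbourhood $N$ of the origin (using Corollary~\ref{coro_specperiphalphacomplex} to control the peripheral spectrum of $\L_Y\ealpha(s)$ for complex~$\alpha$), establishes a uniform bound $\sup_{\alpha\in N}\sup_T |T^{-1}\log M_{\rvY}(\alpha)|<\infty$, and then combines this with the pointwise convergence of Lemma~\ref{lemma_cvgCGF} on $N\cap\rr$; Bryc's theorem then delivers the CLT directly. Your approach stays entirely on the real axis: you expand $\log\lambda_Y\ealpha(s)$ to second order in $\alpha$ with a uniform remainder, handle the product of eigenvalues via Riemann sums, control the normalised product $B_T(\alpha_0/\sqrt T)$ through the continuity-in-$\alpha$ statements of Proposition~\ref{prop_adiabaticlemma}, and conclude by the real-MGF convergence criterion of~\cite{Bill}. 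Your route is more elementary in that it avoids the complex extension and Bryc's theorem altogether, and it makes transparent exactly where $\Lambda_Y'(0)$ and $\Lambda_Y''(0)$ enter; the paper's route is the standard large-deviations package and requires less explicit computation once the uniform complex bound is in hand.

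One small imprecision: from joint $C^2$ regularity alone you get a uniform $o(\alpha^2)$ remainder rather than $O(\alpha^3)$; either suffices for your purposes, and in any case $\alpha\mapsto\lambda_Y\ealpha(s)$ is in fact real-analytic (as noted in the proof of Proposition~\ref{prop_adiabaticlemma}), so the $O(\alpha^3)$ claim is true. Likewise, for the quadratic Riemann sum you only need continuity of $\Lambda_{Y,2}$, not $C^1$.
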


\begin{proof}
	From Corollary \ref{coro_specperiphalphacomplex}, for fixed $s\in [0,1]$, there exists a complex neighbourhood $N(s)$ of the origin such that for $\alpha\in N(s)$ the peripheral spectrum of $\L_Y\ealpha(s)$ is of the form $\{\lambda_Y\ealpha(s) \theta^m \,|\, m=0,\ldots,z-1\}$, and each $\lambda_Y\ealpha(s) \theta^m$ is a simple eigenvalue.
   Denote by $\phi_m\ealpha(s) \psi_m\ealpha{}^*(s)$ the corresponding spectral projector, parameterized so that $(\alpha,s)\mapsto \phi_m\ealpha(s), \psi_m\ealpha{}(s)$ are $C^2$ functions. By compactness of $[0,1]$, we can find a complex neighbourhood $N$ of the origin containing $\bigcap_{s\in[0,1]} N(s)$ such that the following holds: for $\alpha\in N$, the family $\big(\tilde\L\ealpha(s)\big)_{s\in[0,1]}$ satisfies \textup{\ref{it:cont}--\ref{it:spr-Q}}. Moreover,
	\begin{gather*}
	\sup_{\alpha \in N} \sup_{s\in[0,1]} |1-\lambda\ealpha_Y(s)|<1/2,\\
	\sup_{\alpha \in N} \sup_{s\in[0,1]} \max\big(\|\phi_m\ealpha(s)\|,\|\phi_m\ealpha{}'(s)\|, \|\psi_m\ealpha{}(s)\|,\|\psi_m\ealpha{}'(s)\|\big)<\infty,\\
	\sup_{\alpha \in N} \Big|\int_0^1 \psi_m\ealpha{}^* \big(\phi_m\ealpha{}'(t)\big)\,\d t\Big|<\infty.
	\end{gather*}
	We therefore have
	\[\sup_{\alpha\in N}\sup_{T\in \nn} \Big|\frac1T \sum_{k=1}^T\log \lambda_Y\ealpha\big(\frac kT\big)\Big|<\infty\]
	and by Corollary \ref{coro:P-Q-decomp},
	\[\sup_{\alpha\in N}\sup_{T\in \nn}  \Big|\frac{1}{T} \log \tr\big(\tilde \L_Y\ealpha(\tfrac TT) \dotsb \tilde \L_Y\ealpha(\tfrac 1T)(\rho\init)\big)\Big| <\infty.\]
	This implies
	\[\sup_{\alpha\in N}\sup_{T\in \nn}  \Big|\frac1T\log M_{\rvY}(\alpha) \Big|<\infty.\]
	In addition, from Lemma \ref{lemma_cvgCGF}, $\frac1T \log M_{\rvY}(\alpha)$ converges as $T\to\infty$ for $\alpha\in N\cap \rr$. By Bryc's theorem~\cite{Bryc} (see also Appendix A.4 in \cite{JOPP}) as $T\to\infty$,  $\frac1{\sqrt T}\big(\rvY- T \,\Lambda_Y'(0)\big)$ converges in distribution to $\mathcal N\big(0,\Lambda_Y''(0)\big)$.
\end{proof}

\begin{remark} \label{remark_dvgoldsigmaT}
	In the case $Y=\beta h_\env$, Remark \ref{remark_expcvg} and Theorem \ref{theo_clt} show that, if $\Lambda''(0)\neq 0$ (which is generically expected) then $\oldsigmaT\to\infty$ as $T\to\infty$.
\end{remark}

\begin{example} \label{example:FD}
Let us consider the setup of Example~\ref{example:RWA} using the full-dipole interaction potential $v_\text{FD}\in \B(\H_\sys \otimes \H_\env)$,
\[
v_\text{FD} = \frac{\mu_1}{2} (a+a^*) \otimes (b +  b^*),
\]
instead of $v_\text{RW}$.
This example was considered in \cite[Section 7.1]{HJPR1}, where it was shown that \ref{Prim} is satisfied, and that $\sigma_T \to \infty$ with a finite and nonzero rate $\lim_{T\to\infty} \frac{1}{T}\sigma_T$ for generic choices of parameters $\{E,E_0,\tau\}$.

We take $Y(s) =\beta(s) \henv$ as in Example~\ref{example:RWA}. Introducing the shorthand $\eta := \sqrt{(E_0+E)^2 + \lambda^2}$, we compute a matrix expression for $\L\ealpha_s$ by  identifing $\I_1(\H_\sys) \cong \Mat_{2\times 2}(\C) \cong \C^4$ via $\begin{psmallmatrix}
\eta_{11} & \eta_{12}\\
\eta_{21} & \eta_{22}
\end{psmallmatrix} \mapsto \begin{psmallmatrix}
\eta_{11} \\ \eta_{12} \\ \eta_{21} \\ \eta_{22}
\end{psmallmatrix}$.
Working in the (ground state, excited state) basis for~$\sys$, we obtain
$$
   \L\ealpha(s)
      =
      \begin{pmatrix}
         a & 0 & 0 & d \\
         0 & b & c & 0 \\
         0 & c & e & 0 \\
         f & 0 & 0 & g
      \end{pmatrix},
$$
where
\begin{align*}
   a &= \frac{\left(2 (E_0+E)^2+\lambda ^2+\lambda ^2 \cos (\eta
     \tau )\right)}{2 \left(1+\Exp{E_0 \beta(s) }\right)\eta ^2\Exp{-E_0 \beta(s) } }+\frac{2 (E_0-E)^2+\lambda ^2+\lambda ^2 \cos (\nu  \tau
     )}{2 \left(1+\Exp{E_0 \beta(s) }\right)\nu ^2}, \\
   d &= \lambda ^2 \left(-\frac{2
   \Exp{-E_0 \alpha  \beta(s) } (\cos (\eta  \tau )-1)}{4 \left(1+\Exp{E_0 \beta(s) }\right)\eta ^2}-\frac{2 \Exp{E_0 (\alpha +1)
   \beta(s) } (\cos (\nu  \tau )-1)}{4 \left(1+\Exp{E_0 \beta(s) }\right)\nu ^2}\right), \\
   c &= \frac{\lambda ^2 \cosh \left(E_0 \beta(s)(\frac{1}{2}+\alpha) \right)
      \text{sech}\left(\frac{E_0 \beta(s) }{2}\right) \sin \left(\frac{\eta  \tau }{2}\right) \sin
      \left(\frac{\nu  \tau }{2}\right)}{\sqrt{E_0^4+2 \left(\lambda ^2-E^2\right)
      E_0^2+\left(E^2+\lambda ^2\right)^2}}, \\
   b &= \frac{\left(\mathrm{i} \eta  \cos \left(\frac{\eta  \tau }{2}\right)+(E_0+E) \sin
     \left(\frac{\eta  \tau }{2}\right)\right) \left((E_0-E) \sin \left(\frac{\nu  \tau
     }{2}\right)-\mathrm{i} \nu  \cos \left(\frac{\nu  \tau }{2}\right)\right)}{\sqrt{E_0^4+2
     \left(\lambda ^2-E^2\right) E_0^2+\left(E^2+\lambda ^2\right)^2}}, \\
   e &= \frac{
      \left(-\Exp{\mathrm{i} \nu  \tau } E_0+E_0-E+\nu +\Exp{\mathrm{i} \nu  \tau } (E+\nu
      )\right) \left(\eta  \cos \left(\frac{\eta  \tau }{2}\right)+\mathrm{i} (E_0+E) \sin
      \left(\frac{\eta  \tau }{2}\right)\right)}{2 \eta  \nu \Exp{\frac{1}{2} \mathrm{i} \nu  \tau }}, \\
   f &= \frac{\Exp{-E_0 \alpha  \beta(s) } \lambda ^2 }{4
     \left(1+\Exp{E_0 \beta(s) }\right)}\left(\frac{2-2 \cos (\nu  \tau )}{\nu ^2}-\frac{2
     \Exp{E_0 (2 \alpha +1) \beta(s) } (\cos (\eta  \tau )-1)}{\eta ^2}\right), \\
   g &= \frac{ \left(2
   (E_0+E)^2+\lambda ^2+\lambda ^2 \cos (\eta  \tau )\right)}{2 \left(1+\Exp{-E_0
   \beta(s) }\right)\eta ^2 \Exp{E_0 \beta(s) }}+\frac{2
   (E_0-E)^2+\lambda ^2+\lambda ^2 \cos (\nu  \tau )}{2 \left(1+\Exp{-E_0
   \beta(s) }\right)\nu ^2},
\end{align*}
which depend on $s$ through $\beta(s)$. The computation was performed with Mathematica, using \cite{Mathematica_code}. We make a particular choice of parameters, $\lambda = 2$, $\tau=0.5$, $E_0 = 0.8$, $E=0.9$, and two choices of $[0,1]\ni s \mapsto \beta(s)$:
\begin{equation}\label{eq:beta1}
\beta_1(s) = \frac{2(3+4\tanh(2s))}{3 + 2 \log(\cosh(2))}
\end{equation}
and
\begin{equation}\label{eq:beta2}
\beta_2(s) =a_1 \tanh(2 s) -a_2 \tanh\big(\frac{s}{2}\big) - a_3 s^3 + a_4 s^2 -a_5 s + a_6
\end{equation}
for $a_1 = 35.483$, $a_2=141.929$, $a_3=42.945$, $a_4 = 93.5$, $a_5=17.808$, $a_6 = 1.061$. We have $\beta_1(0) = \beta_2(0) = 1.06$, and $\beta_1(1) = \beta_2(1) = 2.43$, as well as $\int_0^1 \beta_1(s) \d s = \int_0^1 \beta_2(s) \d s = 2$. These are plotted in Figure~\ref{fig:FDbetas}.

We compute numerically the function $\Lambda(\alpha)$ for each choice of $s\mapsto \beta(s)$, as shown in Figure~\ref{fig:FD_Lambda}. Figures~\ref{fig:CLT_FD1} and \ref{fig:CLT_FD2} shows the convergence described by Theorem \ref{theo_clt} by simulating 2,000 instances of this repeated interaction system at four values of $T$.

\begin{figure}[ht]

 \centering

  \includegraphics{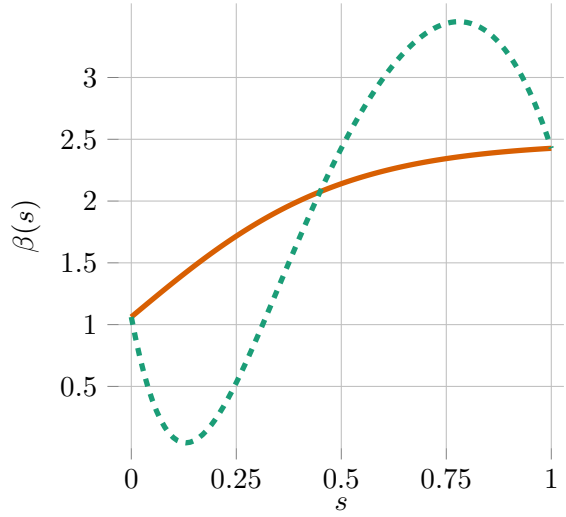}

\caption{Two choices of curves $s\mapsto \beta(s)$. In the solid orange line, $\beta(s)= \beta_1(s)$, given by \eqref{eq:beta1}, and in dashed green line, $\beta(s) = \beta_2(s)$, given by \eqref{eq:beta2}. \label{fig:FDbetas}}
\end{figure}

\vspace{2em}

\begin{figure}[ht]
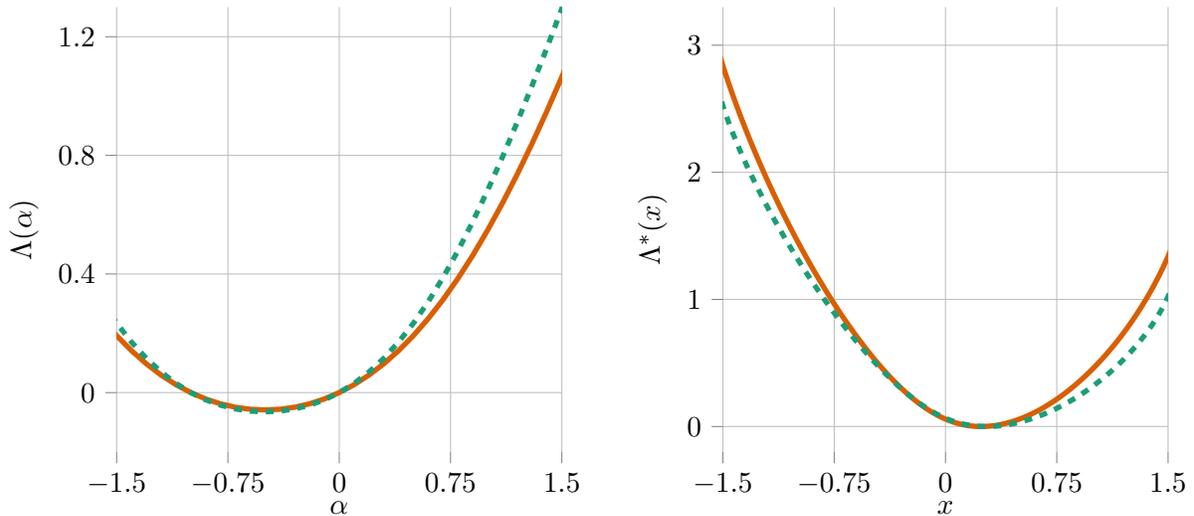

\centering
  \includegraphics{Figures/Lambda_plots.tikz}
\quad
  \includegraphics{Figures/Lambda_star_plots.tikz}

\caption{
Left: The function $\Lambda(\alpha)$ for $Y = \beta \henv$ in the system of Example~\ref{example:FD}, with $\lambda = 2$, $\tau=0.5$, $E_0 = 0.8$,  $E=0.9$, plotted for each choice of $\beta(s)$.
Right: The rate function $\Lambda^*(\alpha)$, for the same setup.
In each plot, the solid orange line corresponds to the choice $\beta(s) = \beta_1(s)$, defined in \eqref{eq:beta1}, and the dashed green line corresponds to $\beta(s) = \beta_2(s)$, defined in \eqref{eq:beta2}.  \label{fig:FD_Lambda}}
\end{figure}

\begin{figure}[ht]
\centering
   \includegraphics[width=.8\textwidth]{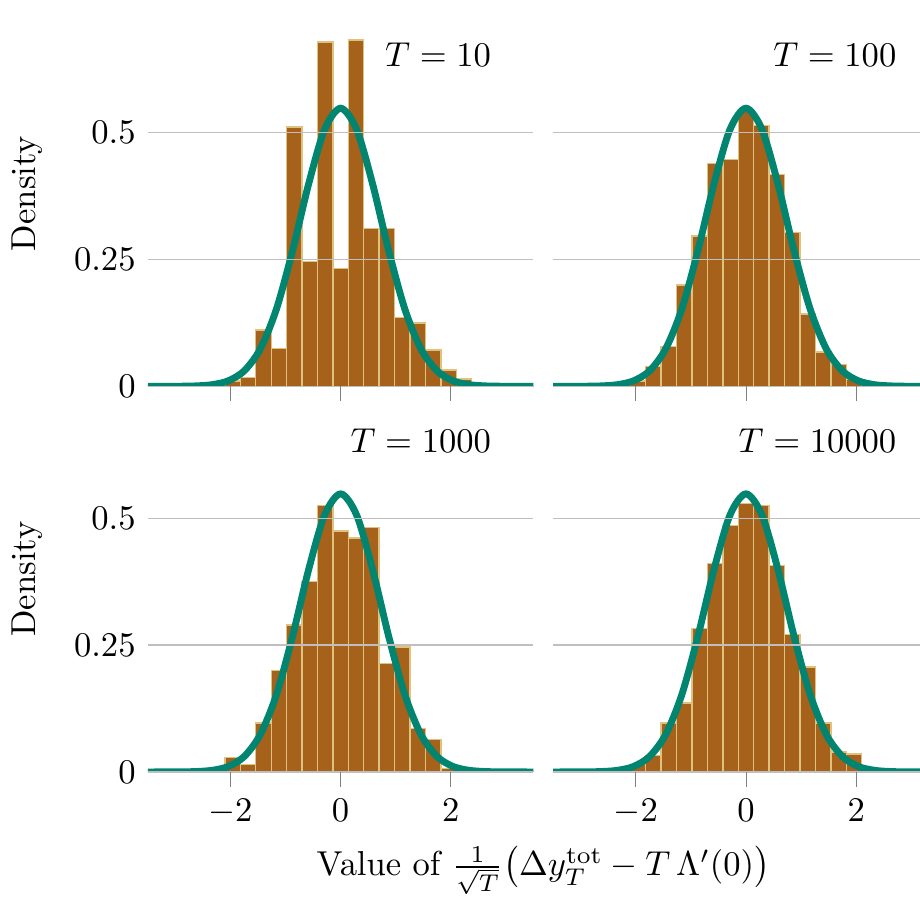}
   \caption{Convergence of $ \frac1{\sqrt T}\big(\rvY- T \,\Lambda'(0)\big)$ to a normal distribution, where $\Lambda'(0) \approx 0.240$, with $\beta(s)=\beta_1(s)$ given by~\eqref{eq:beta1}. Each plot was generated by simulating the two-time measurement protocol in 2,000 instances of the repeated interaction system described in Example~\ref{example:FD}. The value of $ \frac1{\sqrt T}\big(\rvY- T \,\Lambda'(0)\big)$ was calculated for each instance and plotted in a histogram in orange, with bar heights normalized to yield total mass 1. In green, the probability density function of $\mathcal{N}(0,\Lambda''(0))$ is plotted, where $\Lambda''(0) \approx 0.530$. As~$T$ increases, one sees qualitatively the convergence of $ \frac1{\sqrt T}\big(\rvY- T \,\Lambda'(0)\big)$ to the normal distribution, as guaranteed by Theorem~\ref{theo_clt}.  \label{fig:CLT_FD1}}
\end{figure}

\begin{figure}[ht]
	\centering
   \includegraphics[width=.8\textwidth]{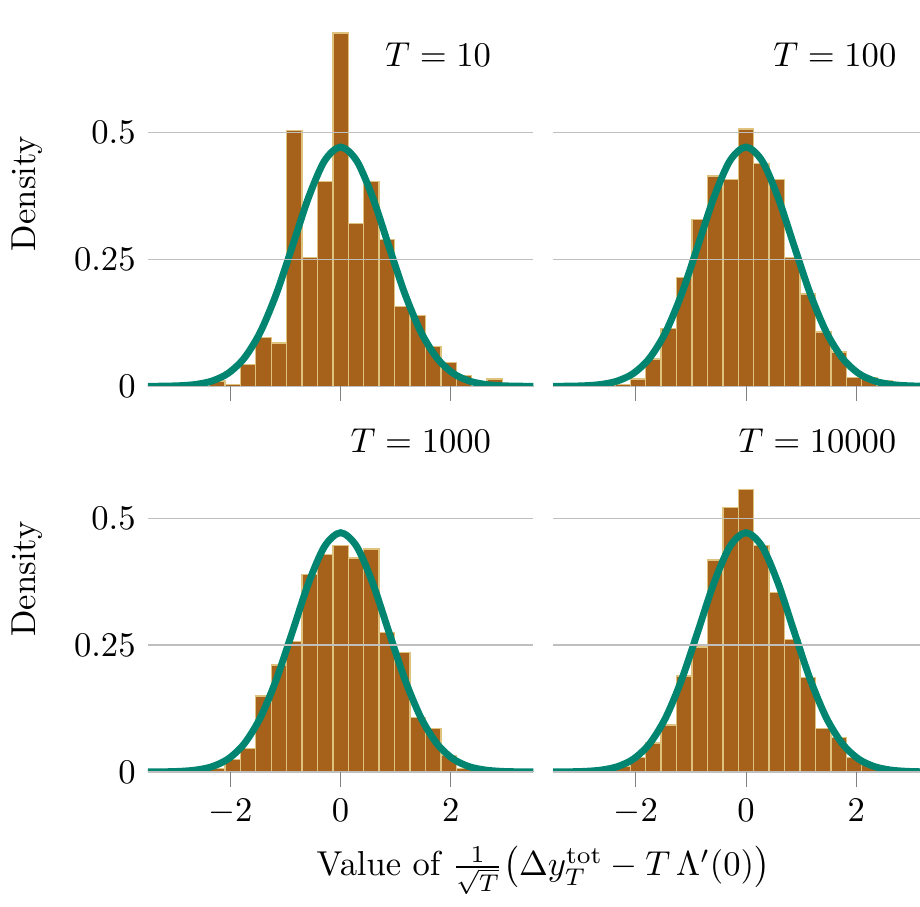}
   \caption{The same setup as Figure~\ref{fig:CLT_FD1}, with $\beta(s)=\beta_2(s)$ given by~\eqref{eq:beta2}. Here, $\Lambda'(0) \approx 0.275$, and $\Lambda''(0) \approx 0.716$. \label{fig:CLT_FD2}}
\end{figure}

\end{example}

\FloatBarrier

\appendix

\section{Peripheral spectrum of CPTP maps and their deformations} \label{sec_peripheralspectrum}

In this section we discuss a full study of the peripheral spectrum, and associated spectral projectors, of CPTP maps and their deformations. This will in particular apply to the deformed reduced dynamical operators $\L_Y^{(\alpha)}$.

We start by collecting various results from the seminal paper \cite{EHK}. Let us therefore consider a finite-dimensional Hilbert space $\H$, and $\Phi$ a completely positive, not necessarily trace-preserving map $\Phi$ on $\mathcal I_1(\H)$. Since $\H$ is finite-dimensional, we can identify $\mathcal I_1(\H)$ and $\B(\H)$, so that all definitions below apply to either $\Phi$ or $\Phi^*$. Any completely positive map on $\B(\H)$ (with finite-dimensional $\H$) admits a \emph{Kraus decomposition}, i.e.\ there exist maps $V_i\in \B(\H)$ for $i$ in a finite set $I$, such that $\Phi(\rho)=\sum_{i\in I} V_i \rho V_i^*$ for all $\rho$.

\begin{definition} \label{def_irreducibility}
If the completely positive map $\Phi$ satisfies either of the following equivalent properties
\begin{itemize}
	\item the only self-adjoint projectors~$P$ on~$\H$ satisfying $\Phi \big(P \I_1(\H) P \big) \subseteq P \I_1(\H) P$ are $\id$ and $0$,
	\item the only subspaces $E$ of $\H$ such that $V_i E \subset E$  for all $i\in I$ are $\{0\}$ and $\H$,
\end{itemize}
we say that $\Phi$ is \emph{irreducible}. If for any nonzero self-adjoint projector~$P$ on~$\H$, there exists $n$ such that the map $\Phi^n(P)$ is positive-definite, we say that $\Phi$ is \emph{primitive}.
\end{definition}
Clearly, if $\Phi$ is primitive then it is irreducible.
In addition, it is immediate to see from the above equivalences that $\Phi$ is irreducible (resp.\ primitive) if and only if $\Phi^*$ is irreducible (resp.\ primitive). Remark also that an irreducible completely positive map $\Phi$ will map a faithful state $\rho$ to a positive-definite operator, as otherwise the support projector $P$ of $\Phi(\rho)$ will satisfy $P\leq c\rho$, and therefore $\Phi(P)\leq c\Phi(\rho)\leq c' P$, for some $c,c'>0$, and therefore contradict the definition of irreducibility above.
\smallskip

It is shown in \cite{EHK} that,  if $\Phi$ is irreducible, then its spectral radius~$\lambda$ is a simple eigenvalue and the associated spectral subspace is generated by a positive-definite operator. An immediate consequence is that any positive-definite eigenvector of $\Phi$ must be an eigenvector for~$\lambda$.
\smallskip

If $\Phi$ is CPTP then necessarily $\lambda=1$. It is also shown in \cite{EHK} that, if $\Phi$ is {completely positive}, irreducible, and trace-preserving, then
\begin{itemize}
	\item the peripheral spectrum of $\Phi$ is a subgroup $S_z=\{\theta^m \,|\, m=0,\ldots,z-1\}$ of the unit circle, where $\theta=\e^{2\i \pi/z}$, and each $\theta^m$  is a simple eigenvalue,
	\item there exist a faithful state $\rho\invar$, and a unitary operator $u$ (called a Perron--Frobenius unitary of $\Phi$) satisfying $[\rho\invar,u]=0$, $u^z=\id$ and $u^k\neq\id$ for $k=0,\ldots, z-1$, such that
	\begin{equation} \label{eq_relationspfunitary}
	\begin{array}{c@{\,=\,}cl}
	\Phi(\rho u) & \theta \,\Phi(\rho) u & \forall \rho  \in \mathcal I_1(\H)\\
	\Phi^*(u X) & \overline \theta u\, \Phi^*(X) & \forall X \in \mathcal B(\H).
	\end{array}
	\end{equation}
\end{itemize}
A consequence of the above is that the (unique up to a multiplicative constant) eigenvector of $\Phi$ (resp.\ $\Phi{}^*$) associated with the eigenvalue $\theta^m$ is $\rho\invar u^m$ (resp. $u^{-m}$), and the spectral projector of $\Phi$ associated with $\theta^m$ is $\eta\mapsto \tr( u^{-m}  \eta) \rho\invar u^{m}$.

Remark that relations \eqref{eq_relationspfunitary} are equivalent to $V_i u=\theta u V_i$ (see \cite{FagPel}). Last, a CPTP map is primitive if and only if it is irreducible with $z=1$, or equivalently if and only if $\Phi^n$ is irreducible for any $n\in\nn$. Conversely, a CPTP map that admits a faithful state as a unique (up to a multiplicative constant) invariant is irreducible. If, in addition, $1$ is the only eigenvalue of modulus one, then $\Phi$ is primitive. In particular, our description of assumptions \ref{Irr} and \ref{Prim} are consistent with the above definitions.

In addition, the spectral decomposition of $u$ is of the form $u=\sum_{m=0}^{z-1} \theta^m \proj_m$, where the projectors $\proj_m$ satisfy $\proj_m V_i = V_i \proj_{m+1}$ for all $i$ and $m$ (here and below, $m+1$ means $m+1 \mod z$ whenever it appears as the index of a projector $\proj$ and we adopt the same convention for $m-1$). Each subspace $\mathcal B(\ran \proj_m)$ of $\mathcal B(\H)$ is therefore invariant by $\Phi^*{}^z$ and the restriction of {$\Phi^*{}^z$} to that subspace is primitive.

\smallskip
We now define deformations of CPTP maps, or, rather, of their Kraus decompositions. For this, fix a finite set $I$. We call a family $V=(V_i)_{i\in I}$ of operators on $\B(\H)$ an \emph{irreducible Kraus family} (indexed by $I$) if $\sum_{i\in I} V_i^* V_i=\id$, and the only subspaces $E$ of $\H$ such that $V_i E\subset E$ for all $i\in I$ are $\{0\}$ and $\H$. We fix a set $I$ and denote by $\Kraus_I$ the set of irreducible Kraus maps indexed by $I$. From the above discussions, any irreducible Kraus family $(V_i)_{i\in I}$ defines an irreducible CPTP map $\Phi$ by $\Phi(\rho)=\sum V_i \rho V_i^*$.
\begin{remark}
	Conversely, any irreducible CPTP admits an irreducible Kraus decomposition indexed by $I=\{1,\ldots,(\dim\H)^2\}$ (possibly with $V_i=0$ for some $i$). However, in applications of the present results in Section \ref{subsec_adiabaticresult}, where $\H=\H_\sys$, our model yields a Kraus family indexed by pairs $(i,j)\in \sp Y\times \sp Y$ where $Y$ is an operator acting on a Hilbert space $\H_\env$ unrelated to $\H_\sys$. We therefore need to consider Kraus families indexed by an arbitrary set $I$.
\end{remark}
Now fix $v=(v_i)_{i\in I}$ a family of strictly positive real numbers. For $(V_i)_{i\in I}$ an irreducible Kraus family and $\alpha\in \rr$ we define a map $\Phi^{(\alpha)}$ on $\I_1(\H)$ by
\[\Phi^{(\alpha)}(\rho)  =\sum_i v_i^{\alpha}\, V_i \rho V_i^*.\]
This map $\Phi\ealpha$  is a completely positive map, and since $\Phi^{(0)}=\Phi$, it can be viewed as a deformation of $\Phi$. We will prove the following result about the peripheral spectrum of $\Phi^{(\alpha)}$.
\begin{proposition} \label{prop_periphspectrumPhialpha}
	Let $(V_i)_{i\in I}$ an irreducible Kraus family, $v=(v_i)_{i\in I}$ a family of strictly positive real numbers, and define $\Phi$, $\Phi\ealpha$ as above. Let $u$ be a Perron--Frobenius unitary for $\Phi$, and denote by $p_m, m=0,\ldots,z-1$ its spectral projectors. There exist three smooth maps $\alpha\mapsto \lambda\ealpha, \invalpha,\rho\ealpha$ from $\R$ to, respectively, $\R_+^*$, the set of positive-definite operators, and the set of faithful states, such that for all $\alpha$ in $\rr$,
	\begin{itemize}
		\item the peripheral spectrum of $\Phi^{(\alpha)}$ is $\lambda\ealpha S_z = \{ \lambda\ealpha \theta^k \,|\, k=0,\dotsc,z-1 \},$
		\item one has the commutation relations $[\invalpha,u]=0$, and $[\rho\ealpha,u]=0$,
		\item one has $\tr(\rho\ealpha \,\invalpha)=1$ for all $\alpha\in \rr$,
		\item the (unique up to a multiplicative constant) eigenvector of $\Phi\ealpha$ (resp.\ $\Phi\ealpha{}^*$) associated with the eigenvalue $\lambda\ealpha \theta^m$ is $\rho\ealpha u^m$ (resp. $\invalpha u^{-m}$), and the spectral projector of $\Phi^{(\alpha)}$ associated with $\lambda\ealpha \theta^m$ is $$\eta\mapsto \tr( \invalpha u^{-m}  \eta) \rho\ealpha u^{m}.$$
	\end{itemize}
\end{proposition}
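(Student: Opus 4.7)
The strategy is to apply the Evans--H{\o}egh-Krohn (EHK) results recalled above to $\Phi\ealpha$, to transfer to $\Phi\ealpha$ the algebraic structure that the unitary $u$ provides for $\Phi$, and then to rule out spurious peripheral eigenvalues via a reduction to a CPTP map. First, since $v_i > 0$, the Kraus family $(\sqrt{v_i^\alpha}\,V_i)_{i\in I}$ has the same invariant subspaces as $(V_i)_{i\in I}$, so $\Phi\ealpha$ remains irreducible for every $\alpha \in \rr$. Hence $\lambda\ealpha := \spr \Phi\ealpha$ is a simple eigenvalue with a (unique up to a scalar) positive-definite eigenvector, which I normalize to a faithful state $\rho\ealpha$; the same argument applied to ${\Phi\ealpha}^*$ furnishes a positive-definite eigenvector that I normalize so $\tr(\invalpha \rho\ealpha) = 1$. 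Real-analyticity of the maps $\alpha \mapsto \lambda\ealpha, \rho\ealpha, \invalpha$ then follows from analytic perturbation theory, since $\alpha \mapsto \Phi\ealpha$ is entire into $\B(\I_1(\H))$ and $\lambda\ealpha$ is an isolated simple eigenvalue throughout.

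The peripheral structure is built as follows. From $\Phi(\rho u) = \theta\,\Phi(\rho) u$, applied to arbitrary $\rho$, one extracts the purely algebraic relation $u V_i^* = \theta V_i^* u$, equivalently $u V_i u^{-1} = \theta^{-1} V_i$. Since this involves only the $V_i$'s and $u$ (and not the weights $v_i$), it passes verbatim to the Kraus family of $\Phi\ealpha$ and yields $\Phi\ealpha(u \rho u^{-1}) = u \Phi\ealpha(\rho) u^{-1}$ for every $\rho$. Applied to $\rho\ealpha$, this exhibits $u\rho\ealpha u^{-1}$ as a faithful-state eigenvector of $\Phi\ealpha$ for $\lambda\ealpha$; simplicity of $\lambda\ealpha$ and $\tr(u\rho\ealpha u^{-1}) = 1$ then force $[u,\rho\ealpha] = 0$, and symmetrically $[u,\invalpha] = 0$. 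Combining these commutations with the iterated relation $u^m V_i^* = \theta^m V_i^* u^m$ gives
\begin{equation*}
\Phi\ealpha(\rho\ealpha u^m) = \sum_i v_i^\alpha V_i \rho\ealpha u^m V_i^* = \theta^m \sum_i v_i^\alpha V_i \rho\ealpha V_i^* u^m = \lambda\ealpha \theta^m\,\rho\ealpha u^m
\end{equation*}
for each $m = 0, \ldots, z-1$, and analogously ${\Phi\ealpha}^*(\invalpha u^{-m}) = \lambda\ealpha \theta^m\,\invalpha u^{-m}$. The announced spectral projector formula then follows from the normalization $\tr(\invalpha u^{-m}\rho\ealpha u^m) = \tr(\invalpha \rho\ealpha) = 1$, once each eigenvalue $\lambda\ealpha\theta^m$ is known to be simple; this simplicity is inherited from that of $\lambda\ealpha$ via the bijection $\eta \mapsto \eta u^{-m}$ between the corresponding eigenspaces.

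The main technical obstacle is to show that no peripheral eigenvalue of $\Phi\ealpha$ lies outside $\lambda\ealpha\{\theta^m\}_{m=0}^{z-1}$. For this I pass to the rescaled map
\begin{equation*}
\Psi\ealpha(\sigma) := (\lambda\ealpha)^{-1}\,(\invalpha)^{1/2}\,\Phi\ealpha\bigl((\invalpha)^{-1/2}\sigma(\invalpha)^{-1/2}\bigr)\,(\invalpha)^{1/2},
\end{equation*}
which is completely positive, trace-preserving (using ${\Phi\ealpha}^*(\invalpha) = \lambda\ealpha\invalpha$), and irreducible (its Kraus operators, proportional to $(\invalpha)^{1/2}V_i(\invalpha)^{-1/2}$, share the invariant-subspace lattice of the $V_i$'s). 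The CPTP form of EHK then gives that the peripheral spectrum of $\Psi\ealpha$ is $S_{z\ealpha}$ for some $z\ealpha \in \nn$, and the preceding paragraph already yields $z \,|\, z\ealpha$. For the reverse divisibility, I will use the block decomposition $\H = \bigoplus_{m=0}^{z-1}\ran p_m$: the relations $p_m V_i = V_i p_{m+1}$ show that $\Phi\ealpha$ (and hence $\Psi\ealpha$) maps $p_m \B(\H) p_n$ into $p_{m-1} \B(\H) p_{n-1}$, so the splitting $\B(\H) = \bigoplus_{\sigma=0}^{z-1} \bigoplus_m p_m\B(\H)p_{m-\sigma}$ into shift-classes is preserved. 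The eigenvectors $\rho\ealpha u^m$ all lie in the $\sigma = 0$ shift-class, and a standard argument (tracking the spectral radius of $\Psi\ealpha$ restricted to each shift-class $\sigma\neq 0$ and showing that it is strictly less than $1$, using the CPTP irreducibility of $\Psi\ealpha$) rules out extra peripheral eigenvalues; this forces $z\ealpha \le z$ and completes the identification $z\ealpha = z$.
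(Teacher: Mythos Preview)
Your first two paragraphs are correct and in some respects streamline the paper's treatment: you obtain $[u,\rho\ealpha]=0$ and $[u,\invalpha]=0$ directly from the conjugation covariance $\Phi\ealpha(u\,\cdot\,u^{-1})=u\,\Phi\ealpha(\cdot)\,u^{-1}$ together with simplicity of $\lambda\ealpha$, whereas the paper first passes to the CPTP map $\widehat\Phi\ealpha$ (your $\Psi\ealpha$) and only identifies its Perron--Frobenius unitary with $u$ a posteriori. Your construction of the eigenvectors $\rho\ealpha u^m$, $\invalpha u^{-m}$ and the simplicity transfer via $\eta\mapsto\eta u^{-m}$ are likewise valid.

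The gap is in the final paragraph, where you need $z\ealpha\le z$. Your ``standard argument'' does not do the job. First, CPTP irreducibility of $\Psi\ealpha$ only tells you that its peripheral eigenvectors $\widehat\rho\ealpha(u\ealpha)^k$ are block-diagonal with respect to the spectral projectors of its own Perron--Frobenius unitary $u\ealpha$, not those of $u$, so there is no a priori reason the off-diagonal $u$-shift-classes have spectral radius strictly below~$1$. Second, even granting that all peripheral eigenvectors lie in the $\sigma=0$ class, bounding their number by $z$ amounts to showing that $(\Psi\ealpha)^z|_{\B(\ran p_m)}$ has a single peripheral eigenvalue; but this restriction is, up to a similarity, a deformation of the primitive map $\Phi^z|_{\B(\ran p_m)}$, so you are back to the primitive instance of the very statement being proved. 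The paper closes this gap by a symmetry trick: the passage $(V_i)\mapsto(\widehat V_i(\alpha))$ from the Kraus family of $\Phi$ to that of $\Psi\ealpha$ is shown to be inverted by the same construction with parameter $-\alpha$, so the inequality $z\le z\ealpha$ (which used only that $\Psi\ealpha$ is such a deformation of $\Phi$) can be run in reverse to yield $z\ealpha\le z$. An alternative fix, closer in spirit to your approach, is to take the Perron--Frobenius unitary $u\ealpha$ of $\Psi\ealpha$, set $B=(\invalpha)^{-1/2}u\ealpha(\invalpha)^{1/2}$, deduce $V_iB=\theta\ealpha BV_i$ from $W_iu\ealpha=\theta\ealpha u\ealpha W_i$, and conclude $\Phi^*(B)=\overline{\theta\ealpha}\,B$; since $|\theta\ealpha|=1$ this forces $\theta\ealpha\in S_z$ and hence $z\ealpha\mid z$.
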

\begin{remark}
	For $\alpha=0$ we have $\lambda\ealpha=1$, $\invalpha=\id$ and $\rho\ealpha=\rho\invar$. Note also that $\lambda\ealpha$, $\invalpha$, $\rho\ealpha$ depend on the choice of $v=(v_i)_{i\in I}$.
\end{remark}

\begin{proof}
	By the criterion on irreducibility cited above, the map $\Phi\ealpha{}^*$ is completely positive and irreducible. Therefore, its spectral radius $\lambda\ealpha>0$ is a simple eigenvalue, which is locally isolated, with positive-definite eigenvector {$\invalpha$}. {Moreover, recall that  $\invalpha$ is the unique positive-definite eigenvector (up to a positive constant) associated to a positive eigenvalue.} By standard perturbation theory we can parameterize the map $\alpha\mapsto \invalpha$ to be analytic in a neighbourhood of the origin. This $\invalpha$ is defined up to a multiplicative constant, which we will specify later on. {We define a map $\widehat \Phi\ealpha$ and its adjoint $\widehat \Phi \ealpha{}^*$} by
	\begin{equation}
	\begin{aligned}\label{eqdefT}
	\widehat \Phi \ealpha{}(\eta)&=(\lambda\ealpha{})\inv \, ({\invalpha})^{1/2}  \Phi\ealpha{}\Big((\invalpha{})^{-1/2} \eta \,(\invalpha{})^{-1/2}\Big) (\invalpha{})^{1/2}\\
	\widehat \Phi \ealpha{}^*(X)&=(\lambda\ealpha{})\inv \, ({\invalpha})^{-1/2}  \Phi\ealpha{}^*\,\Big((\invalpha{})^{1/2} X (\invalpha{})^{1/2}\Big) (\invalpha{})^{-1/2}.
	\end{aligned}
	\end{equation}
	Note that $\widehat \Phi \ealpha{}$ writes $\widehat \Phi \ealpha{}(\rho)=\sum_{i\in I}\widehat V_i(\alpha)\rho \widehat V_i(\alpha)^*$, with
	\begin{align}\label{TonV}
	\widehat V_i(\alpha)={(v_i^\alpha/\lambda\ealpha)}^{1/2}(\invalpha{})^{+1/2}V_i\,(\invalpha{})^{-1/2}.
	\end{align}
	The application $\widehat \Phi \ealpha{}^*$ is completely positive, irreducible since $\invalpha$ entering in the definition of its Kraus operators is invertible, and satisfies $\widehat \Phi \ealpha{}^*(\id)=\id$.  Hence  the map $\widehat\Phi\ealpha$ is irreducible, completely positive and trace-preserving, so that $\widehat V=(\widehat V_i(\alpha))_{i\in I}\in \Kraus_I$. We can therefore define a map $T_v\ealpha$ on $\Kraus_I$ by $T\ealpha_v: V\mapsto \widehat V(\alpha)$. Note that $\widehat V_i(0)=V_i$, so that $\widehat V$ is a deformation of $V$. We have the following easy result:

	\begin{lemma} \label{lemma_Talphainverse}
		With the above notation (and fixed $v=(v_i)_{i\in I}$), for any $\alpha\in \rr$ the map $T_v\ealpha$ is invertible with inverse $T_v^{(-\alpha)}$.
	\end{lemma}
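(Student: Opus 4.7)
The plan is to unpack the definitions and compute $T_v^{(-\alpha)}(T_v^{(\alpha)}(V))$ explicitly, verifying that it returns $V$. The symmetric identity $T_v^{(\alpha)}(T_v^{(-\alpha)}(V))=V$ then follows by exchanging $\alpha\leftrightarrow -\alpha$. First I would note a preliminary invariance: the formula \eqref{TonV} defining $\widehat V_i(\alpha)$ is unchanged if $\mathrm{I}^{(\alpha)}$ is rescaled by any positive constant, since the factor $c^{1/2}$ coming from $(c\mathrm{I}^{(\alpha)})^{1/2}$ cancels the factor $c^{-1/2}$ from $(c\mathrm{I}^{(\alpha)})^{-1/2}$. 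Thus $T_v^{(\alpha)}(V)$ is well-defined without fixing a normalization of $\mathrm{I}^{(\alpha)}$, and we may freely choose any scaling below.

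The crucial observation is that, starting from the deformed Kraus family $\widehat V(\alpha)$ and applying deformation by $v_i^{-\alpha}$, the factors $v_i^{\alpha}$ and $v_i^{-\alpha}$ cancel exactly. Denoting, for the Kraus family $\widehat V(\alpha)$, the associated CPTP map by $\widehat\Phi^{(\alpha)}$ and its $(-\alpha)$-deformation by $\widetilde\Phi^{(-\alpha)}$, the direct computation
\begin{equation*}
\widetilde\Phi^{(-\alpha)}(\rho)
=\sum_{i\in I} v_i^{-\alpha}\,\widehat V_i(\alpha)\,\rho\,\widehat V_i(\alpha)^*
=(\lambda^{(\alpha)})^{-1}(\mathrm{I}^{(\alpha)})^{1/2}\,\Phi\bigl((\mathrm{I}^{(\alpha)})^{-1/2}\rho(\mathrm{I}^{(\alpha)})^{-1/2}\bigr)(\mathrm{I}^{(\alpha)})^{1/2}
\end{equation*}
shows that $\widetilde\Phi^{(-\alpha)}$ is obtained from the undeformed $\Phi$ by conjugation and a global rescaling by $(\lambda^{(\alpha)})^{-1}$. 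Since $\Phi$ is CPTP with spectral radius $1$ and $\Phi^*(\id)=\id$, the corresponding spectral data for $\widetilde\Phi^{(-\alpha)*}$ read off immediately: its spectral radius is $(\lambda^{(\alpha)})^{-1}$, and the positive-definite eigenvector for this eigenvalue is (up to a positive scalar, irrelevant by the preliminary invariance) $(\mathrm{I}^{(\alpha)})^{-1}$, as is checked by applying $\widetilde\Phi^{(-\alpha)*}$ to it and using $\Phi^*(\id)=\id$.

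Finally, substituting these identifications into the definition of $T_v^{(-\alpha)}$ gives
\begin{equation*}
T_v^{(-\alpha)}\bigl(\widehat V(\alpha)\bigr)_i
=\Bigl(\tfrac{v_i^{-\alpha}}{(\lambda^{(\alpha)})^{-1}}\Bigr)^{1/2}(\mathrm{I}^{(\alpha)})^{-1/2}\,\widehat V_i(\alpha)\,(\mathrm{I}^{(\alpha)})^{1/2},
\end{equation*}
and inserting the definition of $\widehat V_i(\alpha)$ from \eqref{TonV} makes the factors of $v_i^{\pm\alpha}$, $\lambda^{(\alpha)}$ and $(\mathrm{I}^{(\alpha)})^{\pm 1/2}$ cancel in pairs, leaving exactly $V_i$. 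This proves $T_v^{(-\alpha)}\circ T_v^{(\alpha)}=\mathrm{id}$ on $\mathfrak K_I$, and swapping the roles of $\alpha$ and $-\alpha$ gives the other composition, so $T_v^{(\alpha)}$ is a bijection with inverse $T_v^{(-\alpha)}$. The only genuine point requiring care is the uniqueness (up to a positive scalar) of the positive-definite eigenvector $\widehat{\mathrm{I}}^{(-\alpha)}$, which follows from the irreducibility of $\widetilde\Phi^{(-\alpha)}$ — itself inherited from the irreducibility of $\Phi$ via the conjugation formula above.
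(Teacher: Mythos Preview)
Your proof is correct and follows essentially the same route as the paper: you compute the $(-\alpha)$-deformation of $\widehat V(\alpha)$, recognize it as a conjugate of the undeformed $\Phi$ scaled by $(\lambda^{(\alpha)})^{-1}$, identify $(\mathrm{I}^{(\alpha)})^{-1}$ as the positive-definite Perron eigenvector of the dual, and conclude by cancellation. Your added remarks on the normalization invariance of $T_v^{(\alpha)}$ and on the role of irreducibility are welcome clarifications but do not change the argument.
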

	\begin{proof}[Proof of Lemma \ref{lemma_Talphainverse}]
		For $\rho\in\I_1(\H)$ consider
      \begin{equation*}
         \rho \mapsto \sum_i v_i^{-\alpha}\, \widehat V_i(\alpha)\,\rho\,\widehat V_i(\alpha)^* = (\lambda\ealpha)\inv\sum_i (\invalpha{})^{+1/2} V_i \,(\invalpha{})^{-1/2} \,\rho\, (\invalpha{})^{-1/2} V_i^* (\invalpha{})^{+1/2}.
      \end{equation*}
		The dual of this map is
		$$X\mapsto (\lambda\ealpha)\inv \sum_i (\invalpha{})^{-1/2}V_i^* (\invalpha{})^{+1/2} \, X \, (\invalpha{})^{+1/2} V_i \,(\invalpha{})^{-1/2},$$
		which admits $(\invalpha{})\inv$ as an eigenvector for $(\lambda\ealpha{})\inv$. Since $(\invalpha{})\inv$ is positive-definite, $(\lambda\ealpha{})\inv$ is the spectral radius of this map, with associated eigenvector $(\invalpha{})\inv$. Applying the above definition of~$T_v^{(-\alpha)}$ therefore shows that $T_v^{(-\alpha)}(\widehat V)$ consists of maps $(v_i^{-\alpha}/\lambda\ealpha{}\inv)^{1/2} \, (\invalpha{})^{-1/2} \widehat V_i(\alpha) \,(\invalpha{})^{+1/2}=V_i$.
	\end{proof}

	As mentioned above, $\widehat\Phi\ealpha$ is an irreducible CPTP map. From the results recalled above, its peripheral spectrum is of the form $S_{z\ealpha}$, all peripheral eigenvalues are simple, and an eigenvector associated with $\theta\ealpha=\e^{2\i j \pi/z\ealpha}$ is of the form $\widehat\rho\ealpha (u\ealpha)^m$ with $\widehat\rho\ealpha\in \D(\H)$ positive-definite and $u\ealpha$ unitary.
   Remark already that since $\widehat\rho\ealpha$ is associated with the simple, isolated eigenvalue~$1$, we can parameterize $\alpha\mapsto\widehat\rho\ealpha$ to be analytic in a neighbourhood of the origin. In addition, an operator $\eta\in \mathcal I_1(\H)$ is an eigenvector of $\Phi\ealpha$ for the eigenvalue $\mu$ if and only if $(\invalpha{})^{+1/2}\eta \,(\invalpha{})^{+1/2}$ is an eigenvector of $\widehat \Phi \ealpha$ for the eigenvalue $(\lambda\ealpha{})\inv \mu$.
   Therefore, $\rho\ealpha=(\invalpha{})^{-1/2}\widehat\rho\ealpha\, (\invalpha{})^{-1/2}$ is an eigenvector of $\Phi\ealpha$ associated with $\lambda\ealpha$,
    the peripheral spectrum of $\Phi\ealpha$ is $\lambda\ealpha S_{z\ealpha}$, and the peripheral eigenvalues are simple.
   Because the definition of $\widehat \Phi \ealpha{}$ does not depend on the free multiplicative constant in $\invalpha$, this $\widehat\rho\ealpha$ is uniquely defined; we can therefore fix the constant in $\invalpha$ so that $\rho\ealpha$ has trace one.
   We now prove that $z\ealpha$ is independent of $\alpha$, and that $u\ealpha$ can be chosen to be constant equal to $u$.

	\begin{lemma} \label{lemma_zalphaconstant}
		With the above notations we have $z\ealpha=z$ for all $\alpha\in\rr$.
	\end{lemma}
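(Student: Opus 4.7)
The plan is to prove $z\ealpha = z$ via two divisibility relations $z \mid z\ealpha$ and $z\ealpha \mid z$. For the forward direction, the strategy is to exhibit explicit eigenvectors of $\Phi\ealpha$ for the peripheral eigenvalues $\lambda\ealpha \theta^m$ with $\theta = \e^{2\i\pi/z}$ and $m = 0, \dotsc, z-1$.

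First I would establish that the Perron--Frobenius unitary $u$ of $\Phi$ commutes with both $\invalpha$ and $\rho\ealpha$. The key ingredient is the structural relation $V_i u = \theta u V_i$ (equivalent to $\proj_m V_i = V_i \proj_{m+1}$ recalled just before the proposition) together with its adjoint form $u^{-1} V_i^* = \bar\theta V_i^* u^{-1}$. A direct computation using these identities shows that both $u \invalpha u^{-1}$ and $u \rho\ealpha u^{-1}$ are again positive-definite eigenvectors of $\Phi\ealpha{}^*$ and $\Phi\ealpha$ for their respective spectral radii $\lambda\ealpha$, the factors of $\theta$ and $\bar\theta$ cancelling when $u$ is commuted through $V_i$ on the left and through $V_i^*$ on the right. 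Since $\lambda\ealpha$ is a simple eigenvalue of an irreducible completely positive map, its positive-definite eigenvectors are unique up to a positive scalar; matching traces then forces $u \invalpha u^{-1} = \invalpha$ and $u \rho\ealpha u^{-1} = \rho\ealpha$.

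Next, iterating $V_i u = \theta u V_i$ gives $V_i u^m = \theta^m u^m V_i$ for every $m$; combined with $[\rho\ealpha, u] = 0$, a short computation yields $\Phi\ealpha(\rho\ealpha u^m) = \lambda\ealpha \theta^m \rho\ealpha u^m$ for each $m = 0, \dotsc, z-1$. As $\rho\ealpha$ and $u$ are both invertible, the operators $\rho\ealpha u^m$ are nonzero, and the $z$ values $\lambda\ealpha \theta^m$ are distinct peripheral eigenvalues of $\Phi\ealpha$. This yields $\lambda\ealpha S_z \subseteq \lambda\ealpha S_{z\ealpha}$, i.e., $z \mid z\ealpha$.

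For the reverse inequality I would invoke Lemma \ref{lemma_Talphainverse}: the map $\Phi$ is itself obtained from $\widehat\Phi\ealpha$ via the inverse deformation $T_v^{(-\alpha)}$, and $\widehat\Phi\ealpha$ is irreducible and CPTP with peripheral order $z\ealpha$ (in particular it admits its own Perron--Frobenius unitary satisfying the analogous commutation with its Kraus operators $\widehat V_i(\alpha)$). Running the argument above with the roles of $\Phi$ and $\widehat\Phi\ealpha$ interchanged yields $S_{z\ealpha} \subseteq S_z$, hence $z\ealpha \mid z$, and combining the two divisibilities gives $z\ealpha = z$. The main obstacle will be careful bookkeeping of the two families of commutation relations and their adjoints when moving $u^m$ through products of Kraus operators; once this is done cleanly, the argument reduces to uniqueness of positive Perron eigenvectors for irreducible completely positive maps.
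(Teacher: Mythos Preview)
Your argument is correct, but it takes a different route from the paper's. The paper works on the adjoint side and with the spectral projectors $p_j$ of $u$ rather than with $u$ itself: from $p_j V_i = V_i p_{j+1}$ one gets $\Phi\ealpha{}^*(p_j \invalpha p_j) = \lambda\ealpha p_{j+1}\invalpha p_{j+1}$, and then $\sum_j \e^{2\i\pi jn/z} p_j \invalpha p_j$ is directly seen to be a nonzero eigenvector of $\Phi\ealpha{}^*$ for $\lambda\ealpha \e^{-2\i\pi n/z}$, $n=0,\dots,z-1$. This gives $z\le z\ealpha$ without ever touching the commutation $[u,\invalpha]=0$ or $[u,\rho\ealpha]=0$; those facts are only established afterwards (Lemma~\ref{lem:understand_ualpha}), by a different method. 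Your approach instead front-loads the commutation relations (using uniqueness of the positive Perron eigenvector) and then reads off the eigenvectors $\rho\ealpha u^m$ of $\Phi\ealpha$. Both halves then close the argument identically via Lemma~\ref{lemma_Talphainverse}. The paper's route is slightly leaner for this lemma alone; yours has the side benefit of delivering the commutation $[u,\rho\ealpha]=[u,\invalpha]=0$ essentially for free, which the paper obtains later by a separate computation.
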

	\begin{proof}[Proof of Lemma \ref{lemma_zalphaconstant}]
		We have $\Phi\ealpha{}^*(\proj_j X \proj_j)= \proj_{j+1} \Phi\ealpha{}^*(X) \proj_{j+1}$ for all $X$ in~$\B(\H)$ from the commutation relations for $p_j$ and $V_i$. In addition, each $p_j \invalpha p_j$ is nonzero since $\invalpha$ is positive-definite, and satisfies $\Phi\ealpha{}^*(p_j \invalpha p_j) = \lambda\ealpha p_{j+1} \invalpha p_{j+1}$. This implies by a direct computation that for any $n=0,\ldots,z-1$, the non-zero operator
		$\sum_{j=0}^{z-1} \e^{2\i\pi jn/z} p_j \invalpha p_j$ is an eigenvector of $\Phi\ealpha{}^*$ for the eigenvalue $\lambda\ealpha\e^{-2\i\pi n/z}$, so that $\Phi\ealpha{}^*$ has at least $z$ peripheral eigenvalues, and $z\leq z\ealpha$.

    Lemma \ref{lemma_Talphainverse} shows that this same inequality applied to $\widehat \Phi\ealpha$ in place of $\Phi$ and $-\alpha$ in place of $\alpha$ gives $z\ealpha \leq z$. We therefore have $z=z\ealpha$.
	\end{proof}
	This implies in turn that $u\ealpha$ is an eigenvector of $\widehat \Phi\ealpha{}^*$ for the simple isolated eigenvalue $\overline\theta$, so that we can parameterize $\alpha\mapsto u\ealpha$ to be analytic in a neighbourhood of the origin.

	\begin{lemma} \label{lem:understand_ualpha}
		With the above notation, we have  $u\ealpha=u$ and $[\invalpha,u]=0$, $[\rho\ealpha,u]=0$ for all $\alpha\in \rr$.
	\end{lemma}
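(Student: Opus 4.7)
The plan is to establish the three claims in the order $[\invalpha,u] = 0$, $[\rho\ealpha, u] = 0$, and finally $u\ealpha = u$. The first two will follow from a block-diagonality argument exploiting simplicity of $\lambda\ealpha$ as a peripheral eigenvalue, and the third will be deduced by verifying that $u$ itself satisfies the defining commutation relations of a Perron--Frobenius unitary for $\widehat\Phi\ealpha$.

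For the commutation of $\invalpha$ with $u$, I would begin from the relation $\proj_m V_i = V_i \proj_{m+1}$ recalled earlier in this section to derive
\begin{equation*}
\Phi\ealpha{}^*(\proj_j X \proj_j) = \proj_{j+1}\Phi\ealpha{}^*(X)\proj_{j+1}
\end{equation*}
for all $X \in \B(\H)$. Applied to $X = \invalpha$, this yields $\Phi\ealpha{}^*(\proj_j\invalpha\proj_j) = \lambda\ealpha\proj_{j+1}\invalpha\proj_{j+1}$, so summing over $j$ produces an eigenvector of $\Phi\ealpha{}^*$ for the eigenvalue $\lambda\ealpha$. Since $\invalpha>0$ and the $\proj_j$ form an orthogonal resolution of the identity, $\sum_j \proj_j\invalpha\proj_j$ is itself positive-definite, hence nonzero; by simplicity of $\lambda\ealpha$ the two operators are proportional, and matching traces forces $\invalpha = \sum_j \proj_j\invalpha\proj_j$. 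This block-diagonal structure with respect to the spectral decomposition of $u=\sum_m\theta^m\proj_m$ is equivalent to $[\invalpha,u]=0$. The argument for $\rho\ealpha$ is symmetric, using the dual identity $\Phi\ealpha(\proj_{m+1}\rho\proj_{m+1}) = \proj_m\Phi\ealpha(\rho)\proj_m$ and the unit-trace normalization of $\rho\ealpha$ in place of the trace comparison.

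To conclude $u\ealpha = u$, I would use $[\invalpha,u]=0$ to get $[(\invalpha)^{\pm 1/2}, u] = 0$, combine with $V_i u = \theta u V_i$, and invoke the explicit formula \eqref{TonV} for $\widehat V_i(\alpha)$ to obtain $\widehat V_i(\alpha)\,u = \theta\,u\,\widehat V_i(\alpha)$ for every $i$. Hence $u$ is a Perron--Frobenius unitary for the irreducible CPTP map $\widehat\Phi\ealpha$; since $z\ealpha = z$ by Lemma \ref{lemma_zalphaconstant}, the eigenspace of $\widehat \Phi\ealpha{}^*$ for the simple peripheral eigenvalue $\overline\theta$ is one-dimensional, so $u\ealpha$ is determined up to a unimodular factor, which we may fix so that $u\ealpha = u$.

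The only technical subtlety I anticipate is justifying strict positive-definiteness (not merely semi-definiteness) of $\sum_j \proj_j \invalpha \proj_j$, on which the simplicity argument crucially depends. This reduces to a short check: if $\braket{\psi,\sum_j \proj_j \invalpha \proj_j \psi} = 0$, then each $\braket{\proj_j \psi, \invalpha \proj_j\psi}$ vanishes, and positive-definiteness of $\invalpha$ forces $\proj_j\psi = 0$ for every $j$, hence $\psi = 0$.
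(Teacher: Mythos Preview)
Your argument is correct and takes a genuinely different route from the paper's. The paper first shows by direct computation that $(\invalpha)^{1/2} u (\invalpha)^{-1/2}$ is an eigenvector of $\widehat\Phi\ealpha{}^*$ for the simple eigenvalue $\overline\theta$, whence $(\invalpha)^{1/2} u (\invalpha)^{-1/2}=\gamma\ealpha u\ealpha$ for some scalar $\gamma\ealpha$; unitarity of $u\ealpha$ then forces $u^*\invalpha u=\invalpha$, i.e.\ $[\invalpha,u]=0$, and analyticity of $\alpha\mapsto u\ealpha$ with $u^{(0)}=u$ pins down $\gamma\ealpha\equiv 1$. Finally $[\rho\ealpha,u]=0$ is obtained from the general fact $[\widehat\rho\ealpha,u\ealpha]=0$. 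You instead reverse the logic: the block-diagonality argument (essentially the same device already used in the proof of Lemma~\ref{lemma_zalphaconstant}) gives $[\invalpha,u]=0$ and $[\rho\ealpha,u]=0$ directly from simplicity of $\lambda\ealpha$, and only then do you verify that $u$ itself satisfies $\widehat V_i(\alpha)u=\theta u\widehat V_i(\alpha)$, so that one may simply take $u\ealpha=u$. Your route has the mild advantage of dispensing with the analyticity step to fix the phase, at the cost of running the block-diagonal trace-matching twice; the paper's route packages both commutation relations into the single identification of $u\ealpha$ but needs continuity in $\alpha$ to close the argument.
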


	\begin{proof}[Proof of Lemma \ref{lem:understand_ualpha}]
		Consider the simple eigenvalue $\overline\theta$ of $\widehat \Phi\ealpha{}^*$. The associated eigenspace is one-dimensional and contains $u\ealpha$. We also show that $(\invalpha{})^{+1/2}u\, (\invalpha{})^{-1/2}$ is another eigenvector of $\overline\theta$ for $\hat \Phi\ealpha{}^*$:
		\begin{align*}
			\hat \Phi\ealpha{}^*\big((\invalpha{})^{+1/2} u\, (\invalpha{})^{-1/2}\big)
			&= \sum_i \frac{v_i^\alpha}{\lambda\ealpha} (\invalpha{})^{-1/2} V_i^* \, \invalpha u \, V_i (\invalpha{})^{-1/2}\\
			&= \overline\theta  \sum_i \frac{v_i^\alpha}{\lambda\ealpha} (\invalpha{})^{-1/2} V_i^* \invalpha  V_i u\,(\invalpha{})^{-1/2}\\
			&= \frac{\overline \theta}{ \lambda\ealpha} \,(\invalpha{})^{-1/2} \big(\sum_i v_i^\alpha V_i^* \invalpha V_i\big) u\,(\invalpha{})^{-1/2}\\
			&= \overline \theta \,(\invalpha{})^{+1/2} u \,(\invalpha{})^{-1/2}.
		\end{align*}
		We therefore have $(\invalpha{})^{+1/2} u \,(\invalpha{})^{-1/2}=\gamma\ealpha u\ealpha$ for some $\gamma\ealpha\in \cc$, and the relation $u^z=(u\ealpha)^z=\id$ requires that $\gamma\ealpha$ is a $z$th root of unity. Now, $(u\ealpha)^* u\ealpha=\id$ implies that $u^* \invalpha u = \invalpha$, so that $[\invalpha,u]=0$. This finally gives us $u=\gamma\ealpha u\ealpha$  and since we chose $u\ealpha$ to be analytic in~$\alpha$, the phase $\gamma\ealpha$ is necessarily $1$. Last, $[\widehat\rho\ealpha,u\ealpha]=0$ and this implies $[\rho\ealpha,u]=0$.
	\end{proof}

	We can now conclude the proof of Proposition \ref{prop_periphspectrumPhialpha}. The validity of our parameterizations rely only on the fact that the peripheral eigenvalues for $\Phi\ealpha$,  $\widehat\Phi\ealpha$ and $\widehat\Phi\ealpha{}^*$ are isolated. Since the peripheral spectra for these maps are, respectively, $\lambda\ealpha S_z$, $S_z$ and $S_z$, all peripheral eigenvalues are isolated uniformly for $\alpha$ in any compact set containing the origin. This allows us to extend all parameterizations to be analytic on $\rr$. Last, the eigenvector of $\Phi\ealpha$ (resp.\ $\Phi\ealpha{}^*$) associated with the eigenvalue $\lambda\ealpha \theta^m$ is $\rho\ealpha u^m$ (resp. $\invalpha u^{-m}$), and this gives the form of the corresponding spectral projectors.
\end{proof}
The preceding results also give some information about the peripheral spectrum of $\Phi\ealpha$ for complex~$\alpha$, as the following corollary shows:
\begin{corollary} \label{coro_specperiphalphacomplex}
	Let $(V_i)_{i\in I}$ and $v=(v_i)_{i\in I}$ be as in Proposition \ref{prop_periphspectrumPhialpha}.
   For any $\alpha_0$ in $\rr$, there exists a neighbourhood $N_{\alpha_0}$ of $\alpha_0$ in $\cc$, such that for $\alpha$ in $N_{\alpha_0}$ the peripheral spectrum of $\Phi\ealpha$ is of the form $\{\lambda\ealpha \theta^m \,|\, m=0,\ldots,z-1\}$ for some $\lambda\ealpha$ in $\cc$.
\end{corollary}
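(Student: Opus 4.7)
The plan is to combine an intertwining identity valid for all complex $\alpha$ with standard analytic perturbation theory around the real point $\alpha_0$.

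First, I would establish the intertwining relation
\[
   \Phi^{(\alpha)}(\rho\, u) = \theta \, \Phi^{(\alpha)}(\rho)\, u
\]
for every $\alpha \in \cc$ and every $\rho \in \I_1(\H)$. This follows from the identity $V_i u = \theta u V_i$ (recalled right after \eqref{eq_relationspfunitary}), which yields $u V_i^* = \theta V_i^* u$, so
\[
   \Phi^{(\alpha)}(\rho u)
   = \sum_i v_i^\alpha V_i \rho u V_i^*
   = \theta \sum_i v_i^\alpha V_i \rho V_i^* u
   = \theta \Phi^{(\alpha)}(\rho)u.
\]
Iterating, $\rho u^m$ is an eigenvector of $\Phi^{(\alpha)}$ with eigenvalue $\theta^m \mu$ whenever $\rho$ is an eigenvector with eigenvalue $\mu$.

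Next, since $\alpha_0 \in \rr$, Proposition \ref{prop_periphspectrumPhialpha} tells us that $\lambda^{(\alpha_0)}$ is a simple, isolated eigenvalue of $\Phi^{(\alpha_0)}$, with eigenvector $\rho^{(\alpha_0)}$. The map $\cc \ni \alpha \mapsto \Phi^{(\alpha)}$ is entire (it is a finite sum of the entire scalar functions $\alpha\mapsto v_i^\alpha$ times fixed operators), so by standard analytic perturbation theory (Kato) there exist a complex neighbourhood $N_{\alpha_0}$ of $\alpha_0$ and holomorphic functions $\alpha \mapsto \lambda(\alpha) \in \cc$ and $\alpha \mapsto \rho(\alpha) \in \I_1(\H)$ on $N_{\alpha_0}$, with $\lambda(\alpha_0)=\lambda^{(\alpha_0)}$, $\rho(\alpha_0)=\rho^{(\alpha_0)}$, such that $\Phi^{(\alpha)}\rho(\alpha) = \lambda(\alpha) \rho(\alpha)$ and $\lambda(\alpha)$ remains a simple eigenvalue. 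Applying the intertwining identity, $\rho(\alpha)u^m$ is then an eigenvector of $\Phi^{(\alpha)}$ for the eigenvalue $\lambda(\alpha) \theta^m$, for each $m = 0, \dots, z-1$.

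It remains to see that no other eigenvalue catches up in modulus, i.e.\ that $\{\lambda(\alpha)\theta^m\}_{m=0}^{z-1}$ is the full peripheral spectrum. At $\alpha_0$, the peripheral spectrum of $\Phi^{(\alpha_0)}$ is exactly $\lambda^{(\alpha_0)}S_z$ and the remaining spectrum has modulus strictly less than $|\lambda^{(\alpha_0)}|$, so there is a spectral gap $\delta>0$. By upper semicontinuity of the spectrum for holomorphic operator-valued families, after possibly shrinking $N_{\alpha_0}$, the spectrum of $\Phi^{(\alpha)}$ can be partitioned into $z$ simple eigenvalues lying within $\delta/3$ of the points $\lambda^{(\alpha_0)}\theta^m$, and a remainder contained in the disk of radius $|\lambda^{(\alpha_0)}| - \delta/2$. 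The first group is exactly $\{\lambda(\alpha)\theta^m\}_m$ by the previous paragraph (they are $z$ simple eigenvalues converging to the right points), and these all have modulus $|\lambda(\alpha)| > |\lambda^{(\alpha_0)}|-\delta/3$ in the shrunken neighbourhood; hence they form the peripheral spectrum of $\Phi^{(\alpha)}$. The main technical point — which is really the only subtlety — is precisely this last separation argument, but given the spectral gap and analyticity it is routine.
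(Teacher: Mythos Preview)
Your proof is correct and follows essentially the same approach as the paper: analytic perturbation of the simple peripheral eigenvalues around $\alpha_0$, combined with the intertwining relation to force the perturbed eigenvalues to be of the form $\lambda(\alpha)\theta^m$, and a spectral-gap/continuity argument to ensure nothing else reaches the peripheral circle. The only cosmetic difference is that the paper runs the intertwining on the adjoint side (using $\Phi^{(\alpha)*}(\invalpha u^{-m})=\overline{\lambda^{(\alpha)}_0}\,\theta^m\,\invalpha u^{-m}$) rather than on $\Phi^{(\alpha)}$ itself, which is an equivalent dual formulation.
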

\begin{proof}
	By Proposition \ref{prop_periphspectrumPhialpha}, the peripheral spectrum of $\Phi^{(\alpha_0)}$ is $\{\lambda^{(\alpha_0)} \theta^m \,|\, m=0,\ldots,z-1\}$. By standard perturbation theory, for $m=0,\ldots,z-1$ there exist analytic functions $\alpha\mapsto \lambda^{(\alpha)}_m$ defined on a neighbourhood of $\alpha_0$, such that $\lambda^{(\alpha_0)}_m=\lambda^{(\alpha_0)} \theta^m$ and the $\lambda^{(\alpha)}_m$ are eigenvalues of $\Phi\ealpha$.
   In particular, there exists a (complex) neighbourhood $N_{\alpha_0}$ of $\alpha_0$ such that for $\alpha$ in $N_{\alpha_0}$, any eigenvalue of $\Phi\ealpha$ of maximum modulus is one of the $\lambda^{(\alpha)}_m$. Denote (consistently with the above notation) by $\invalpha$ an eigenvector of $\Phi\ealpha{}^*$ for $\overline{\lambda}^{(\alpha)}_0$.
   Since $u^{-1} V_i=\theta V_i u^{-1}$ we have $\Phi\ealpha{}^*(\invalpha u^{-m})=\overline{\lambda}^{(\alpha)}_0 \theta^m \invalpha u^{-m}$, so that $\overline{\lambda}^{(\alpha)}_0\,\theta^m$ is an eigenvalue of $\Phi\ealpha{}^*$ for $m=0,\ldots,z-1$.
   Since all such $\lambda^{(\alpha)}_0\,\theta^m$ have the same modulus, one has necessarily $\lambda^{(\alpha)}_m=\lambda^{(\alpha)}_0\,\theta^m$ for $m=0,\ldots,z-1$. The conclusion follows by letting $\lambda\ealpha:=\lambda^{(\alpha)}_0$.
\end{proof}

The next result gives the asymptotics of the spectral radius of~$\Phi\ealpha$ as~$\alpha\to\pm\infty$.
\begin{proposition} \label{prop_asymptoticssprphialpha}
	Let $(V_i)_{i\in I}$ and $v=(v_i)_{i\in I}$ be as in Proposition \ref{prop_periphspectrumPhialpha} and assume that $V_i\neq 0$ for all $i\in I$. Define $v_+=\max_{i\in I} v_i$, $v_-=\min_{i\in I} v_i$, and
	\begin{gather*}
		I_\pm=\{i\in I \,|\, v_i=v_\pm\}, \quad \Phi_\pm=\sum_{i\in I_\pm}V_i \cdot V_i^*, \quad\lambda_\pm=\spr \Phi_\pm.
	\end{gather*}
	Then
	\[\spr \Phi\ealpha = v_\pm^\alpha \big( \lambda_\pm+ o(1)\big)\ \mbox{ for }\ \alpha\to\pm\infty.\]
\end{proposition}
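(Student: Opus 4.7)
The plan is to factor out the dominant scale $v_\pm^\alpha$ and show that what remains converges in norm to $\Phi_\pm$, then invoke continuity of the spectral radius.

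First I would treat the case $\alpha \to +\infty$. Set $\epsilon_i := v_i/v_+ \in (0,1]$, so that $\epsilon_i = 1$ exactly when $i \in I_+$ and $\epsilon_i < 1$ otherwise. Dividing the defining expression of $\Phi^{(\alpha)}$ by $v_+^\alpha$ gives
\[
v_+^{-\alpha}\Phi^{(\alpha)} = \sum_{i\in I_+} V_i\,\cdot\,V_i^* + \sum_{i\notin I_+} \epsilon_i^\alpha V_i\,\cdot\,V_i^* = \Phi_+ + R^{(\alpha)}_+,
\]
and since $\epsilon_i^\alpha \to 0$ as $\alpha\to+\infty$ for each $i\notin I_+$, the remainder $R^{(\alpha)}_+$ tends to zero in operator norm on $\mathcal B(\I_1(\H))$. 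The case $\alpha \to -\infty$ is symmetric upon setting $\delta_i := v_i/v_- \in [1,\infty)$ with $\delta_i = 1$ exactly on $I_-$, and writing $v_-^{-\alpha}\Phi^{(\alpha)} = \Phi_- + R^{(\alpha)}_-$ with $\delta_i^\alpha \to 0$ as $\alpha\to-\infty$ for $i\notin I_-$.

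Next I would apply the continuity of the spectral radius. In finite dimensions, the eigenvalues of a matrix depend continuously on its entries (as an unordered multiset), so $A \mapsto \spr A$ is a continuous function on $\B(\I_1(\H))$. Therefore
\[
\spr\bigl(v_\pm^{-\alpha}\Phi^{(\alpha)}\bigr) \xrightarrow[\alpha\to\pm\infty]{} \spr \Phi_\pm = \lambda_\pm.
\]
Using the homogeneity property $\spr(cA) = |c|\,\spr A$ for scalars $c > 0$, this rewrites as
\[
\spr \Phi^{(\alpha)} = v_\pm^\alpha \bigl(\lambda_\pm + o(1)\bigr) \qquad \text{as } \alpha\to\pm\infty,
\]
which is exactly the claim.

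I do not expect any serious obstacle here: the argument reduces to norm convergence of the rescaled operator together with continuity of eigenvalues in finite dimensions. The one point worth a brief comment is that $\lambda_\pm > 0$, which follows from the assumption $V_i\neq 0$ for all $i\in I$ (in particular for $i\in I_\pm$), so that $\Phi_\pm$ is a nonzero completely positive map on a finite-dimensional space and hence has strictly positive spectral radius by the Perron--Frobenius theory invoked earlier in the appendix; but the statement of the proposition itself only requires the asymptotic identity and holds irrespective of whether $\lambda_\pm$ vanishes.
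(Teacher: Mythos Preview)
Your proof is correct and follows essentially the same approach as the paper's: write $\Phi^{(\alpha)} = v_\pm^\alpha\bigl(\Phi_\pm + o(1)\bigr)$ by factoring out the dominant scale, then invoke continuity of the spectrum in finite dimensions (the paper phrases this as ``standard perturbation theory''). Your closing remark about $\lambda_\pm>0$ is a useful caveat---a nonzero CP map can in fact be nilpotent (take a single nilpotent Kraus operator), so that claim is not automatic from the hypotheses stated; but, as you note, the asymptotic identity in the proposition holds regardless.
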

\begin{proof}
	Remark that $\lambda_\pm\neq 0$. The statement follows immediately from $\Phi\ealpha = v_\pm^\alpha \big(\Phi_\pm + o(1)\big)$ and standard perturbation theory.
\end{proof}

\section{Adiabatic theorem for discrete non-unitary evolutions} \label{app:DNUAT}

We devote this section to elements of adiabatic theory that are suitable for discrete non-unitary time evolution. To be precise, the theory is applicable to discrete dynamics arising from a family~$(F(s))_{s \in [0,1]}$ of maps from a Banach space $X$ to itself satisfying
\begin{enumerate}[label=\textbf{Hyp\arabic*},start=0]
	\item\label{it:cont}  The mapping $s \mapsto F(s)$ is a continuous $\B(X)$-valued function of~$s \in [0,1]$;
	\item\label{it:sp-in-D} For all $s \in [0,1]$, $\spr F(s) = 1$;
	\item\label{it:semi-s-no-cross} The peripheral spectrum of $F(s)$ consists of finitely many isolated semi-simple eigenvalues for all ~$s \in [0,1]$;
	\item\label{it:P-C2}  With $P(s)$ the spectral projector of~$F(s)$ onto the peripheral eigenvalues,  the map $s \mapsto F^P(s):=F(s)P(s)$ is a~$C^2$ $\B(X)$-valued function of~$s \in [0,1]$;
	\item\label{it:spr-Q} With $Q(s) := \one - P(s)$,
	$$
	\ell := \sup_{s \in [0,1]} \spr F(s) Q(s) < 1.
	$$
\end{enumerate}
We call such a family \emph{admissible} for our adiabatic theorems. We emphasize that hypotheses are stated in terms of spectral radii, and not of norms as was the case for the hypotheses~\cite{HJPR1}, which we recall here (adapting slightly the notation for coherence) for comparison:

{\footnotesize
	\begin{quote}
		\begin{itemize}
			\item[\textbf{\textup{H1}.}] For all $s \in [0,1]$, $\|F(s)\|\leq 1$, {i.e.} $F(s)$ is a contraction; \label{Hcontraction}

			\item[\textbf{\textup{H2}.}] There is a uniform gap $\epsilon>0$ such that, for $s \in [0,1]$, each \textit{peripheral eigenvalue} $e^j(s)\in \sp F(s)\cap S^1$ is simple, and $|e^j(s)-e^i(s)| > 2\epsilon$ for any $e^j(s)\neq e^i(s)$ in $\sp F(s)\cap S^1$;

			\item[\textbf{\textup{H3}.}] Let $P^m(s)$ be the spectral projector associated with $e^m(s) \in \sp\L(s) \cap S^1$, and $P(s)=\sum_m P^m(s)$ the \textit{peripheral spectral projector}. The map $s\mapsto F^P(s) := F(s) P(s)$ is $C^2$ on $[0,1]$;

			\item[\textbf{\textup{H4}.}]  With $Q(s) := \one - P(s)$,
			\begin{align*}
			\ell:=\sup_{s\in[0,1]}\| F(s)Q(s)\| < 1.
			\end{align*}
		\end{itemize}
	\end{quote}
}

In applications, the Banach space~$X$ is again~$\I_1(\H_\sys)$ equipped with the trace norm, and the role of~$F(s)$ is played by appropriate deformations of the reduced dynamics~$\L(s)$ arising from a repeated interaction system satisfying \ref{ADRIS}.

\subsection{Adiabatic theorem for products of projectors}
We start with a result about products of projectors.

\begin{definition} \label{def_intertwining}
	Let $\big(P^m(s)\big)_{s\in [0,1]}$, $m=1,\ldots,z$ be $C^1$ families of projector-valued operators in a Banach space~$X$, satisfying $\sum_{m=1}^z P^m(s)=\id$ for all $s\in[0,1]$. Let $W: [0,1]\rightarrow \B(X)$ be the family of \emph{intertwining operators} given by
	\begin{equation}\label{intertw}
	W'(s)=\sum_{m=1}^z P^m{}'(s)P^m(s)\,W(s), \ W(0)=\one,
	\end{equation}
	where $P^m{}'(s)$ is the derivative of $s\mapsto P^m(s)$.
\end{definition}
Standard results (see e.g. Section II.5 in \cite{Kato}) imply that \begin{equation} \label{eq_intertwining}
W(s)P^m(0)=P^m(s)W(s)
\end{equation}
for all $s\in[0,1]$ and $m=1,\ldots,z$, and that $W(s)$ is invertible, with inverse $W^{-1}(s)$ solution to $${V}'(s)=-\sum_{m=1}^z V(s)P^m{}'(s)P^m(s).$$ Note that, if we are given a single $C^1$ family $(P(s))_{s \in [0,1]}$ of operators then we can apply the above to $P^1(s)=P(s)$, $P^2(s)=\id-P(s)$.

Remark that we have the immediate relations
\begin{equation} \label{eq_basicrelations}
P^m(s) P^m{}'(s) P^m(s)=0
\end{equation}
for all $s\in[0,1]$ and $m=1,\ldots,z$. For any family $\big(P^m(s)\big)_{s\in [0,1]}$ as above we will denote by $C_P$ the quantity
\begin{equation} \label{eq_ssmPPp}
C_P=\sup_{s\in[0,1]}\sup_{m=1,\ldots,z}\max\big(\|P^m(s)\|,\|P^m{}'(s)\|\big).
\end{equation}

\begin{proposition}\label{adproj}
	Let $\big(P^m(s)\big)_{s\in [0,1]}$  and $W$ be as in Definition \ref{def_intertwining}. Then there exists $C>0$ and $T_0\in\nn$ such that for $T\geq T_0$ and $\{s_k\}_{k\in\{0,1,2,\dots, T\}}\subset [0,1]$ with $|s_k-s_{k-1}|=1/T$ with $k\leq T$ one has
	\begin{equation}\label{adprop}
	\big\|P^m(s_k)P^m(s_{k-1})\cdots P^m(s_0)-W(s_k)P^m(0)\,W^{-1}(s_0)\big\|\leq C/T
	\end{equation}
	where $C$ and $T_0$ depend on $C_P$ defined by \eqref{eq_ssmPPp} only, and $C$ is a continuous function of $C_P$.
\end{proposition}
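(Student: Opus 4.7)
The plan is to exploit the intertwining identity $P^m(s) = W(s) P^m(0) W^{-1}(s)$, an immediate consequence of~\eqref{eq_intertwining}, to unfold the product of projectors. Setting $\Omega_j := W^{-1}(s_j) W(s_{j-1})$ and inserting $\one = W(s_{j-1}) W^{-1}(s_{j-1})$ between consecutive factors, I obtain
$$P^m(s_k) \cdots P^m(s_0) = W(s_k)\,\bigl[ P^m(0)\,\Omega_k\,P^m(0)\,\Omega_{k-1}\cdots P^m(0)\,\Omega_1\,P^m(0) \bigr]\,W^{-1}(s_0).$$
Since $W(s_k) P^m(0) W^{-1}(s_0)$ corresponds to setting each $\Omega_j = \one$ above, and since Gronwall applied to $W' = KW$ (with $K := \sum_n P^n{}' P^n$, so $\|K\| \leq zC_P^2$) and its inverse yields $\|W(s)\|, \|W^{-1}(s)\| \leq \exp(z C_P^2)$, it is enough to show that the bracketed expression equals $P^m(0) + O(1/T)$ in operator norm, with constant continuous in $C_P$.

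Next I would linearize $\Omega_j = \one - F_j$ via the ODE for $W$, which gives
$$F_j = \int_{s_{j-1}}^{s_j} W^{-1}(s_j)\,K(t)\,W(t)\,\d t.$$
Set $G_j := P^m(0) F_j P^m(0)$. Combining the intertwining $W(t) P^m(0) = P^m(t) W(t)$ with the mutual orthogonality $P^n(t)P^m(t) = \delta_{nm}\,P^m(t)$ reduces $K(t)P^m(t)$ to $P^m{}'(t) P^m(t)$, so that
$$G_j = \int_{s_{j-1}}^{s_j} W^{-1}(s_j)\,P^m(s_j)\,P^m{}'(t)\,P^m(t)\,W(t)\,\d t.$$
Since $P^m(0)\,G_j = G_j = G_j\,P^m(0)$, the relation $P^m(0)\,\Omega_j\,P^m(0) = P^m(0) - G_j$ applied repeatedly yields the multilinear expansion
$$\prod_{j=k}^1 \bigl(P^m(0) - G_j\bigr) = P^m(0) - \sum_{j=1}^{k} G_j + \sum_{j_1>j_2} G_{j_1}G_{j_2} - \cdots .$$

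The technical heart of the argument, and its main obstacle, is the bound $\|G_j\| \leq C(C_P)/T^2$. This rests on the projector identity~\eqref{eq_basicrelations}, $P^m(s_j)\,P^m{}'(s_j)\,P^m(s_j) = 0$, which forces the integrand in $G_j$ to vanish at the endpoint $t = s_j$. Decomposing this integrand as
$$P^m(s_j)\bigl[P^m{}'(t) - P^m{}'(s_j)\bigr]P^m(t) + P^m(s_j)\,P^m{}'(s_j)\bigl[P^m(t) - P^m(s_j)\bigr]$$
and using the regularity of $s \mapsto P^m(s)$ and $s \mapsto P^m{}'(s)$ on the subinterval $[s_{j-1}, s_j]$ of length $1/T$ bounds each summand by $O(1/T)$, and integration in $t$ produces $\|G_j\|\leq C/T^{2}$. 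The multilinear sums above are then controlled by $\binom{k}{\ell}(C/T^2)^\ell \leq C^\ell/(T^\ell\,\ell!)$, so $\sum_j G_j$ contributes at the leading order $O(1/T)$ and higher-order terms contribute $O(1/T^\ell)$. Choosing $T_0$ large enough that $kC/T^2 \leq 1$ uniformly in $k \leq T$ ensures the geometric series is summable and yields~\eqref{adprop} with $C$ continuous in $C_P$ as claimed.
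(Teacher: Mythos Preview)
Your overall strategy coincides with the paper's: both conjugate by $W$ to reduce the product of projectors to an alternating product $P^m(0)\,\Omega_k\,P^m(0)\cdots\Omega_1\,P^m(0)$, then exploit the identity $P^m P^m{}' P^m=0$ to show that $P^m(0)\,\Omega_j\,P^m(0)=P^m(0)+O(1/T^2)$, and conclude via a telescoping/multilinear expansion. Your computation of $G_j$ and the product structure $(P^m(0)-G_k)\cdots(P^m(0)-G_1)$ are correct.

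There is, however, a genuine gap in your key estimate $\|G_j\|\le C(C_P)/T^2$. Your decomposition
\[
P^m(s_j)\bigl[P^m{}'(t)-P^m{}'(s_j)\bigr]P^m(t)+P^m(s_j)\,P^m{}'(s_j)\bigl[P^m(t)-P^m(s_j)\bigr]
\]
bounds the second summand by $C_P^3/T$, but the first summand is controlled by $\|P^m{}'(t)-P^m{}'(s_j)\|$, which is $O(1/T)$ only if $P^m{}'$ is Lipschitz. The projectors in Definition~\ref{def_intertwining} are merely $C^1$ (see Remark~\ref{remark_expdtproj}(ii)), and $C_P$ as defined in~\eqref{eq_ssmPPp} carries no information about second derivatives. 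So either your bound fails, or the constant depends on $\sup_s\|P^m{}''(s)\|$ rather than on $C_P$ alone, contrary to the statement.

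The remedy is to use the identity at the running point $t$ rather than the endpoint $s_j$. Since $P^m(0)\,W^{-1}(t)\,K(t)\,W(t)\,P^m(0)=0$ for every $t$ (this is~\eqref{eq:koff} in the paper, an immediate consequence of~\eqref{eq_basicrelations} after intertwining), one can write
\[
G_j=P^m(0)\int_{s_{j-1}}^{s_j}\bigl[W^{-1}(s_j)-W^{-1}(t)\bigr]K(t)\,W(t)\,\d t\;P^m(0),
\]
and then $\|W^{-1}(s_j)-W^{-1}(t)\|\le \sup_s\|W^{-1}K\|(s)\,|s_j-t|$ yields $\|G_j\|\le C/T^2$ with $C$ depending only on $C_P$. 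This is exactly the mechanism in the paper's proof, which phrases it as a second-order expansion of $W^{-1}(s)W(s')$ whose first-order term is annihilated by $P^m(0)(\cdot)P^m(0)$.
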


\begin{remark} \label{remark_expdtproj} Before we prove this, let us mention
	\begin{enumerate}[label={\roman*.}]
		\item For simplicity, differentiability is understood in the norm sense in case $\dim (X)=\infty$.
		\item It is enough that the maps $s\mapsto P^m(s)$ be $C^1$ for this proposition to hold.
		\item The $s_k$'s need not be distinct, except those with consecutive indices.
		\item The norms $\|P^m(s)\|$ are in general larger than one.
	\end{enumerate}
\end{remark}

\begin{proof}
	For any $s, s'\in [0,1]$, we have
	\begin{equation*}
	P^m(s)P^m(s')=W(s)P^m(0)W^{-1}(s)W(s')P^m(0)W^{-1}(s'),
	\end{equation*}
	where, using the shorthand $K(s)=\sum_{k=1}^z P^k{}'(s)P^k(s)$,
	\begin{align}\label{eq:esww}
	W^{-1}(s)W(s')&=\one+W^{-1}(s)(W(s')-W(s))\nonumber\\
	&=\one+W^{-1}(s)\int_s^{s'} K(t)W(t)\,\d t \nonumber \\
	&=\one+\int_s^{s'} \big(W^{-1}(s)W(t)\big) W^{-1}(t)K(t)W(t)\, \d t\nonumber \\
	&=\one+\int_s^{s'}W^{-1}(t)K(t)W(t)\, \d t \nonumber \\
      &\qquad {} +\int_s^{s'}\int_s^t W^{-1}(s) K(u)W(u)\, \d u \; W^{-1}(t)K(t)W(t) \, \d t. \nonumber
	\end{align}
	In addition, by relation \eqref{eq_basicrelations}, $P^m(t)P^m{}'(t)P^m(t)\equiv 0$, so that
	\begin{equation}\label{eq:koff}
	P^m(0)W^{-1}(t)K(t)W(t)P^m(0)\equiv 0
	\end{equation}
	and
	\begin{equation*}
	P^m(s)P^m(s')= W(s)P^m(0)\Big(\one+\int_s^{s'}\int_s^t W^{-1}(s) K(u)W(u)W^{-1}(t)K(t)W(t)\, \d u \d t\Big) P^m(0)W^{-1}(s').
	\end{equation*}
	Denote by $J(s,s')$ the integral term:
	\[J(s,s')=\int_s^{s'}\!\!\int_s^t W^{-1}(s) K(u)W(u)W^{-1}(t)K(t)W(t)\, \d u\,\d t\]
	and
	\[\tilde J(s,s') = J(s,s')+[P^m(0),J(s,s')].\]
	We have
	\begin{equation*}
	P^m(0)J(s,s')P^m(0)=\tilde J(s,s')P^m(0),
	\end{equation*}
	with
	\[
	\max\big(\|J(s,s')\|,\|\tilde J(s,s')\|\big) \leq c(s-s')^2,\]
	for some $c>0$ which is a continuous function of $C_P$, \eqref{eq_ssmPPp}. Using these considerations iteratively on the product (\ref{adprop}), we get
	\begin{align*}
	W(s_k)P^m(0)W(s_k)^{-1}&W(s_{k-1})P^m(0)W^{-1}(s_{k-1})\dotsb W(s_0)P^m(0)W^{-1}(s_0)\\
	&\hspace{.4cm}=W(s_k)P^m(0)\big(\one+J(s_k,s_{k-1})\big)P^m(0)\big(\one+J(s_{k-1},s_{k-2})\big)P^m(0) \\
   &\hspace{0.4cm} \qquad \dotsb P^m(0)\big(\one+J(s_1,s_0)\big)P^m(0)W^{-1}(s_0)\nonumber\\
	&\hspace{.4cm}= W(s_k)\big(\one+\tilde J(s_k,s_{k-1})\big)\big(\one+\tilde J(s_{k-1},s_{k-2})\big) \\
   &\hspace{0.4cm} \qquad \dotsb \big(\one+\tilde J(s_1,s_0)\big)P^m(0)W^{-1}(s_0). \nonumber
	\end{align*}
	We denote by $\one + R_1$ the product
	\[\one + R_1 = \big(\one+\tilde J(s_k,s_{k-1})\big)\big(\one+\tilde J(s_{k-1},s_{k-2})\big)\dotsb \big(\one+\tilde J(s_1,s_0)\big).\]
	With $c$ as above, by a standard combinatorics argument, we get
	\begin{align*}
	\|R_1\| &\leq \sum_{k=1}^{k} (c/T^2)^k\begin{pmatrix}n \cr k\end{pmatrix}=\left(1+c/T^2\right)^k-1 \\
      &\leq \left(1+c/T^2\right)^{T}-1=e^{T\ln(1+c/T^2)}-1,
\end{align*}
	so that $\|R_1\|\leq C'/T$, for $T$ larger than some $T_0$ (which depends only on $c$), where $C'$ has the required properties. This yields the result with a $C$ as stated, since $\sup_{s\in [0,1]}\|W^{\pm 1}(s)\|$ and $\|P^m(0)\|$ satisfy the requirements as well.
\end{proof}

If the projectors $P(s)$ are rank one, they write $P(s)=\phi(s)\,\psi^*(s)$ with $\phi(s)\in X$ and $\psi(s)^*\in X^*$ such that $\psi^*(s)(\phi(s))=1$. In applications, we will consider linear forms associated to the inner product $(M, N) \mapsto \tr(M^* N)$ for~$M$ and~$N$ in~$\B(\H_\sys)$. We then have the following result.

\begin{corollary}\label{rankone}
	Let $\big(P^m(s)\big)_{s\in[0,1]}$, $m=1,\ldots,z$ be $C^1$ families of rank one projectors, i.e.\ $P^m(s)=\phi_m(s)\,\psi_m^*(s)$, $s\in [0,1]$, where the maps $s\mapsto \phi_m(s)$ and $s\mapsto \psi_m^*(s)$ are $C^1$ and non-vanishing for $m=1,\ldots,z$. Then there exist $C>0$  and $T_0\in\nn$ such that for $T\geq T_0$  one has for any $k\leq T$
	\begin{equation*}
	\sup_{m=1,\ldots,z}\big\|P^m(\tfrac{k}{T})P^m(\tfrac{k-1}{T})\dotsb P^m(\tfrac{1}{T})P^m(0)-\e^{-\int_0^{k/T}\psi_m^*(t)\big(\phi_m'(t)\big)\d t}\phi_m(1)\psi_m^*(0)\big\|
      \leq C/T
\end{equation*}
	where $C$ and $T_0$ depend on
	\begin{equation} \label{eq_phipsi}
	c_P=\sup_{s\in[0,1]}\max_{m=1,\ldots,z}\max\big(\|\phi_m(s)\|,\|\phi_m'(s)\|,\|\psi_m^*{}(s)\|,\|\psi_m^*{}'(s)\|\big)
	\end{equation}
	only, and $C$ is a continuous function of $c_P$.
\end{corollary}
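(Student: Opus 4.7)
My plan is to specialize Proposition~\ref{adproj} to the rank-one setting and identify the propagator $W(s_k)P^m(0)W^{-1}(s_0)$ as the explicit exponential-times-rank-one-projector expression on the right-hand side. Since $P^m=\phi_m\psi_m^*$ and $P^m{}'=\phi_m'\psi_m^*+\phi_m\psi_m^*{}'$, submultiplicativity of the operator norm controls the constant $C_P$ of \eqref{eq_ssmPPp} in terms of $c_P$ of \eqref{eq_phipsi}. With $s_0=0$ and $s_j=j/T$, Proposition~\ref{adproj} thus yields
\[P^m(k/T)P^m((k-1)/T)\cdots P^m(0)=W(k/T)P^m(0)+O(1/T),\]
with the implicit constant a continuous function of $c_P$, and the factor $W^{-1}(0)=\id$ absorbed.

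The heart of the argument is the rank-one reduction. The intertwining relation \eqref{eq_intertwining}, written as $W(s)\phi_m(0)\psi_m^*(0)=\phi_m(s)\psi_m^*(s)W(s)$, forces the existence of a scalar cocycle $c_m(s)$ such that $W(s)\phi_m(0)=c_m(s)\phi_m(s)$ and dually $\psi_m^*(s)W(s)=c_m(s)\psi_m^*(0)$. It follows immediately that $W(k/T)P^m(0)=c_m(k/T)\,\phi_m(k/T)\psi_m^*(0)$, reducing the task to identifying the scalar $c_m(s)$.

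Pairing the identity $W(s)\phi_m(0)=c_m(s)\phi_m(s)$ with $\psi_m^*(s)$ and using $\psi_m^*(s)\phi_m(s)=1$ gives $c_m(s)=\psi_m^*(s)W(s)\phi_m(0)$. Differentiating and invoking the defining ODE~\eqref{intertw} yields
\[c_m'(s)=c_m(s)\bigl(\psi_m^*{}'(s)\phi_m(s)+\psi_m^*(s)K(s)\phi_m(s)\bigr),\qquad K(s):=\sum_{k=1}^{z}P^k{}'(s)P^k(s).\]
The orthogonality $\psi_k^*(s)\phi_m(s)=\delta_{km}$, inherited from the resolution of identity $\sum_kP^k=\id$ after applying $P^m$ on one side and using linear independence of the $\phi_k$, reduces $\psi_m^*(s)K(s)\phi_m(s)$ to $\psi_m^*(s)P^m{}'(s)\phi_m(s)=\psi_m^*(s)\phi_m'(s)+\psi_m^*{}'(s)\phi_m(s)$. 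Combining with the identity $\psi_m^*{}'(s)\phi_m(s)+\psi_m^*(s)\phi_m'(s)=0$ obtained by differentiating $\psi_m^*\phi_m\equiv1$, the second term vanishes and
\[c_m'(s)=-c_m(s)\,\psi_m^*(s)\phi_m'(s),\qquad c_m(0)=1,\]
so $c_m(s)=\exp\bigl(-\int_0^s\psi_m^*(t)\phi_m'(t)\,\d t\bigr)$.

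I do not anticipate any substantial obstacle: the nontrivial adiabatic content is contained in Proposition~\ref{adproj}, and the remaining work is the scalar cocycle computation sketched above together with the elementary bound $C_P\lesssim c_P$. The only mildly delicate bookkeeping concerns checking that the constant produced by Proposition~\ref{adproj} is indeed continuous in $c_P$, which follows from the fact that the constant there is continuous in $C_P$ and $C_P$ is a polynomial in $c_P$.
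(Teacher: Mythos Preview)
Your proposal is correct and follows essentially the same approach as the paper: apply Proposition~\ref{adproj} with $s_k=k/T$ and $s_0=0$, use the intertwining relation to write $W(s)\phi_m(0)=c_m(s)\phi_m(s)$, and solve for the scalar $c_m$ via a first-order ODE. The paper derives that ODE slightly more directly by observing that $P^m(s)\tilde\phi_m'(s)=0$ (an immediate consequence of $P^m K P^m=0$, i.e.\ \eqref{eq_basicrelations}, without needing the orthogonality $\psi_k^*\phi_m=\delta_{km}$ or the resolution of identity), whereas you pair with $\psi_m^*$ and differentiate; both routes yield $c_m'=-c_m\,\psi_m^*\phi_m'$ and are equivalent in content.
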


\begin{proof}
	We apply Proposition \ref{adproj} with $s_k=k/T$. Since $s_0=0$, and $\psi^*_j(0)$ is a linear form, it is enough to compute $\tilde\phi_m(s):=W(s)\phi_m(0)$. By the intertwining property, $\tilde\phi_m(s)\in \ran P^m(s)$, {i.e.} $\tilde\phi_m(s)=v_m(s)\phi_m(s)$, where $v_m(s)\in \C$. Because of the differential equation \eqref{intertw} and the identity \eqref{eq_basicrelations}, $\tilde\phi_m(s)$ satisfies
	\begin{align*}
	0\equiv P^m(s)\tilde\phi_m'(s) &=\phi_m(s) \psi^*_m(s) \big(v_m'(s) \phi_m(s) + v_m(s) \phi_m'(s)\big) \\
      &= (v_m'(s) + v_m(s) \psi_m^*(s)\big(\phi_m'(s)\big)\big)\phi_m(s),
   \end{align*}
	so that $v_m(s)=\e^{-\int_0^s \psi_m^*(\phi_m(t))\d t}$. This concludes the proof.
\end{proof}

\subsection{Main result} \label{sec_mainadiabatic}
We now turn to our final adiabatic theorem. We recall that an admissible family $\big(F(s)\big)_{s\in[0,1]}$ is one that satisfies \ref{it:cont}--\ref{it:spr-Q}. We denote by $\lambda^m(s)$ and $P^m(s)$, $m=1,\ldots,z$ the peripheral eigenvalues and associated spectral projectors of $F(s)$; assumptions \ref{it:semi-s-no-cross}, \ref{it:P-C2} and standard pertubation theory ensure that one can parameterize eigenvalues such that both $s\mapsto \lambda^m(s)$ and $s\mapsto P^m(s)$ are $C^2$ functions. We denote by $C_P$ and $W(s)$ the constant \eqref{eq_ssmPPp} and the family of intertwining operators in Definition \ref{def_intertwining}.

\begin{theorem}\label{thm:P-Q-decomp}
	If the family $(F(s))_{s \in [0,1]}$ is admissible, then for any $\ell' \in (\ell,1)$ there exist $C>0$  and $T_0\in\nn$ such that for $T\geq T_0$  one has
   \begin{equation*}
      \sup_{k=1,\ldots,T}\big\|F(\tfrac{k}{T})\dotsb F(\tfrac{1}{T}) - \sum_{m=1}^z \big( \prod_{n=1}^k \lambda^m(\tfrac nT)\big) \, W(\tfrac kT) P^m(0)
      - F^Q(\tfrac{k}{T})\dotsb F^Q(\tfrac{1}{T})Q(0)\big\|
   	\leq \frac{C}{T(1-\ell')}
   \end{equation*}
	for all $T \geq T_0$, where $F^Q(s)$ denotes $Q(s)F(s)$. Moreover,
	$$
	\|F^Q(\tfrac{k}{T})\dotsb F^Q(\tfrac{1}{T})Q(0) \| \leq C {\ell'}^k,
	$$
	and $C$ depends on $C_P$ only, is a continuous function of $C_P$, and $T_0$ depends on $C_P$ and $\ell'$ only.
\end{theorem}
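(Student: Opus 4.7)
The strategy is to pass to an interaction picture via the intertwining operator $W(s)$, reducing the dynamics to a block-diagonal one up to $O(1/T)$ perturbations, and to then control those perturbations by a Dyson expansion that exploits geometric decay in the $Q$-block. Set $\tilde F(s) := W(s)^{-1} F(s) W(s)$; the identity $W(s) P^m(0) = P^m(s) W(s)$ makes $\tilde F(s)$ block-diagonal with respect to the constant decomposition $\one = P(0) + Q(0)$, with $P(0)$-block $\sum_m \lambda^m(s) P^m(0)$ and $Q(0)$-block $\tilde F^Q(s) := W(s)^{-1} F^Q(s) W(s)$. A standard telescoping yields
\[
F(\tfrac kT) \cdots F(\tfrac 1T) = W(\tfrac kT)\, M_k \cdots M_1, \qquad M_j := W(\tfrac jT)^{-1} F(\tfrac jT) W(\tfrac{j-1}{T}) = \tilde F(\tfrac jT) + B_j,
\]
with $\|B_j\| = O(1/T)$ coming from $W\in C^1$, itself a consequence of \ref{it:P-C2} and perturbation theory.

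A first step is the geometric decay $\|F^Q(\tfrac kT) \cdots F^Q(\tfrac 1T) Q(0)\| \leq C(\ell')^k$, which under the spectral hypothesis \ref{it:spr-Q} replaces the norm-contraction used in \cite{HJPR1}. Fix $\ell''\in(\ell,\ell')$: for each $s$, Gelfand's formula provides some $N_s$ with $\|F^Q(s)^{N_s}\|^{1/N_s}\leq \ell''$, and upper semi-continuity of $\spr$, continuity of $s\mapsto F^Q(s)$, and compactness of $[0,1]$ yield a single equivalent norm on $X$ in which $\|F^Q(s)\|\leq \ell'$ uniformly in $s$; the same conclusion holds for $\tilde F^Q(s)$. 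Consequently the unperturbed product $\tilde F(\tfrac kT) \cdots \tilde F(\tfrac 1T)$ is block-diagonal and equals
\[
\sum_m \Bigl( \prod_{n=1}^k \lambda^m(\tfrac nT) \Bigr) P^m(0) + \tilde F^Q(\tfrac kT) \cdots \tilde F^Q(\tfrac 1T) Q(0),
\]
whose $P(0)$-block is bounded uniformly in $k$ (since $|\lambda^m(s)|=1$) and whose $Q(0)$-block is bounded by $C(\ell')^k$. Pre-multiplying by $W(\tfrac kT)$ and using $W(\tfrac kT) P^m(0) = P^m(\tfrac kT) W(\tfrac kT)$ recovers the peripheral term of the theorem; converting $W(\tfrac kT) \tilde F^Q(\tfrac kT) \cdots \tilde F^Q(\tfrac 1T) Q(0)$ to $F^Q(\tfrac kT) \cdots F^Q(\tfrac 1T) Q(0)$ by telescoping $W(\tfrac jT) W(\tfrac{j-1}{T})^{-1} = \one + O(1/T)$ against the $(\ell')^j$ damping contributes a total error $\sum_j (\ell')^j/T = O\bigl(1/(T(1-\ell'))\bigr)$.

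It remains to handle the perturbation from the $B_j$'s via the Dyson expansion
\[
M_k \cdots M_1 - \tilde F(\tfrac kT) \cdots \tilde F(\tfrac 1T) = \sum_{j=1}^k \tilde F(\tfrac kT) \cdots \tilde F(\tfrac{j+1}{T})\, B_j\, M_{j-1} \cdots M_1.
\]
The main obstacle is that a naive bound $\|B_j\|=O(1/T)$ per summand times $k \leq T$ summands gives only $O(1)$ rather than the required $O(1/T)$. The resolution is to project onto the $P(0)$- and $Q(0)$-blocks: any contribution that moves mass into the $Q(0)$-block is damped by a factor $(\ell')^{k-j}$ in the leftmost $\tilde F$-product, and the resulting geometric series sums to $\|B\|_\infty/(1-\ell') = O(1/(T(1-\ell')))$. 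An inductive bound on $\|M_{j-1} \cdots M_1\|$ closes the loop. Pre-multiplying the resulting identity by $W(\tfrac kT)$ delivers the claimed decomposition and error bound.
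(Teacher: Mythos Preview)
Your interaction-picture strategy is natural, but the sketch has two genuine gaps.

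\textbf{The equivalent-norm claim is false.} You assert that compactness of $[0,1]$ together with $\sup_s \spr F^Q(s)=\ell<1$ yields a single equivalent norm in which $\|F^Q(s)\|\leq \ell'$ for all $s$. This would force $\|F^Q(s_k)\cdots F^Q(s_1)\|\leq C(\ell')^k$ for \emph{arbitrary} $s_1,\ldots,s_k\in[0,1]$, i.e.\ it would bound the joint spectral radius of the family by~$\ell'$. Already in dimension two this fails: with $A=\bigl(\begin{smallmatrix}\ell & 1\\ 0 & \ell\end{smallmatrix}\bigr)$ and $F^Q(s)=R(\tfrac{\pi s}{2})\,A\,R(-\tfrac{\pi s}{2})$ (planar rotations), one has $\spr F^Q(s)\equiv\ell$ but $\spr\big(F^Q(1)F^Q(0)\big)=1-2\ell^2+O(\ell^4)$, so no $\ell'$ close to $\ell$ can work. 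The desired bound on $F^Q(\tfrac{k}{T})\cdots F^Q(\tfrac{1}{T})$ is nonetheless true, but only because \emph{consecutive} arguments differ by $1/T$: the paper (Lemma~\ref{lem:boundLQs}) shows that a block of $m$ consecutive factors is within $\delta_m(T)\to 0$ of a single power $F^Q(s)^m$, and then invokes the uniform-in-$s$ bound $\|F^Q(s)^m\|<(\ell+\epsilon)^m$. Your argument does not use this adiabatic structure.

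\textbf{The Dyson expansion does not close on the $P$-block.} Your resolution damps only contributions that land in $Q(0)$ on the left. But the $P(0)$-to-$P(0)$ piece of $B_j$ is in general of order $1/T$, not $1/T^2$: writing $B_j=\tilde F(\tfrac{j}{T})\big(W(\tfrac{j}{T})^{-1}W(\tfrac{j-1}{T})-\one\big)$, the leading part is $-\tfrac{1}{T}\tilde F(\tfrac{j}{T})\,W^{-1}KW$ with $K=\sum_l P^l{}'P^l$. One has $P^m(0)\,W^{-1}KW\,P^m(0)=0$ (from $P^m(P^m)'P^m=0$), so $P^m(0)B_jP^m(0)=O(1/T^2)$; but the off-diagonal pieces $P^n(0)B_jP^m(0)$ with $n\neq m$ are only $O(1/T)$. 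Since the unperturbed $\tilde F$-product acts on each $P^n(0)$ with modulus one, these $k\leq T$ summands add up to $O(1)$ under your estimate, not $O(1/T)$. Gaining the missing factor requires exploiting the phase $\lambda^n/\lambda^m\neq 1$ via a discrete Abel summation, or---as the paper does---treating separately the pure $F^P$-product (Lemma~\ref{lem:boundFps}, deferring to Proposition~4.6 of \cite{HJPR1}) and the combinatorics of $P$/$Q$ alternations (Proposition~4.5 of \cite{HJPR1} together with Lemma~\ref{lem:boundLPs}). Without one of these ingredients your error bound degrades to $O(1)$ whenever $z\geq 2$.
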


\begin{proof}
	The proof consists of revisiting the proof of Theorem 4.4 in \cite{HJPR1}, relaxing the hypotheses made there to \ref{it:cont}--\ref{it:spr-Q}.
	We first focus on the combinatorial part of the proof stated as Proposition 4.5 in \cite{HJPR1}, borrowing freely the notation used there. We denote in particular $F^P(s)=P(s) F(s)$ and $F_k^\#=F(\tfrac kT)^\#$, where $\#\in\{P,Q\}$. Under our hypotheses on the spectral radii instead of the assumptions on the norm used in \cite{HJPR1}, we will see below that the starting estimates
	\begin{equation} \|\prod_{a\in A_n}  F_a^Q \| \leq \ell^{|A_n|}, \qquad  \|\prod_{b\in B_n}  F_b^P \| \leq 1. \tag{A9 of \cite{HJPR1}} \label{eq:A9_of_HJPR}\end{equation}
	used in Proposition 4.5 in \cite{HJPR1} are replaced by the following bounds.
	For some constants $D,D'>1$, and $\ell' < 1$,
	\begin{equation} \|\prod_{a\in A_n}  F_a^Q \| \leq D(\ell')^{|A_n|}, \qquad  \|\prod_{b\in B_n}  F_b^P \| \leq D', \tag{NewBounds} \label{eq:modified_bounds_on_consec_products}\end{equation}
	where $D,D'$ are independent of the number of terms in the products, and satisfy the required dependence. Following Proposition 4.5 in \cite{HJPR1},
	we need to bound the norms of terms of the following forms:
	\begin{gather}
	\big(\prod_{a\in A_d} F_a^Q\big)\big(\prod_{b\in B_d} F_b^P\big)\ldots \big(\prod_{a\in A_1} F_a^Q\big)\big(\prod_{b\in B_1} F_b^P\big)\, P_0, \label{eq_factorform1}\\
	\big(\prod_{b\in B_{d+1}} F_b^P\big)\big(\prod_{a\in A_d} F_a^Q\big)\big(\prod_{b\in B_d} F_b^P\big)\ldots \big(\prod_{a\in A_1} F_a^Q\big)\big(\prod_{b\in B_1} F_b^P\big)\, P_0, \label{eq_factorform2}\\
	\big(\prod_{a\in A_{d+1}} F_a^Q\big)\big(\prod_{b\in B_{d+1}} F_b^P\big)\ldots \big(\prod_{a\in A_2} F_a^Q\big)\big(\prod_{b\in B_2} F_b^P\big) \big(\prod_{a\in A_1} F_a^Q\big)\, P_0,
	\label{eq_factorform3}\\
	\big(\prod_{b\in B_{d+1}} F_b^P\big)\big(\prod_{a\in A_d} F_a^Q\big)\big(\prod_{b\in B_{d-1}} F_b^P\big)\ldots \big(\prod_{a\in A_2} F_a^Q\big)\big(\prod_{b\in B_2} F_b^P\big) \big(\prod_{a\in A_1} F_a^Q\big)\, P_0. \label{eq_factorform4}
	\end{gather}
	In \cite{HJPR1}, we obtain the bounds
	\begin{equation}
	\begin{aligned}
	\|\eqref{eq_factorform1}\| &\leq (c/T)^{2d-1} \, \ell^{\sum_n |A_n|}, &
	\|\eqref{eq_factorform2}\| &\leq (c/T)^{2d} \, \ell^{\sum_n |A_n|},\\
	\|\eqref{eq_factorform3}\| &\leq (c/T)^{2d+1} \, \ell^{\sum_n |A_n|}, &
	\|\eqref{eq_factorform4}\| &\leq (c/T)^{2d} \, \ell^{\sum_n |A_n|},
	\end{aligned} \tag{Bounds from HJPR15}
	\end{equation}
	via \eqref{eq:A9_of_HJPR}. Instead, if we use \eqref{eq:modified_bounds_on_consec_products}, we obtain the bounds
	\begin{align*}
	\|\eqref{eq_factorform1}\| &\leq (c/T)^{2d-1} \, D^d\ell'^{\sum_n |A_n|} D'^d, &
	\|\eqref{eq_factorform2}\| &\leq (c/T)^{2d} \, D^{d+1}\ell'^{\sum_n |A_n|} D'^d,\\
	\|\eqref{eq_factorform3}\| &\leq (c/T)^{2d+1} \, D^{d+1}\ell'^{\sum_n |A_n|} D'^{d+1}, &
	\|\eqref{eq_factorform4}\| &\leq (c/T)^{2d} D^{d+1}\, \ell'^{\sum_n |A_n|} D'^d.
	\end{align*}
	For the bound on form \eqref{eq_factorform1}, for example, since we assumed $D,D'>1$, then we may simply change $c \to cDD'$ and obtain essentially the same bound as in \cite{HJPR1}. Thus, the rest of the combinatorial argument consisting in counting the number of such terms for each~$d$, multiplying by this bound, and summing over~$d$, yields the same final bounds as in Proposition~4.5 in \cite{HJPR1}, up to modified constants.

	We now turn to find $D, D',\ell'$ such that we have \eqref{eq:modified_bounds_on_consec_products}.
	\begin{lemma}
		Under the assumptions \textup{\ref{it:cont}--\ref{it:spr-Q}}, there exist $T_0\in \nn$ and $D'>0$, both depending on $C_P$ only, and $D'$ being a continuous function of $C_P$, such that for all $T\geq T_0$  and $n_0 < n \leq T$, we have
		\[
		\| F^P_{n} \dotsm F^P_{n_0}\| < D'.
		\] \label{lem:boundLPs}
	\end{lemma}

	\begin{proof}
		For each $k\in \{n_0,\dotsc,n\}$, write $F_k^P = \sum_{m=1}^z \lambda^m_k P^m_k$ using semisimplicity of peripheral eigenvalues. Recall that for each $m$, $\lambda^m(s)$ and $P^m(s)$ are $C^2$ in $[0,1]$.
		Then,
		\begin{align*}
		\prod_{k=n_0}^n F^P_k &= \prod_{k=n_0}^n \sum_{m=1}^z \lambda^m_k P^m_k \\
		&= \sum_{m=1}^z \left(\prod_{k=n_0}^n \lambda^m_k\right)P^m_n \dotsm P^m_{n_0} + \sum_{\substack{i_{n_0},\dotsc,i_n=1,\dotsc,z \\ \text{not all equal} }}\left(\prod_{k=n_0}^n \lambda^{i_k}_k\right) P^{i_n}_n\dotsm P^{i_{n_0}}_{n_0}.
		\end{align*}
		For each $m$ and all $k'\leq k$, Proposition \ref{adproj} gives for $T\geq T_0$
		\begin{equation}
		\|P^m_k \dotsm P^m_{k'}\| \leq C/T+ \sup_{1\geq s>s'\geq 0} \| P^m(s)W(s)W(s')\inv\|  \leq  C(1+1/T), \label{eq:bound_product_of_projectors}
		\end{equation}
		for some constant~$C$ which depends continuously on $C_P$. Again we can bound this $C(1+1/T)$ by a new constant~$C$ with the same properties as the original~$C$. Taking~$k=n$ and~$k'=n_0$ in \eqref{eq:bound_product_of_projectors} bounds the first sum by $z\,C$, since the eigenvalues are on the unit circle.
		For the second sum, we know that $P^{i}_k P^m_{k-1} \leq C_P/T$ if $i\neq m$,  as a consequence of the relation $P^i(s)P^m(s)=0$. Bounding the terms in the second sum is again done by a simple combinatorial argument: let $d$ be the number of transitions  $P^m_k P^{i}_{k-1}$ where $m\neq i$. In each term in the second sum, there is at least one such transition by design. If we have $n-n_0$ stars representing projectors, and $d$ bars representing transitions, then there are $n-n_0-1$ gaps between the stars, from which we need to choose $d$ to put a bar. So $\binom{n-n_0-1}{d}$ is the number of ways to divide the projectors into groupings. For each grouping, we have at most $z$ choices of which projector it should be. So in total, there are at most $\binom{n-n_0-1}d z^{d+1}$
		terms with $d$ transitions. Each such term has norm bounded by $C^{d+1}C_P^d/T^d$ via \eqref{eq:bound_product_of_projectors}, and using that each transition yields a factor $C_P/T$, and that all the eigenvalues have modulus one. Lastly, there cannot be more than $n-n_0-1$ transitions (in fact, fewer than $(n-n_0)/z$). So, in total, we may bound the second sum by
		\begin{align*}
		\sum_{d=1}^{n-n_0-1} zC{n-n_0-1 \choose d}\frac{(z C C_P)^d}{T^d} &\leq zC \left(1 + \frac{z C C_P}{T} \right)^{n-n_0-1} \\
		&\leq zC\left(1 + \frac{zC C_P}{T}\right)^T \leq zC\exp(zC C_P).
		\end{align*}
		In total then, we have
		\[
		\|F^P_n\dotsm F^P_{n_0}\| \leq zC(1 + \exp(z C C_P))=:D'. \qedhere
		\]
	\end{proof}

	Let us turn now to compositions of $\L^Q$'s.
		\begin{lemma} \label{lem:boundLQs}
		Under the assumptions  \textup{\ref{it:cont}--\ref{it:spr-Q}}, for all $\ell'\in (\ell,1)$, there exists $D>0$, $T_0\in\nn$ depending on $C_P$ only, and $D$  a continuous function of $C_P$, such that for any $T\geq T_0$ and $n_0 < n\leq T$, we have
		\[
		\|F^Q_n\dotsm F^Q_{n_0}\| \leq D \ell'^{n-n_0}.
		\]
	\end{lemma}
\begin{proof}
		Let $ \ell' \in (\ell,1)$. As shown in Lemma 4.3 of \cite{HJPR1}, the spectral hypothesis \ref{it:spr-Q} and the regularity assumptions \ref{it:cont}, \ref{it:P-C2} imply that given $\epsilon>0$, we may choose $m\in \N$ uniformly in $s$ so that
		\begin{equation}\label{defmq}
		\|F^Q(s)^m\| < (\ell+\epsilon)^m.
		\end{equation}
		Choose $\epsilon$ so that $\ell+\epsilon <  \ell'$.

		We now deduce by induction that for all $\N\ni m<T$, $\|F^Q_{k+m}\dotsm F_k^Q  - (F_k^Q)^{m+1}\| \leq \delta_m(T)$ for some map $\delta_m: \R_*^+\rightarrow \R_*^+$, independent of $k$, such that  $\lim_{T\rightarrow \infty}\delta_m(T)=0$. This is trivially true for $m=0$ and for $m\geq 1$, we have
      \begin{align*}
         	F^Q_{k+m}\dotsm F_k^Q-(F_k^Q)^{m+1}
               &=(F^Q_{k+m}-F_k^Q)(F^Q_{k+m-1}\dotsm F_k^Q) \\
               &\qquad {} + F_k^Q\big((F^Q_{k+m-1}\dotsm F_k^Q)-(F_k^Q)^{m-1}\big).
      \end{align*}
		With $F^Q_j=F^Q(s=\frac{j}{T})$,
		the first term is bounded above by
		\begin{equation}\label{eq:bidel}
      \begin{split}
		&\big\|F^Q\big(\tfrac{k+m}{T}\big)-F^Q\big(\tfrac{m}{T}\big)\big\|\big(\sup_{s\in [0,1]}\|F^Q(s)\|\big)^m \\
		&\qquad \qquad \qquad \leq \sup_{\substack{s\in [0,1]\\ s+\frac{m}{T}\in[0,1]}}\big\|F^Q\big(s+\tfrac{m}{T}\big)-F^Q\big(\tfrac{m}{T}\big)\big\|\big(\sup_{s\in [0,1]}\|F^Q(s)\|\big)^m.
      \end{split}
      \end{equation}
		This expression goes to zero as $T\rightarrow \infty$ by uniform continuity of $F^Q$ on $[0,1]$, and depends parametrically on $m$ only. The second term is bounded above by $\sup_{s\in [0,1]}\|F^Q(s)\|\delta_{m-1}(T)$, by induction hypothesis, hence the claim is proved with $\delta_m(T)=(\ref{eq:bidel}) + \sup_{s\in [0,1]}\|F^Q(s)\|\delta_{m-1}(T)$.

		Thus, given $\epsilon$ and $m$ chosen so that (\ref{defmq}) above holds, for $\tilde{\ell} = (\ell+\epsilon)^m +\delta_m(T)$, we have $\|F^Q_{k+m}\dotsm F_k^Q\| < \tilde{\ell}$,  where $\tilde{\ell}<1$ if $T>T(m,\epsilon)$, for some $T(m, \epsilon)$ large enough.

		Any integer $p\geq m$ can be partitioned  as
		\[
		p = \underbrace{m+ m+\dotsm +m }_{[p/m] \text{ times}}+ \underbrace{(p- m[p/m])}_{<m},
		\]
		so that for $c = \sup_{h=1,2,\dotsc,m-1} \left(\sup_{s \in [0,1]} \|F^Q (s)\|\right)^h$, the previous bound yields,
		\[
		\|F_{k+p}^Q \dotsm F_k^Q\| \leq c \tilde{\ell}^{[p/m]} \leq c \tilde{\ell}^{(p/m)-1} = \frac{c}{\tilde{\ell}} \left(\tilde{\ell}^{1/m}\right)^p.
		\]
		Note that here, in fact, we do not need $p\geq m$. If $p<m$, then $\| F^Q_{k+p}\dotsm F_k^Q\| \leq c$, $\frac{1}{\tilde{\ell}} \tilde{\ell}^{p/m} > 1$, and the bound still holds. Finally, set $\ell'' = (\tilde{\ell})^{1/m}$.
		Since $(\ell+ \epsilon)^m + \delta_m(T) \leq ((\ell+ \epsilon) + \delta_m(T)^{1/m})^m$, we have $\ell'' < \ell + \epsilon + \delta_m(T)^{1/m}$. For $T > T'(m,\epsilon)$ for some $T'(m, \epsilon)$ large enough, we have $\delta_m(T)^{1/m} < \ell' - (\ell+ \epsilon)$, and therefore $\ell'' < \ell'$. Thus, for $D= c/\tilde{\ell}$, we have
		\[
		\|F_{n}^Q \dotsm F_{n_0}^Q\| \leq D \ell''^{n-n_0} < D \ell'^{n-n_0}. \qedhere
		\]
	\end{proof}

	Our next ingredient is to approximate the composition $F^P_k\cdots F^P_0P(0)$, {i.e.}
	to show the equivalent of Proposition~4.6 in~\cite{HJPR1}.

	\begin{lemma}\label{lem:boundFps}
		Under the assumptions and notation of Proposition \ref{thm:P-Q-decomp}, there exist $T_0$ and $C > 0$ with the same properties as in that Proposition, such that
		\begin{equation}
		\sup_{k=1,\ldots,T} \big\| F^P(\tfrac{k}{T})\dotsb F^P(\tfrac{1}{T})P(0) - \sum_{m=1}^z \big( \prod_{n=1}^k \lambda_n^m\big) \, W(\tfrac kT) P^m(0)\big\| \leq C/T.
		\end{equation}

	\end{lemma}

	\begin{proof}
		We define for all $k\leq T$
		\begin{align}
		\cK_{k}&=W(\tfrac{k}{T})\,P(0), \ \ \ \cK^\dagger_{k}=P(0)\,W^{-1}(\tfrac{k}{T}), \\
		\Phi_k&=\sum_{j}\big(\prod_{n=1}^k\lambda^m_n\big)P^m(0), \ \ \ \Phi_k^\dagger=\sum_{j}\big(\prod_{n=1}^k\bar \lambda^m_n\big)P^m(0),\\
		A_k&=\cK_k\Phi_k, \ \ \ A_k^\dagger=\Phi_k^\dagger \cK^\dagger_{k}.
		\end{align}
		The above expression for $A_k$ gives in particular that
		\begin{equation} \label{eq:form_of_Ak}
		A_k = \sum_j \big( \prod_{n=1}^k \lambda_n^m\big) \, W(\tfrac kT) P^m(0).
		\end{equation}
		We have the following  identities which are consequences of the properties of the intertwining operators $W$:
		\begin{gather}
		\cK_k\cK_k^\dagger= P_k, \quad \cK_k^\dagger\cK_k=P(0), \quad \cK_k P^m(0)=P_k^m\cK_k, \quad \cK_k^\dagger P^m_k = P^m(0)\cK_k^\dagger,\\
		\Phi_k P^m(0)= P^m(0) \Phi_k, \quad \Phi_k^\dagger P^m(0)= P^m(0) \Phi_k^\dagger, \quad  \Phi_k  \Phi_k ^\dagger=P(0)= \Phi_k ^\dagger \Phi_k,
		\intertext{and}
		\cK_{k}=\cW_k\cK_{k-1}, \quad \cK_{k}^\dagger = \cK_{k-1}^\dagger \cW_k^\dagger,
		\intertext{where}
		\cW_k=W(\tfrac{k}{T})W^{-1}(\tfrac{k-1}{T})P(\tfrac{k-1}{T}),  \quad
		\cW_k^\dagger=W(\tfrac{k-1}{T})W^{-1}(\tfrac{k}{T})P(\tfrac{k}{T}).
		\end{gather}
		Hence $A_k$ is uniformly bounded in $T$ and $k\leq T$, since $\cK_k$ and $\Phi_k$ are, and satisfies relevant intertwining properties. The notation is chosen to be close to that used in the proof of Proposition~4.6 in~\cite{HJPR1}, and one gets that all steps of that of Proposition~4.6 in~\cite{HJPR1} go through, which ends the proof.
	\end{proof}

	This concludes the proof of Theorem \ref{thm:P-Q-decomp}.
\end{proof}
Combining Theorem \ref{thm:P-Q-decomp} with the proof of Corollary \ref{rankone} immediately implies the following:
\begin{corollary}\label{coro:P-Q-decomp}
	If the family $(F(s))_{s \in [0,1]}$ is admissible, and its peripheral eigenvalues are simple, with associated projectors $P^m(s)=\phi_m(s)\psi_m^*(s)$
	then for any $\ell' \in (\ell,1)$ there exist $C > 0$ and $T_0 \in \nn$, such that for $T\geq T_0$ and $k\leq T$,
   \begin{equation*}
      \big\|F(\tfrac{k}{T})\dotsb F(\tfrac{1}{T}) - \sum_{m=1}^z \big( \prod_{n=1}^k \lambda_n^m\big) \,\e^{-\int_0^{k/T}\psi_m^*(t)\big(\phi_m'(t)\big)\d t}\phi_m(1)\psi_m^*(0)
      - F^Q(\tfrac{k}{T})\dotsb F^Q(\tfrac{1}{T})Q(0)\big\|
   	\leq \frac{C}{T(1-\ell')}.
   \end{equation*}
	Moreover,
	$$
	\|F^Q(\tfrac{k}{T})\dotsb F^Q(\tfrac{1}{T})Q(0) \| \leq C {\ell'}^k
	$$
	and $C$ depends on $c_P$ defined in \eqref{eq_phipsi} only, is a continuous function of $c_P$, and $T_0$ depends on $c_P$ and $\ell'$ only.
\end{corollary}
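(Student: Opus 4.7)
The plan is to apply Theorem \ref{thm:P-Q-decomp} directly, and then reduce the abstract term $W(\tfrac kT)P^m(0)$ appearing there to the explicit exponential form in the statement using rank-one structure exactly as was done in the proof of Corollary \ref{rankone}.

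First I would invoke Theorem \ref{thm:P-Q-decomp}, which, under admissibility of $(F(s))_{s\in[0,1]}$ and the (simple) semisimplicity of the peripheral eigenvalues, already gives the bound
\[
\Big\|F(\tfrac kT)\dotsb F(\tfrac 1T)-\sum_{m=1}^{z}\big(\prod_{n=1}^{k}\lambda^m_n\big)W(\tfrac kT)P^m(0)-F^Q(\tfrac kT)\dotsb F^Q(\tfrac 1T)Q(0)\Big\|\le \frac{C}{T(1-\ell')},
\]
together with the geometric decay of the $Q$-part. Thus the entire task reduces to rewriting $W(\tfrac kT)P^m(0)$ in the claimed form, uniformly in $m$ and $k$.

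For that step I would proceed as in Corollary \ref{rankone}. Write $P^m(s)=\phi_m(s)\psi_m^*(s)$ with $\psi_m^*(s)(\phi_m(s))=1$ (the latter normalization being automatic from $(P^m)^2=P^m$). Set $\tilde\phi_m(s):=W(s)\phi_m(0)$. By the intertwining property \eqref{eq_intertwining}, $\tilde\phi_m(s)\in\operatorname{ran}P^m(s)=\mathrm{span}\{\phi_m(s)\}$, so there is a scalar $v_m(s)$ with $\tilde\phi_m(s)=v_m(s)\phi_m(s)$ and $v_m(0)=1$. Plugging this ansatz into the ODE \eqref{intertw} defining $W$ and applying $P^m(s)$ on the left, the identity $P^m(s)P^m{}'(s)P^m(s)=0$ from \eqref{eq_basicrelations} eliminates all cross-terms, so
\[
P^m(s)\,\tilde\phi_m{}'(s)=0.
\]
Expanding $\tilde\phi_m{}'(s)=v_m'(s)\phi_m(s)+v_m(s)\phi_m'(s)$ and applying $\psi_m^*(s)$, we obtain the scalar ODE
\[
v_m'(s)+v_m(s)\,\psi_m^*(s)\!\left(\phi_m'(s)\right)=0,
\]
which integrates to $v_m(s)=\exp\!\big({-\int_0^{s}\psi_m^*(t)(\phi_m'(t))\,\d t}\big)$. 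Hence
\[
W(s)P^m(0)=\tilde\phi_m(s)\,\psi_m^*(0)=\exp\!\Big({-\!\!\int_0^{s}\!\psi_m^*(t)(\phi_m'(t))\,\d t}\Big)\phi_m(s)\,\psi_m^*(0),
\]
and evaluating at $s=k/T$ gives exactly the exponential factor in the statement.

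The assertions about the dependence of $C$ and $T_0$ require no extra work: the constants inherited from Theorem \ref{thm:P-Q-decomp} depend continuously on $C_P=\sup_{s,m}\max(\|P^m(s)\|,\|P^m{}'(s)\|)$, and since $\|P^m(s)\|\le\|\phi_m(s)\|\|\psi_m^*(s)\|$ and $\|P^m{}'(s)\|\le\|\phi_m'(s)\|\|\psi_m^*(s)\|+\|\phi_m(s)\|\|\psi_m^{*\,\prime}(s)\|$, the quantity $C_P$ is bounded by a continuous function of $c_P$ from \eqref{eq_phipsi}; substituting this yields the stated continuous dependence on $c_P$ alone. There is no genuine obstacle here: the only computation with any subtlety is the derivation of the scalar ODE for $v_m$, and the key simplification $P^m P^m{}'P^m=0$ is precisely what makes the exponent a simple line integral rather than a path-ordered exponential.
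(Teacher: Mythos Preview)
Your proposal is correct and follows exactly the route the paper takes: the paper's proof consists of the single sentence ``Combining Theorem~\ref{thm:P-Q-decomp} with the proof of Corollary~\ref{rankone} immediately implies the following,'' and you have spelled out precisely that combination, including the extra bookkeeping showing that $C_P$ is controlled by a continuous function of $c_P$. The only cosmetic discrepancy is that the statement writes $\phi_m(1)$ where your computation (correctly) produces $\phi_m(k/T)$; this is an apparent typo carried over from Corollary~\ref{rankone} and does not affect the validity of your argument.
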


\section{Proofs for Section~\ref{sec_propertiesfullstats}}\label{sec:proofs_for_FS}

\begin{proof}[Proof of Lemma~\ref{lem:balance_eq_traj}]
	By definition
	\begin{align*}
	\varsigma_{T}(\ai,\af,\vec{\imath},\vec{\jmath}) 
   &= \log \frac{\tr\big( U_{T}\dotsm U_1   ( \pi\init_{\ai} \otimes \Pi_{\vec \imath}) (\rho\init \otimes \Xi)( \pi\init_{\ai} \otimes \Pi_{\vec \imath}) U_1^* \dotsm U_T^*  (\pi\fin_{\af}\otimes \Pi_{\vec \jmath})\big)}{\tr\big( U_1^*\dotsm U_T^*(\pi\fin_{\af} \otimes \Pi_{\vec \jmath})(\rho\fin_T \otimes \Xi) (\pi\fin_{\af} \otimes \Pi_{\vec \jmath}) U_T \dotsm U_1 (\pi\init_{\ai} \otimes \Pi_{\vec \imath})\big)} \\
	\intertext{using assumptions {i.}, {ii.} and {iii.},}
	&\quad =  \log \frac{\frac{\tr (\sysstate\init \pi\init_{\ai})}{\dim \pi\init_{\ai}} \prod_{k=1}^T \frac{\tr(\envstate[k] \Pi_{i_k}^{(k)})}{\dim \Pi_{i_k}^{(k)}} \tr\big( U_{T}\dotsm U_1   ( \pi\init_{\ai} \otimes \Pi_{\vec \imath})  U_1^* \dotsm U_T^*  (\pi\fin_{\af}\otimes \Pi_{\vec \jmath})\big)}
	{\frac{\tr (\sysstate\fin \pi\fin_{\af})}{\dim \pi\fin_{\af}} \prod_{k=1}^T \frac{\tr(\envstate[k] \Pi_{j_k}^{(k)})}{\dim \Pi_{j_k}^{(k)}} \tr\big( U_1^*\dotsm U_T^*(\pi\fin_{\af} \otimes \Pi_{\vec \jmath})U_T \dotsm U_1 (\pi\init_{\ai} \otimes \Pi_{\vec \imath})\big)} \\
	&\quad =  \log\frac{\tr (\sysstate\init \pi\init_{\ai}) \dim \pi\fin_{\af}}{\tr (\sysstate\fin \pi\fin_{\af}) \dim \pi\init_{\ai}} + \log \frac{  \tr\big( U_{T}\dotsm U_1   ( \pi\init_{\ai} \otimes \Pi_{\vec \imath})  U_1^* \dotsm U_T^*  (\pi\fin_{\af}\otimes \Pi_{\vec \jmath})\big)}
	{ \tr\big( U_1^*\dotsm U_T^*(\pi\fin_{\af} \otimes \Pi_{\vec \jmath})U_T \dotsm U_1 (\pi\init_{\ai} \otimes \Pi_{\vec \imath})\big)} \\
	&\quad \qquad {} + \log \prod_{k=1}^T \frac{ \tr(\Exp{-\beta_k \henv[k]} \Pi_{i_k}^{(k)})}{Z_k \dim \Pi_{i_k}^{(k)}} - \log \prod_{k=1}^T \frac{ \tr(\Exp{-\beta_k \henv[k]} \Pi_{j_k}^{(k)})}{Z_k \dim \Pi_{j_k}^{(k)}}  \\
	\intertext{
      and because $\frac{\tr(\e^{-\beta_k h_{\env_k}} \Pi_{i_k}^{(k)})}{\dim \Pi_{i_k}^{(k)}}= \exp\big({-\beta_k \frac{\tr(\e^{-\beta_k h_{\env_k}\Pi_{i_k}^{(k)}})}{\dim \Pi_{i_k}^{(k)}}}\big)$ by assumption {iii.} again,
      }
	&\quad =  \log\frac{\tr (\sysstate\init \pi\init_{\ai}) \dim \pi\fin_{\af}}{\tr (\sysstate\fin \pi\fin_{\af}) \dim \pi\init_{\ai}} + \log \frac{  \tr\big( U_{T}\dotsm U_1   ( \pi\init_{\ai} \otimes \Pi_{\vec \imath})  U_1^* \dotsm U_T^*  (\pi\fin_{\af}\otimes \Pi_{\vec \jmath})\big)}
	{ \tr\big( U_1^*\dotsm U_T^*(\pi\fin_{\af} \otimes \Pi_{\vec \jmath})U_T \dotsm U_1 (\pi\init_{\ai} \otimes \Pi_{\vec \imath})\big)}  \\
	&\quad \qquad {} + \sum_{k=1}^T \frac{ \tr({-\beta_k \henv[k]} \Pi_{i_k}^{(k)})}{\dim \Pi_{i_k}^{(k)}} - \sum_{k=1}^T \frac{ \tr({-\beta_k \henv[k]} \Pi_{j_k}^{(k)})}{\dim \Pi_{j_k}^{(k)}}
	\end{align*}
	and the last term vanishes by cyclicity of the trace.
\end{proof}

\begin{proof}[Proof of Proposition~\ref{prop:averaged_balance_equation}]
	We start with the relation \eqref{eq:ssys-avg}. On one hand,
	\begin{align*}
	\E_T\Big( \log \frac{\tr(\pi\init_{\ai} \sysstate\init)}{\dim \pi\init_{\ai}} \Big)
   &= \sum_{\ai} \log \frac{\tr(\pi\init_{\ai} \sysstate\init)}{\dim \pi\init_{\ai}} \sum_{\af, \vec{\imath}, \vec{\jmath}} \bP^F_{T}(\ai,\af, \vec{\imath}, \vec{\jmath}) \\
	&\qquad = \sum_{\ai} \log \frac{\tr(\pi\init_{\ai} \sysstate\init)}{\dim \pi\init_{\ai}} \\
      &\qquad\qquad \sum_{\af, \vec{\imath}, \vec{\jmath}}  \tr\big( U_{T}\dotsm U_1   ( \pi\init_{\ai}\sysstate\init \pi\init_{\ai} \otimes \Pi_{\vec \imath}\,\Xi\, \Pi_{\vec \imath})  U_1^* \dotsm U_T^*  (\pi\fin_{\af}\otimes \Pi_{\vec \jmath})\big) \\
	&\qquad = \sum_{\ai} \log \frac{\tr(\pi\init_{\ai} \sysstate\init)}{\dim \pi\init_{\ai}} \\
      &\qquad\qquad \sum_{ \vec{\imath}}  \tr\big( U_{T}\dotsm U_1   ( \pi\init_{\ai}\sysstate\init \pi\init_{\ai} \otimes \Pi_{\vec \imath}\,\Xi \,\Pi_{\vec \imath})  U_1^* \dotsm U_T^*  (\one \otimes \one)\big).
	\end{align*}
	Using that each $\envstate[k]$ is a function of $Y_k$, we have $\sum_{\vec{\imath}} \Pi_{\vec{\imath}} \,\Xi\, \Pi_{\vec{\imath}} = \Xi$, and
	\[
	\E_T\Big( \log \frac{\tr(\pi\init_{\ai} \sysstate\init)}{\dim \pi\init_{\ai}} \Big) = \sum_{\ai} \log \frac{\tr(\pi\init_{\ai} \sysstate\init)}{\dim \pi\init_{\ai}}  \tr\big( U_{T}\dotsm U_1   ( \pi\init_{\ai}\sysstate\init \pi\init_{\ai} \otimes \Xi )  U_1^* \dotsm U_T^* \big).
	\]
	As $\sysstate\init$ is a function of $\Ai$, and using the cyclicity of the trace, we have
	\begin{align*}
\E_T\Big( \log \frac{\tr(\pi\init_{\ai} \sysstate\init)}{\dim \pi\init_{\ai}} \Big)&= \sum_{\ai} \log \frac{\tr(\pi\init_{\ai} \sysstate\init)}{\dim \pi\init_{\ai}}  \tr\big(\pi\init_{\ai} \otimes \Xi \big) \frac{\tr(\pi\init_{\ai} \sysstate\init)}{\dim \pi\init_{\ai}}, \\
	&= \sum_{\ai}  \tr(\pi\init_{\ai} \sysstate\init) \log \frac{\tr(\pi\init_{\ai} \sysstate\init)}{\dim \pi\init_{\ai}}.
	\end{align*}
	Using the commutation relation $[\pi\init_{\ai},\sysstate\init] = 0$, the right-hand side is precisely $-S(\rho\init)$. The term involving $\sysstate\fin$ is treated similarly.

	We turn to \eqref{eq:senv-avg}:
	\begin{align*}
	&\E_T\Big( \sum_{k=1}^T \beta_k (E_{j_k}^{(k)} - E_{i_k}^{(k)}) \Big) \\
   &\qquad = \sum_{\vec{\imath}, \vec{\jmath}} \sum_{k=1}^T  \beta_k (E_{j_k}^{(k)} - E_{i_k}^{(k)}) \sum_{\ai,\af} \bP^F_{T}(\ai,\af, \vec{\imath}, \vec{\jmath}) \\
	\intertext{using $[\sysstate\init,\Ai] = 0$,}
	&\qquad = \sum_{\vec{\imath}, \vec{\jmath}} \sum_{k=1}^T  \beta_k (E_{j_k}^{(k)} - E_{i_k}^{(k)}) \tr\big( U_{T}\dotsm U_1   ( \sysstate\init  \otimes \Pi_{\vec \imath}\Xi \Pi_{\vec \imath})  U_1^* \dotsm U_T^*  (\one \otimes \Pi_{\vec \jmath})\big) \\
	&\qquad = \sum_{\vec{\jmath}} \sum_{k=1}^T  \beta_k E_{j_k}^{(k)} \tr\big( U_{T}\dotsm U_1   ( \sysstate\init  \otimes \Xi )  U_1^* \dotsm U_T^*  (\one \otimes \Pi_{\vec \jmath})\big) \\
	&\qquad\qquad{} - \sum_{\vec{\imath}} \sum_{k=1}^T  \beta_k E_{i_k}^{(k)} \tr\big( U_{T}\dotsm U_1   ( \sysstate\init  \otimes \Pi_{\vec \imath}\Xi \Pi_{\vec \imath})  U_1^* \dotsm U_T^*  (\one \otimes \one)\big) \\
	&\qquad= \sum_{k=1}^T \sum_{j_k} \beta_k  \tr\big( U_{T}\dotsm U_1   ( \sysstate\init  \otimes \Xi )  U_1^* \dotsm U_T^*  (\one \otimes \Pi_{j_k}^{(k)} E_{j_k}^{(k)})\big)\\
	&\qquad\qquad{} - \sum_{k=1}^T \sum_{i_k} \beta_k \tr\big(  \sysstate\init  \otimes \Xi\, \Pi_{i_{k}}^{(k)} E_{i_k}^{(k)}  \big).
	\end{align*}
	Using that each $\envstate[k]$ is a function of $Y_k$, we have $\sum_{i_k} \Pi_{i_{k}}^{(k)} E_{i_k}^{(k)} = \henv[k]$ and thus
	\[
	\E_T\Big( \sum_{k=1}^T \beta_k (E_{j_k}^{(k)} - E_{i_k}^{(k)}) \Big)= \sum_{k=1}^T \beta_k \tr(\envstate[k]\fin \henv[k]) - \sum_{k=1}^T \beta_k \tr(\envstate[k]\init \henv[k]). \qedhere
	\]
\end{proof}

\begin{proof}[Proof of Proposition~\ref{prop:phiTalpha}]
	\,\hfill\\
	Assume that $[Y(s),\xi(s)]=0$ for all $s$. Then by definition and from expression \eqref{eq:def_P_F^T}
		\begin{align*}
		\ee\big(\e^{\alpha \rvY}\big)
		&=
		\sum_{\vec\imath,\vec\jmath}\sum_{\ai,\af} \exp \Big(\alpha \sum_{k=1}^Y (y_{j_k}^{(k)}- y_{i_k}^{(k)})\Big)\, \pp^F_{T}(\ai,\af,\vec\imath,\vec\jmath)\\
		&= \sum_{\vec\imath,\vec\jmath} \tr\Big( U_{T}\dotsm U_1   \Big(\sum_{\ai} \pi\init_{\ai} \rho\init\pi\init_{\ai}  \otimes \prod_{k=1}^T \e^{-\alpha y_{i_k}^{(k)}}\Pi_{i_k}\,\Xi\Big)
         \\ & \qquad\qquad\qquad
         U_1^* \dotsm U_T^*  \Big(\sum_b\pi\fin_{\af}\otimes \prod_{k=1}^T \e^{\alpha y_{j_k}^{(k)}}\Pi_{j_k}\Big)\Big)\\
		&= \tr\Big( U_{T}\dotsm U_1   \Big(\sum_{\ai} \pi\init_{\ai} \rho\init\pi\init_{\ai}  \otimes \prod_{k=1}^T\e^{-\alpha Y_k}\,\Xi\Big)
         \\ & \qquad\qquad\qquad
         U_1^* \dotsm U_T^*  \,\Big(\id\otimes  \prod_{k=1}^T\e^{\alpha Y_k}\Big)\Big)\\
		&= \tr \big(\L_Y\ealpha(\tfrac TT)\circ\ldots\circ \L_Y\ealpha(\tfrac1T)(\sum_{\ai} \pi\init_{\ai} \rho\init\pi\init_{\ai})\big).
	\end{align*}
	Assume in addition that  $[\Ai,\sysstate\init]=0$. Then similarly, for $\alpha_1,\alpha_2$ in $\rr$,
	\begin{align*}
		\ee\big(\e^{\alpha_1 \rvY+\alpha_2 \rvW}\big)
		&=
		\sum_{\vec\imath,\vec\jmath}\sum_{\ai,\af} \exp \Big(\alpha_1 \sum_{k=1}^Y (y_{j_k}^{(k)}- y_{i_k}^{(k)})\Big)\,\e^{\alpha_2(\ai-\af)}\, \pp^F_{T}(\ai,\af,\vec\imath,\vec\jmath)\\
		&\quad = \sum_{\vec\imath,\vec\jmath} \tr\Big( U_{T}\dotsm U_1   \big(\sum_{\ai} \e^{+\alpha_2 \ai}\pi\init_{\ai} \rho\init  \otimes \prod_{k=1}^T \e^{-\alpha_1 y_{i_k}^{(k)}}\Pi_{i_k}\,\Xi\big)
         \\ & \quad\qquad\qquad\qquad
         U_1^* \dotsm U_T^*  (\sum_b \e^{-\alpha_2 \af}\pi\fin_{\af}\otimes \prod_{k=1}^T \e^{\alpha_1 y_{j_k}^{(k)}}\Pi_{j_k})\Big)\\
		&\quad =  \tr\Big( U_{T}\dotsm U_1   \big(\e^{+\alpha_2 \Ai} \rho\init  \otimes \prod_{k=1}^T\e^{-\alpha_1 Y_k}\,\Xi\big) U_1^* \dotsm U_T^*  (\e^{-\alpha_2 \Af}\otimes \prod_{k=1}^T \e^{\alpha_1 Y_k})\Big)\\
		&\quad = \tr \big(\e^{-\alpha_2 \Af}\L_Y^{(\alpha_1)}(\tfrac TT)\circ\ldots\circ \L_Y^{(\alpha_1)}(\tfrac1T)( \e^{+\alpha_2 \Ai} \rho\init )\big).
	\end{align*}
\end{proof}
\begin{proof}[Proof of Remark~\ref{rem:reg}]
	The dependence in $s$ will be treated last, since it fixes the overall regularity, as we will see. We consider the applications between the different Banach spaces involved  in the definition of $\L_Y^{(\alpha)}=\tr_\env(e^{\alpha Y}U^{(\tau)}(\cdot\otimes \xi)e^{-\alpha Y}U^{(-\tau)})$. We will denote the trace norm by $\|\cdot \|_1$ when the underlying Hilbert space is determined by the context, and we use the shorthand $\H_\T:=\H_\sys\otimes \H_\env$ for the total Hilbert space.

	\begin{enumerate}
		\item The map $P$ such that $\xi \mapsto P_\xi=\cdot \otimes \xi$ mapping $\I_1(\H_\env)$ to $\B(\I_1(\H_\sys), \I_1(\H_T))$ is a linear isometry, hence a $C^\infty$ map. Indeed, linearity is immediate.  For all $\eta\in \I_1(\H_\sys)$, $\|P_\xi(\eta)\|_{\I_1(\H_T)}=\|\eta \otimes \xi\|_1=\|\eta\|_1\|\|\xi\|_1$, which shows $\|P_\xi\|_{\B(\I_1(\H_\sys), \I_1(\H_\sys\otimes \H_\env))}=\|\xi\|_1$.
		\item For any Hilbert spaces $\H$, the maps $(\alpha,A)\mapsto \alpha A : \C\times \B(\H)\mapsto \B(\H)$ and $(A,B)\mapsto AB: \B(\H)\times\B(\H)\rightarrow \B(\H)$ are bilinear, thus $C^\infty$, and the map $A\mapsto e^{A}: \B(\H)\mapsto \B(\H)$ is $C^\infty$ as well.
		\item The map from $\B(\H_T)\times \B\Big(\I_1(\H_\env), \B\big(\I_1(\H_\sys), \B(\H_T)\big)\Big)\times B(\H_T)\rightarrow \B\Big(\I_1(\H_\env), \B\big(\I_1(\H_\sys), \B(\H_T)\big)\Big)$
      such that $(A,P,B)\mapsto APB$ is well defined and trilinear, which makes it $C^\infty$.
      Indeed, for any $(\eta,\xi)\in \I_1(\H_\sys)\times \I_1(\H_\env)$, we have
		$APB(\xi)=AP_\xi B: \eta\mapsto AP_\xi(\eta)B$. The trace norm of the latter in $\H_\T$ is
		\begin{align*}
		\|AP_\xi(\eta)B\|_{1} &\leq \|A\|_{\B(\H_T)} \|P_\xi(\eta)\|_{\I_1(\H_T)} \|B\|_{\B(\H_T)} \\
          &= \|A\|_{\B(\H_T)} \|P_\xi(\cdot)\|_{B(\I_1(\H_\sys),\I_1(\H_T))} \|B\|_{\B(\H_T)}\|\eta\|_1.
		\end{align*}
		This yields
      \begin{align*}
         &\|APB\|_{\B(\I_1(\H_\env), \B(\I_1(\H_\sys), \I_1(\H_T)))} \\
         &\qquad\qquad \leq \|A\|_{\B(\H_T)} \|P_\cdot(\cdot)\|_{\B(\I_1(\H_\env),\B(\I_1(\H_\sys)),\I_1(\H_T))}\|B\|_{\B(\H_T)}
      \end{align*}
      and boundedness of the trilinear map.
		\item We saw that the map $\tr_\env : \I_1(\H_\sys\otimes \H_\env)\rightarrow \I_1(\H_\sys)$ is a linear contraction, hence it is $C^\infty$.
	\end{enumerate}

	Consequently, we get that $(\alpha, Y, U^{(\tau)}, \xi)\mapsto \tr_\env(e^{\alpha Y}U^{(\tau)}(\cdot\otimes \xi)e^{-\alpha Y}U^{(-\tau)})$ is a
	$C^{\infty}$  map from $\C\times \B(\H_\T)\times \B(\H_\T)\times \I_1(\H_\env)$ to $\B(\I(\H_\sys))$. The hypotheses made on the $s$-dependence of~$Y$,~$U^{(\tau)}$ and~$\xi$ yield the result.
\end{proof}

\bibliographystyle{amsalpha}
\bibliography{researchbib}

\end{document}